\newcommand{\cv}[1]{} %
\newcommand{\av}[1]{#1} %
	\newcommand{\isarxiv}{1}
\date{}
\newcommand{\inci}[2]{e_{#1,#2}}
\newcommand{\incip}[2]{e'_{#1,#2}}
\newcommand{\vari}[2]{a_{#1,#2}}
\newcommand{\cl}[2]{c_{#1,#2}}
\newcommand{\true}[1]{\alpha_{#1}}
\newcommand{\matr}[2]{A(#1,#2)}
\newcommand{\pos}[1]{\text{\normalfont pos}(#1)}
\newcommand{\nega}[1]{\text{\normalfont neg}(#1)}
\newcommand{\fixvar}[0]{\mathcal{V}}
\newcommand{\fixlit}[0]{\mathcal{L}}
\newcommand{\surplus}[0]{\mathcal{S}}
\newcommand{\plus}[0]{\color{black}{$+$}\color{black}}
\newcommand{\minus}[0]{\color{black}{$-$}\color{black}}
\newcommand{\elim}[2]{#1^{#2}}
\newcommand{\fwf}[1]{10.55776/#1}
\newtheorem{theorem}{Theorem}
\newtheorem{lemma}[theorem]{Lemma}
\newtheorem{corollary}[theorem]{Corollary}
\newcommand{\drule}[4][]{$F_{#2}=
  \ifthenelse{\equal{#3}{0}}{\{\nil\}}{F_{#3}}
  \oplus_{#1}
  \ifthenelse{\equal{#4}{0}}{\{\nil\}}{F_{#4}}$}
\newcommand{\Mdrule}[4][]{$M^1_{#2} = F_{#3} \oplus_{#1} F_{#4}$}
\newcommand{\Drule}[4][]{F_{#2}&=&
  \ifthenelse{\equal{#3}{0}}{\{\nil\}}{F_{#3}}
  & \oplus_{#1}&
  \ifthenelse{\equal{#4}{0}}{\{\nil\}}{F_{#4}}}
\title{
	Small unsatisfiable \texorpdfstring{$k$}{k}-CNFs with bounded literal occurrence\av{\footnote{This is the full version of a
          paper to appear in the proceedings of SAT 2024.}}
}
  \author{%
    Tianwei Zhang, Tom\'{a}\v{s} Peitl, and Stefan Szeider\\[4pt]
\small Algorithms and Complexity Group\\[-3pt]
\small TU Wien, Vienna, Austria\\[-3pt]
\small \texttt{\{zhangtw,peitl,sz\}@ac.tuwien.ac.at}}
\author{Tianwei Zhang}{Algorithms and Complexity Group, TU Wien, Austria
\url{https://www.ac.tuwien.ac.at/people/zhangtw/}}{zhangtw@ac.tuwien.ac.at}{https://orcid.org/0009-0000-3745-5234}{}
\author{Tom\'{a}\v{s} Peitl}{Algorithms and Complexity Group, TU Wien, Austria
  \url{https://www.ac.tuwien.ac.at/people/peitl}}{peitl@ac.tuwien.ac.at}{https://orcid.org/0000-0001-7799-1568}{}
\author{Stefan Szeider}{Algorithms and Complexity Group, TU Wien, Austria \url{https://www.ac.tuwien.ac.at/people/szeider/}}{sz@ac.tuwien.ac.at}{https://orcid.org/0000-0001-8994-1656}{}
\authorrunning{T. Zhang, T. Peitl, and S. Szeider} %
\keywords{$k$-CNF, 
$(k,s)$-SAT,
minimally unsatisfiable formulas,
symmetry breaking} %
\def\hy{\hbox{-}\nobreak\hskip0pt}
\newcommand{\SB}{\{\,}
\newcommand{\SM}{\;{:}\;}
\newcommand{\SE}{\,\}}
\newcommand{\CNF}[1]{#1\text{\normalfont-CNF}}
\newcommand{\MU}{\text{\normalfont MU}}
\newcommand{\nil}{\square}
\newcommand{\Card}[1]{|#1|}
\newcommand{\mtext}[1]{\text{\normalfont\sffamily #1}}
\newcommand{\var}{\mtext{var}}
\newcommand{\lit}{\mtext{lit}}
\newcommand{\ol}[1]{\overline{#1}}
\newcommand{\occ}[2]{#1[#2]}
\renewcommand{\subparagraph}[1]{\medskip\noindent\textbf{#1.}}
\NewDocumentEnvironment{formula}{ +b }{%
    \ensuremath{%
        \left\lgroup
        \begin{matrix}
            #1
        \end{matrix}
        \right\rgroup
    }
}{}
\definecolor{ijcaired}{HTML}{D22817}
\definecolor{azure}{rgb}{0.0, 0.5, 1.0}
\definecolor{applegreen}{rgb}{0.55, 0.71, 0.0}
\definecolor{auburn}{rgb}{0.43, 0.21, 0.1}
\definecolor{bittersweet}{rgb}{1.0, 0.44, 0.37}
\definecolor{byzantine}{rgb}{0.74, 0.2, 0.64}
\definecolor{darkmagenta}{rgb}{0.55, 0.0, 0.55}
\definecolor{darkelectricblue}{rgb}{0.33, 0.41, 0.47}
\newcommand{\pbound}{p}
\newcommand{\nbound}{q}
\newcommand{\GGG}{\mathcal{G}}
\newcommand{\n}[1]{{\overline{#1}}}
\newcommand{\ksnm}[4]{\Phi_{#1,#2}^{#3,#4}}
\newcommand{\kpqnm}[5]{\Phi_{#1,#2,#3}^{#4,#5}}
\newcommand{\satisfies}[2]{\Sigma^{#1,#2}}
\newcommand{\vdeg}[1]{{\deg^\var(#1)}}
\newcommand{\ldeg}[1]{{\deg(#1)}}
\newcommand{\vardeg}[2]{{\deg^\var_{#2}(#1)}}
\newcommand{\litdeg}[2]{{\deg_{#2}(#1)}}
\newcommand{\mks}[2]{{\mu(#1,#2)}}
\newcommand{\mkssc}[3]{{\mu(#1,#2,#3)}}
\newcommand{\mkst}[3]{{\mu(#1,#2,#3)}}
\newcommand{\mkstsc}[4]{{\mu(#1,#2,#3,#4)}}
\newcommand{\sv}[1]{}
\begin{document}
\maketitle
\thispagestyle{empty}
\begin{abstract}
We obtain the smallest unsatisfiable formulas in subclasses of $k$-CNF (exactly $k$ distinct literals per clause) with bounded variable or literal occurrences. 
Smaller unsatisfiable formulas of this type translate into stronger inapproximability results for MaxSAT in the considered formula class.
Our results cover subclasses of $3$-CNF and $4$-CNF; in all subclasses of $3$-CNF we considered we were able to determine the smallest size of an unsatisfiable formula; in the case of $4$-CNF with at most~5 occurrences per variable we decreased the size of the smallest known unsatisfiable formula.
Our methods combine theoretical arguments and symmetry-breaking exhaustive search based on SAT Modulo Symmetries~(SMS), a~recent framework for isomorph-free SAT-based graph generation. To this end, and as a standalone result of independent interest, we show how to encode formulas as graphs efficiently for SMS. 

\end{abstract}

\section{Introduction}
\label{sec:intro}

A $(k,s)$\hy formula is a propositional CNF formula in which each
clause has exactly $k$ distinct literals and each variable occurs
(positively or negatively) in at most $s$ clauses.  Since
Tovey~\cite{Tovey84} initiated the study of $(k,s)$-CNF formulas in
1984, they have been the subject of intensive investigation
\cite{BermanKarpinskiScott03,%
BermanKarpinskiScott03c,%
Dubois90,%
Gebauer12,%
GebauerSzaboTardos16,%
HoorySzeider05,%
HoorySzeider06,%
Jurenka11,%
KratochST93,%
SavickySgall00}.
Using Hall's Marriage Theorem, Tovey showed that all $(3,3)$-CNF formulas are satisfiable, but allowing a
fourth occurrence per variable yields a class of formulas for which the
satisfiability problem is NP-complete. Kratochv\'{\i}l, et al.\
\cite{KratochST93} generalized this result and showed that for each
$k\geq 3$, there exists a threshold $s=f(k)$ such that all
$(k,f(k))$-formulas are satisfiable and checking the satisfiability of
$(k,f(k)+1)$\hy formulas (the \emph{$(k,s)$-SAT problem}) is
NP-complete. Therefore, determining whether $(k,s)$-SAT is NP-hard boils down to identifying an unsatisfiable $(k,s)$\hy formula.
While tight asymptotic bounds for the threshold have been obtained~\cite{GebauerSzaboTardos16}, exact values are only known for
$k \leq 4$ \cite{HoorySzeider05}: $f(3)=3$, $f(4)=4$; $f(5)\in [5,7]$.
No decision procedure is known for determining the threshold $f(k)$,
since no upper bound on the size of a smallest unsatisfiable
$(k,s)$\hy formula is known for $k>4$.
 
It is an intriguing question of extremal combinatorics to determine
the size of the smallest unsatisfiable $(k,s)$\hy formulas (with
$s>f(k)$), and this paper sets out to address this question for various values of the parameters. This requires proving lower and upper bounds---an upper bound typically consists in exhibiting a formula with suitable parameters, while a lower bound requires a proof that no such formulas of a particular size exist.
For $k=3$, the unsatisfiable $(3,4)$-formula
constructed by Tovey~\cite{Tovey84} has 40 clauses,
this was later improved to 19 and then 16 by Berman, et al.\
\cite{BermanKarpinskiScott03,BermanKarpinskiScott03c}.  

A simple counting argument shows that such a formula must contain at least $8$ clauses, hence there is a significant gap between the
known lower and upper bounds.
For $k=4$, the gap is even larger.  St\v{r}\'{\i}brn\'{a}
\cite{Stribrna94} constructed an unsatisfiable (4,5)-CNF formula with 449 clauses, which Knuth \cite[p.~588]{Knuth23} improved to 257. The same counting argument shows that such a formula must contain at least $16$ clauses.

All constructions above have the following in common. A satisfiable $(k,s)$\hy
formula with a backbone variable $x$ that must be false in all satisfying truth assignments is first constructed. Such a formula is called a \emph{$(k,s)$\hy enforcer}.  Then one combines $k$ copies of this formula together with a $k$-clause with the $k$ backbone variables to obtain an unsatisfiable $(k,s)$-CNF formula. Aside from providing an upper bound for the size of a smallest unsatisfiable $(k,s)$-formula, the size of unsatisfiable $(k,s)$\hy enforcers has a direct effect on inapproximability results for certain NP-hard Max-SAT
problems~\cite{BermanKarpinskiScott03}. We also address the problem of the smallest size of a $(3,s)$ or $(3,p,q)$-enforcer in this paper.

\subsection*{Contribution} In this paper, we develop a general
approach to computing small unsatisfiable
$(k,s)$\hy formulas and $(k,s)$\hy enforcers. We also consider the
more fine-grained setting of $(k,\pbound,\nbound)$-formulas, where
$\pbound$ and~$\nbound$ bound the number of positive and negative
occurrences per variable, respectively.  We observe a similar
threshold phenomenon in the complexity of $(k,\pbound,\nbound)$-SAT as
in the case of $(k,s)$-SAT (\Cref{lem:unsat-np}). 
 
Our approach rests on utilizing the SAT Modulo Symmetries (SMS)
framework \cite{KirchwegerSzeider24,KirchwegerPeitlSzeider23} for
isomorph-free generation of unsatisfiable $(k,s)$ and
$(k,p,q)$\hy formulas for a fixed number $n$ of variables and number
$m$ of clauses with parameters $k,s,p,q$.
 
The basic setting without any techniques for speed-up and divide-and-conquer scales up to about $m=13$.
We then use further theoretical arguments together with techniques for speed-up to determine the size of smallest formulas in various setting. 

\subparagraph{Smallest unsatisfiable formulas} We determined
the size of smallest unsatisfiable $(3,s)$ and $(3,p,q)$-formulas for all possible $s$, $p$ and $q$ the values are given in \Cref{tab:3-s} on the left for $(3,s)$-formulas and on the right for $(3,p,q)$-formulas.
\begin{table}[tbh]
  \caption{Size of smallest unsatisfiable $(3,s)$\hy formulas (left)
    and size of smallest unsatisfiable $(3,p,q)$\hy formulas (right).
    $\infty$ indicates that for these parameters no unsatisfiable
    formula exists. }
  \av{\medskip}
    \label{tab:3-s}
    \centering
      \setlength\tabcolsep{2.2mm}
      \av{\qquad}\begin{tabular}{@{}cccc@{}}
        \toprule
        $s\leq 3$ & $s=4$ & $s=5$ & $s\geq 6$ \\
        \midrule
        $\infty$ & 16 & 11 & 8\\ 
        \bottomrule
        \phantom{1,2,3} \\
        \phantom{1,2,3} \\
      \end{tabular}\hfill%
  \begin{tabular}{@{}cccccc@{}}
    \toprule
    &$q=1$ & $q=2$ & $q=3$ & $q=4$ & $q\geq 5$ \\
    \midrule
    $p=1$& $\infty$ & $\infty$ & 22 & 19 & 16\\
    $p=2$ &-& 20 & 11 & 10 & 10\\
    $p\geq 3$ & - & - & 8& 8 & 8\\
    \bottomrule
  \end{tabular}\av{\qquad}
    \end{table}

In particular, we identified a smallest unsatisfiable $(3,1,3)$\hy formula
with 22 clauses, which implies that $(3,1,3)$-SAT is NP-complete.
Hence, we have now a direct and streamlined proof for the 
dichotomy of $(3,p,q)$-SAT (\Cref{the:dich}).

\subparagraph{Smallest enforcers}
\Cref{tab:3-enf} summarizes our results on the size of smallest
$(3,s)$\hy enforcers and $(3,p,q)$\hy enforcers.
\begin{table}[tbh]
  \caption{Size of smallest $(3,s)$\hy enforcers (left)
    and smallest  $(3,p,q)$\hy enforcers (right).}
   \label{tab:3-enf}
\av{\medskip}
   \centering
  \setlength\tabcolsep{3mm} 
\av{\qquad}\begin{tabular}{@{}ccc@{}}
    \toprule
    $s\leq 3$ & $s=4$ & $s\geq 5$ \\
    \midrule
    $\infty$ & 5 & 4\\
    \bottomrule
    \phantom{1,2,3} & & \\
    \phantom{1,2,3} & & \\
  \end{tabular}\hfill%
  \begin{tabular}{@{}cccccc@{}}
    \toprule
    &$q=1$ & $q=2$ & $q=3$ & $q=4$ & $q\geq5$ \\
    \midrule
    $p=1$  & $\infty$ & $\infty$ & 7 & 6 & 5\\
    $p=2$ & - & 10 & 5 & 5 & 5\\
    $p\geq 3$ & - & - & 5 & 5 & 5 \\
    \bottomrule
  \end{tabular}\av{\qquad}
 \end{table}

\subparagraph{A smaller unsatisfiable $(4,5)$-formula}
Recall that $f(4)=5$. The smallest known unsatisfiable
$(4,5)$\hy formula is due to Knuth and has 257 clauses. We improve
this by exhibiting an unsatisfiable $(4,5)$\hy formula with 235
clauses. We obtain this by first computing an auxiliary formula with SMS and then constructing from it an unsatisfiable $(4,5)$\hy formula by disjunctive splitting.

\subparagraph{Structure of the paper}
After this introduction, and preliminaries, the paper is organized into two main parts.
In Section~\ref{sec:compute}, we lay out the principle encoding that provides the basis for all our results. 
In Section~\ref{sec:advanced}, we dive into the more complicated, technical aspects that are
necessary to rule out the existence of larger formulas and obtain
better lower bounds. \cv{Lemmas and Theorems marked with $\star$ have proofs in the full version~\cite{ZhangPeitlSzeider24arxiv}.}

\av{
\subparagraph{Source Code}
The source code and scripts for reproducing the results are available
at Zenodo~\cite{ZhangPeitlSzeiderZenodo24}.
}

\section{Preliminaries}
\label{sec:prelim}
For positive integers~$k<\ell$, we write $[k] = \{1,2,\dots,k\}$, $[-k]
= \{-k, -(k-1), \dots, -1\}$, and $[k,\ell]=\{k,\dots,\ell\}$. 
We assume familiarity with fundamental notions of propositional logic~\cite{KleineBuningLettman99}. 
In this paper we will talk about \emph{(minimally) unsatisfiable} propositional formulas represented as graphs, specified by properties expressed as \emph{quantified Boolean formulas}, and about \emph{isomorphisms (symmetries)} of these formulas.
We review the relevant basics below.

\subparagraph{CNF formulas} 
\label{par:prelims-cnf}
A \emph{literal} is a (propositional) variable $x$ or a negated
variable $\ol{x}$, whereby $\n{\n{x}} = x$. We write $\var(x) := \var(\n{x}) := x$ for the
variable belonging to a literal.
A set $S$ of literals is \emph{tautological} if
$S\cap \ol{S}\neq \emptyset$, where
$\ol{S}=\SB \ol{x} \SM x\in S\SE$.  A \emph{clause} is a finite
non-tautological set of literals.  A $k$-clause is a clause that
contains exactly $k$ literals. The $0$-clause is denoted by $\nil$.
A \emph{(CNF) formula} is a finite set of clauses.
For $k\geq 1$, a \emph{$k$-CNF formula} is a formula in
which all clauses are $k$\hy clauses (please note that some authors
allow a $k$-CNF formula to contain clauses with fewer than $k$
literals, but it is significant in our context that the number is
exactly $k$) and a \emph{$(\leq k)$-CNF formula} is a formula in
which all clauses contain at most $k$ literals.
A variable~$x$ \emph{occurs positively} in a clause~$C$
if $x \in C$, it \emph{occurs negatively} in $C$ if $\ol{x}\in C$, and it \emph{occurs} in $C$ if it occurs in $C$ positively or negatively.
For a literal $x$, $\occ{F}{x}$ denotes the set of clauses in $F$ in which $\var(x)$ occurs.
We will often write a $\leq k$-CNF with $m$ clauses as a $k \times m$ matrix whose columns are the clauses, and entries are literal occurrences. When a clause has $r < k$ literals, we write $\times$ in the last $k - r$ rows in the corresponding column of the matrix. 

\subparagraph{Counting occurrences} For a clause $C$, we write
$\var(C)$ for the set of variables that occur in $C$, and for a CNF
formula $F$ we write $\var(F)=\bigcup_{C \in F} \var(C)$. The
\emph{degree} of a variable in a formula~$F$ is defined as
$\vardeg{x}{F} := |\occ{F}{x}|$.  For literals, we put
$\vardeg{\n{x}}{F} := \vardeg{x}{F}$.  The \emph{degree} of a literal
$x$, denoted by $\litdeg{x}{F}$, is the number of clauses in which the
literal occurs.  We may omit the subscript $F$ when it is clear from
the context.

For $k,s\geq 1$, a \emph{$(k,s)$-formula}
is a $k$-formula in which each variable occurs in at most $s$
clauses, and a \emph{$(\leq k,s)$-formula}
is a $(\leq k)$-formula in which each variable occurs in  at most $s$
clauses. For $k,p,q\geq 1$, a \emph{$(k,p,q)$-formula}
is a $k$-formula in which each variable occurs in at most $p$
clauses positively and in at most $q$ clauses negatively, and a \emph{$(\leq k,p,q)$-formula} is a $(\leq k)$-formula with the same constraint. 
Without loss of generality, we will assume $p \leq q$ for $(k, p,
q)$-formulas as we can always  swap positive and negative literals.

We define $\mks{k}{s}$ to be the number of clauses of a
smallest unsatisfiable $(k, s)$-formula, and $\mkst{k}{p}{q}$ denotes
that of the smallest unsatisfiable $(k,p,q)$-formula.

\subparagraph{Bounded literal occurrence SAT} A \emph{truth
  assignment} for a set $X$ of variables is a mapping
$\tau:X\rightarrow \{0,1\}$. In order to define $\tau$ on literals, we
set $\tau(\ol{x})=1-\tau(x)$.  A truth assignment~$\tau$ {\it
  satisfies} a clause $C$ if $C$ contains at least one literal~$x$
with $\tau(x)=1$, and $\tau$ \emph{satisfies} a formula~$F$ if it
satisfies every clause of~$F$. In the latter case, we call $F$
\emph{satisfiable}.  The {\sc Satisfiability} problem (SAT) is to
decide whether a given formula is satisfiable.  $(k,s)$-SAT is SAT
restricted to $(k,s)$-formulas, and $(k,p,q)$-SAT is SAT restricted to
$(k,p,q)$-formulas.

\subparagraph{Enforcers}
A $(k,s)$-\emph{enforcer} is a satisfiable $(k,s)$-formula $F$ with a variable $x$ with $\vardeg{x}{F} < s$ that is set to the same value in every satisfying assignment.
A~$(k,p,q)$-\emph{enforcer} is a satisfiable $(k,p,q)$ formula $F$ with a variable $x$ which is either set to true in every satisfying assignment, and then $\litdeg{\n{x}}{F} < q$, or it is set to false in every satisfying assignment, and then $\litdeg{x}{F} < p$.
We say the literal of $x$ that is set to true in every satisfying assignment is \emph{enforced}.
An enforcer can be completed into an unsatisfiable $(\leq k,s)$ or $(\leq k,p,q)$-formula by adding the unit clause containing the negation of the enforced literal.

\subparagraph{Minimal unsatisfiability}
A CNF formula is \emph{minimally unsatisfiable} if it is unsatisfiable
but dropping any of its clauses results in a satisfiable formula.
Let $\MU$ denote the class of all minimally unsatisfiable formulas.
The \emph{deficiency} of a CNF formula $F$ is
$\delta(F)=\Card{C}-\Card{\var(F)}$.
It is known that $\delta(F)>0$ for any $F\in \MU$~\cite{AharoniLinial86}; therefore it is natural to parameterize $\MU$
by deficiency and to consider the classes $\MU(d):=\SB F\in \MU \SM
\delta(F)=d \SE$ for $d\geq 1$.

\subparagraph{Variable elimination}
It is well-known that one can eliminate variables of a CNF formula by a process often called \emph{DP-resolution}, after an algorithm of Davis and Putnam~\cite{DavisPutnam60}, as follows.
For two clauses $C, D$ with $x \in C, \n{x} \in D$, the \emph{resolution} rule yields the \emph{resolvent} clause $C \cup D \setminus \{x, \n{x}\}$.
Let $F$ be a CNF and $x \in \var(F)$.
We define $\elim{F}{x} := F \setminus \occ{F}{x} \cup \{C \cup D \setminus \{x,\bar{x}\} \,|\, C, D \in F ; C \cap \n{D} = \{x\}\}$.
In other words, the result of eliminating $x$ from $F$ is the formula that contains all clauses where $x$ does not occur together with all possible non-tautological resolvents on $x$.
It is easy to see that $\exists x \; F$ and $\elim{F}{x}$ are logically equivalent, and in particular, if $F$ is unsatisfiable, so is $\elim{F}{x}$.

\subparagraph{Blocked clauses}
\label{par:blocked-clauses}
A clause $C$ in a CNF $F$ is \emph{blocked in $F$ on the literal} $x \in C$ if for every $C' \in F$ with $\n{x} \in C'$, there exists a variable $y \neq \var(x)$ with $y \in C, \n{y} \in C'$ or $y \in C', \n{y} \in C$.
A clause is \emph{blocked in $F$} if it is blocked on at least one of its literals.
Blocked clauses are a fundamental SAT preprocessing technique: when $C$ is blocked in $F$ and $F$ is satisfiable, then $F \cup \{C\}$ is also satisfiable~\cite{JarvisaloBiereHeule10,Kullmann99r}; in other words, blocked clauses may be added or removed without impacting satisfiability (notice that this also follows from soundness of variable elimination).
We will use the simple corollary that a minimally unsatisfiable formula cannot contain a blocked clause.

\subparagraph{QBF}
\label{subsec:qbf}
Quantified Boolean formulas generalize propositional logic with quantification.
In this paper, we will need only the fragment of \emph{closed prenex $2$-QBFs} with one quantifier alternation.
A $2$-QBF has the form $\exists X \forall Y \Phi(X, Y)$, where $X$ and $Y$ are sets of propositional variables, and $\Phi$ is a propositional formula.
A $2$-QBF is \emph{true} if there exists an assignment $\tau : X \to \{0, 1\}$ such that $\Phi(\tau(X), Y)$ evaluates to true for every assignment to $Y$, where $\tau(X)$ denotes the substitution of $\tau$ values for $X$ into $\Phi$.
The formula is \emph{false} if no such assignment $\tau$ exists.

\subparagraph{Graphs}
\label{subsec:graphs}
We only use undirected and simple graphs (i.e.,
without parallel edges or self-loops). A \emph{graph} $G$ consists of
set $V(G)$ of vertices and a set $E(G)$ of edges; we denote the edge
between vertices $u,v\in V(G)$ by $uv$ or equivalently $vu$.

We write $\GGG_n$ to denote the class of all graphs with $V(G) = [n]$.  The \emph{adjacency matrix} $A$ of a
graph $G \in \GGG_n$ is the $n\times n$ $\{0,1\}$-matrix where the element at row $v$ and column $u$, denoted by $\matr{v}{u}$, is $1$ iff $vu \in E(G)$.

\subparagraph{Isomorphisms}
\label{subsec:isomorphisms}
For a permutation $\pi : [n] \rightarrow [n]$, $\pi(G)$ denotes the graph obtained from $G\in \GGG_n$ by the
permutation $\pi$, where $V(\pi(G)) = V(G) = [n]$ and
$E(\pi(G))=\SB \pi(u)\pi(v)\SM uv \in E(G) \SE$.
Two graphs $G_1,G_2\in \GGG_n$ are \emph{isomorphic} if there is a
permutation $\pi : [n] \rightarrow [n]$ such that $\pi(G_1)=G_2$; in this case $G_2$ is an \emph{isomorphic copy} of $G_1$.
A \emph{partially defined graph}~\cite{KirchwegerSzeider21} is a graph $G$ where
$E(G)$ is split into two disjoint sets~$D(G)$ and~$U(G)$.  $D(G)$
contains the \emph{defined} edges, $U(G)$ contains the \emph{undefined} edges.  A
(\emph{fully defined}) graph is a partially defined graph $G$ with
$U(G)=\emptyset$.
A partially defined graph $G$ can be \emph{extended} to a
graph $H$ if  $D(G) \subseteq E(H) \subseteq D(G) \cup
U(G)$.

\subparagraph{CNF Formulas as graphs}
\label{subsec:2-graphs}
For sets $S, S', T$, we write  $T = S \uplus S'$ if $T = S \cup S'$ and $S \cap S' = \emptyset$.
A \emph{2-graph} is an undirected graph $G=(V,E)$ together with a
partition of its vertex set into two disjoint \emph{blocks} $V_1 \uplus V_2 = V$.  Two
2-graphs $G=(V_1 \uplus V_2,E)$ and $G'=(V_1' \uplus V_2',E')$ are
\emph{isomorphic} if there exists a bijection
$\phi: V_1 \uplus V_2 \rightarrow V_1' \uplus V_2'$ such that
$v\in V_i$ if and only if $\phi(v)\in V_i'$, $i=1,2$, and
$\{u,v\} \in E$ if and only if $\{\phi(u),\phi(v)\} \in E'$.
The \emph{clause-literal graph} of a CNF formula $F$ is the 2-graph $G(F)=(V_1 \uplus V_2,E)$ with $V_1=\lit(F)$, $V_2= F$, and 
$E=\SB \{x,\ol{x}\} \SM x\in \var(F)\SE \cup \SB \{C,\ell\} \SM C\in
F, \ell \in C \SE$.
We refer to the edges $\SB x, \n{x} \SE$ as \emph{variable} edges.
It is easy to verify that any two CNF formulas are isomorphic if and only
if their clause-literal graphs are isomorphic (as 2-graphs). Note that we can safely assume that the first $|V_1|$ rows of the adjacency matrix of a clause-literal graph correspond to the vertices in $V_1$, and we will thoroughly do so throughout this paper.

\subparagraph{SAT Modulo Symmetries (SMS)}
\label{subsec:sms}
SMS \cite{KirchwegerSzeider24} is a framework that augments a CDCL (conflict-driven clause learning) SAT solver~\cite{FichteHLS23,MarquessilvaLynceMalik09} with a custom propagator that can reason about symmetries, allowing to search modulo isomorphism for graphs in $\GGG_n$ satisfying a property specified in (quantified) propositional logic. 

During search, the SMS propagator can trigger additional conflicts on
top of ordinary CDCL and consequently learn \emph{symmetry-breaking clauses}, which exclude isomorphic copies of graphs. More precisely,
only those copies are kept which are lexicographically minimal (\emph{canonical}) when
considering the rows of the adjacency matrix concatenated into a single vector. A key
component is a minimality check, which decides whether a partially
defined graph can be extended to a minimal graph; if it cannot, a
corresponding clause is learned.  For a full description of SMS, we
refer to the original work where the framework was introduced~\cite{KirchwegerSzeider24}.
In SMS, it is possible to specify a partition of the vertex set
and restrict the symmetry breaking to those permutations
that preserve the partition.
In Section~\ref{subsec:formulas-sms}, we explain how we can specify partitions to efficiently generate formulas (represented by 2-graphs) modulo isomorphism with SMS.

\section{Basic encoding}
\label{sec:compute}

In this section we explain the methodology all of our investigations build upon, and review results already obtainable with it. In the next section, we will delve into technical details and improvements that are necessary to scale up this basic approach.

The idea is to first list all possible number of clauses a smallest $(k,s)$ or $(k,p,q)$-formula can have, and then decide whether an unsatisfiable formula exists with increasing $m$. For each $m$, we split the decision problem further by specifying number of variables $n$ the sought formula has.
For fixed $m$ and $n$, we reduce this task to deciding the satisfiability of a suitable quantified Boolean formula and give it to QBF-enabled SMS. Because we gradually increase $m$, the first time we hit a satisfiable instance we know the formula thus produced is smallest possible. The desired QBF should have its models correspond to unsatisfiable $(k, s)$-formulas with $n$ variables and $m$ clauses. Since unsatisfiability is coNP-complete, we cannot hope to obtain a polynomial-size propositional encoding (unless $\text{NP}=\text{coNP}$), and instead we use a $2$-QBF of the form $\exists X \forall Y \ksnm{k}{s}{n}{m}(X) \land \neg \satisfies{n}{m}(X,Y)$ (or $\exists X \forall Y \kpqnm{k}{p}{q}{n}{m}(X) \land \neg \satisfies{n}{m}(X,Y)$), where $\ksnm{k}{s}{n}{m}(X)$ ($\kpqnm{k}{p}{q}{n}{m}(X)$) expresses that $X$ represents a $(k, s)$-formula ($(k,p,q)$-formula) with $n$ variables and $m$ clauses, and $\satisfies{n}{m}(X,Y)$ expresses that the assignment represented by $Y$ satisfies $X$. 

\subsection{Hard-coding the first part of the clause-literal graph}
\label{subsec:formulas-sms}
Before we delve into the details of the encoding, we observe the following fact about the lexicographically minimal matrix of a clause-literal graph. 

Let $\pos{i} = r$ if the $r$th row/column of the adjacency matrix represents the positive literal of the $i$th variable, and similarly for $\nega{i}$. The first block of a clause-literal graph contains the vertices corresponding to literals, i.e., we have $\pos{i},\nega{i}\in[2n]$ for all $i\in[n]$.
\begin{theorem}
	\label{thm:matching-canonical}
	The lexicographically minimal matrix of any clause-literal graph is antidiagonal in the upper-left (variables) block, i.e., for all $i, j \leq 2n$, $\matr{i}{j} = 1$ iff $i + j = 2n+1$.
	In other words, for each $i$ we have $\{\pos{i}, \nega{i}\} = \{j, 2n+1-j\}$ for some $j \in [n]$.
\end{theorem}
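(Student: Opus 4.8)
The plan is to run a row-by-row exchange argument directly on the adjacency matrix. First I would record the structural fact underlying everything: inside the literal block $V_1=\lit(F)$ of a clause-literal graph $G(F)$, the only edges are the $n$ variable edges $\{x,\ol{x}\}$, so under \emph{every} partition-preserving relabeling the upper-left $2n\times 2n$ sub-block $B$ of the adjacency matrix is the adjacency matrix of a perfect matching on $[2n]$: it is symmetric, has zero diagonal, and every row and column sum equals $1$. Proving $B$ antidiagonal then amounts to proving that in the lexicographically minimal (canonical) matrix $A^*$ position $i$ is matched to position $2n+1-i$ for every $i\in[n]$.

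I would establish this by induction on $\ell=1,\dots,n$, showing that in $A^*$ we have $A^*(\ell,2n+1-\ell)=1$ and $A^*(\ell,j)=0$ for all $j\in[2n]\setminus\{\ell,2n+1-\ell\}$ (the case $\ell=1$ being the base case, handled by the same argument with a vacuous hypothesis). At stage $\ell$, the induction hypothesis and symmetry of $A^*$ force $A^*(\ell,j)=0$ for all $j\le\ell$ (for $i<\ell$, column $\ell$ is neither $i$ nor its antidiagonal mate $2n+1-i$, since $\ell\le n<2n+1-i$), so among columns $1,\dots,2n$ the unique $1$ of row $\ell$ sits at the position $p$ to which $\ell$ is matched, and moreover $p\in\{\ell+1,\dots,2n+1-\ell\}$ because positions $2n+2-\ell,\dots,2n$ are already matched to $1,\dots,\ell-1$. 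Now suppose for contradiction that $p<2n+1-\ell$, and let $\sigma$ be the transposition of the vertices swapping positions $p$ and $2n+1-\ell$ (identity on the clause block $V_2$). I would then check three non-interference facts: (i) $\sigma$ fixes every position $\le\ell$, so the literals occupying rows $1,\dots,\ell$ do not move and their incidences with clause vertices (columns $>2n$) are untouched; (ii) in rows $1,\dots,\ell-1$ the two affected columns $p$ and $2n+1-\ell$ carry a $0$ both before and after $\sigma$ — by the induction hypothesis, since for $i<\ell$ neither $p$ nor $2n+1-\ell$ equals $i$ or $2n+1-i$ — so these rows are entirely unchanged; and (iii) in row $\ell$ the single $1$ among columns $1,\dots,2n$ moves from column $p$ to column $2n+1-\ell$, while the entries in columns $\ell+1,\dots,p-1$ stay $0$. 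Hence $\sigma(G(F))$, which is an isomorphic $2$-graph and therefore a legitimate competitor in the canonicity minimization, has an adjacency matrix agreeing with $A^*$ on rows $1,\dots,\ell-1$ and on columns $1,\dots,p-1$ of row $\ell$, but with a $0$ in entry $(\ell,p)$ where $A^*$ has a $1$; this contradicts minimality of $A^*$. Therefore $p=2n+1-\ell$, completing the induction, and letting $\ell$ range over $1,\dots,n$ exhausts all $2n$ positions, so $B$ is antidiagonal. The ``in other words'' reformulation is then immediate: the two literals of variable $i$ are exactly the two endpoints of a variable edge of $B$, hence occupy some pair of positions $\{j,2n+1-j\}$.

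The main obstacle — and the reason the tempting shortcut ``the literal block of $A^*$ must itself be the canonical perfect-matching matrix'' does not quite work — is exactly points (i)–(ii): permuting the vertices of $V_1$ also permutes the columns of the literal--clause incidence block, so one cannot freely rearrange $B$ without risking collateral changes to the clause-incidence parts of earlier rows, which precede later $B$-columns in the concatenated vector. The transposition has to be picked so that it provably leaves rows $1,\dots,\ell-1$ and the clause incidences of row $\ell$ intact, and verifying this is where the bookkeeping about which positions are ``already matched'' (and hence which columns the transposition may touch) is needed. Everything else — symmetry and zero diagonal of adjacency matrices, and the observation that a partition-preserving relabeling produces an isomorphic clause-literal graph — is routine.
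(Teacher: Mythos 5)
Your proof is correct and follows essentially the same route as the paper: locate the first row where antidiagonality fails (your induction on $\ell$ is equivalent to this), and swap the literal vertex actually matched to that row with the one at the antidiagonal position, checking that the transposition leaves all earlier rows and the clause incidences untouched, so the resulting isomorphic copy is lexicographically smaller. Your write-up merely spells out the non-interference bookkeeping that the paper's two-line proof leaves implicit.
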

\begin{proof}
	Towards a contradiction, consider a lexicographically minimal adjacency matrix of some clause-literal graph, and the row $i$, where antidiagonality is first violated, because the $1$-entry is in column $2n+1-j$ for $j>i$ (and not $j=i$ as it should be; notice that $j<i$ is impossible since the literal $2n+1-j$ would have to be adjacent to both $j$ and $i$).
	Then, swapping the vertices $2n+1-j$ and $2n+1-i$ yields a lexicographically smaller matrix.
\end{proof}

\begin{figure}
    \centering
    \begin{tikzpicture}[scale=.8]
\tikzstyle{var}=[fill=lightgray, draw=black, shape=circle, inner
sep=0pt, minimum size=4pt]
\tikzstyle{cls}=[fill=white, draw=black, shape=rectangle, inner
sep=0pt, minimum size=4pt]
  \draw 
(0,0)    node[var, label= $x$] (x) {} 
(1,0)    node[var, label={$\ol{x}$}] (-x) {} 
(2,0)    node[var, label={$y$}] (y) {} 
(3,0)    node[var, label={$\ol{y}$}] (-y) {} 
(0,-1)    node[cls, label=below:{$C_1$}] (C) {} 
(1.5,-1)    node[cls, label=below:{$C_2$}] (D) {} 
(3,-1)    node[cls, label=below:{$C_3$}] (E) {} 
(x)--(-x) (y)--(-y)
(x)--(C)--(y) (x)--(D)--(-y) (-x)--(E) 
;

\draw(9,-0.5) node (matrix) {$\begin{array}{@{}c|cccc|@{\ }c@{\ }c@{\ }c@{\ }}
               & \overline{x} & \overline{y} & y & x & C_1 & C_2 & C_3\\
                    \hline
				\overline{x} & 0 & 0 & 0 & 1 & 0 & 0 & 1 \\
				\overline{y} & &0 & 1 & 0 & 0 & 1 & 0 \\
                y & &  & 0 & 0 & 1 & 0 & 0 \\
                x & &  &   & 0 & 1 & 1 & 0 \\
                \hline
                C_1 & & & & &0 &0 & 0\\
                C_2 & & & & & & 0 & 0\\
                C_3 & & & & & & & 0\\
                \hline
		\end{array} $};
          \end{tikzpicture}
    \caption{Consider the formula $F=\{C_1,C_2,C_3\}$ with $C_1=\{x,y\}$, $C_2=\{x,\ol{y}\}$, and $C_3=\{\ol{x}\}$. We see the corresponding 2-graph and  the upper part of its lexicographically minimal adjacency matrix. Observe how the left part is indeed antidiagonal.}
    \label{fig:example}
\end{figure}
Theorem~\ref{thm:matching-canonical} shows that we can hard-code the top-left $2n\times2n$ matrix of the adjacency matrix. Doing so has the following benefits. The immediate advantage is that with fewer undecided variables to solve, SMS terminates more quickly. Also, this breaks the symmetries from reordering the literal nodes of the clause-literal matrix. Finally, as we will see more clearly in the next part, fixing the matching between literal nodes reduces the size of the encoding since we no longer need to describe the cardinality constraints on how many times a variable can occur \emph{conditionally} on an undecided matching among literal nodes. \Cref{fig:example} shows an example of a clause-literal graph and its corresponding lexicographically minimal adjacency matrix.

\subsection{Encodings for formulas}\label{sec:encoding}

In this part, we describe the detailed construction of the specific 2-QBF we use, whose models correspond to
unsatisfiable $(k, s)$-formulas with $n$ variables and $m$ clauses. We write our encoding in the standard circuit-QBF QCIR format~\cite{JordanKlieberSeidl16}.

We express cardinality constraints using cardinality networks~\cite{AsinNieuwenhuisOliverasRodriguez09}.
A cardinality network $\text{Card}^{a}_{b}$ takes
$b$ binary inputs and outputs the most significant $a$ inputs ordered
from more significant to less. For the purpose of illustration,
suppose we want variable $z$ to be true if and only if  exactly $d$ variables out of $x_1,\dots,x_b$ are true.
Let  $(y_1,\dots,y_{d},y_{d+1}):=\text{Card}^{d+1}_{b}(x_1,\dots,x_{b})$. Note that this way $y_i$ is true if and only if there are at least $i$ true inputs among $x_1,\dots,x_b$. So we can define $z:=y_d\land \lnot y_{d+1}$. When the number of inputs and outputs is clear from the context, we just write Card. Let $i,i'\in[n], j<j'\in[m]$, we define $\vari{i}{i'}:=\matr{\pos{i}}{\pos{i'}}$, $\vari{i}{-i'}:=\matr{\pos{i}}{\nega{i'}}$ and $\vari{-i}{-i'}:=\matr{\nega{i}}{\nega{i'}}$ for edges between literal vertices, $\inci{i}{j}:=\matr{\pos{i}}{j + 2n}$ and $\inci{i}{j}:=\matr{\nega{i}}{j + 2n}$ for edges from a literal node to a clause node, and $\cl{j}{j'}:=\matr{j+2n}{j'+2n}$ for edges between clause nodes. Take a sufficiently large~$x$. For all $i\in[n], t\in[n]\cup[-n]$ and $j\in[m]$, define
\[\begin{aligned}
(\omega_{t,1},\omega_{t,2},\dots,\omega_{t,x})&:=\text{Card}(\inci{t}{1},\inci{t}{2},\dots,\inci{t}{m}),\\
(\sigma_{i,1},\sigma_{i,2},\dots,\sigma_{i,x})&:=\text{Card}(\inci{i}{1},\inci{i}{2},\dots,\inci{i}{m},\inci{-i}{1},\inci{-i}{2},\dots,\inci{-i}{m}),\text{ and}\\
(\tau_{j,1},\tau_{j,2},\dots,\tau_{j,k+1})&:=\text{Card}(\inci{1}{j},\inci{2}{j},\dots,\inci{n}{j},\inci{-1}{j},\inci{-2}{j},\dots,\inci{-n}{j}).
\end{aligned}\]
The following are some useful properties expressed in propositional logic which we use as components in the desired 2-QBF. $F_1$ expresses that there are no edges between two clause nodes. $F_2$ expresses that each literal occurs at least once. $F_3$ expresses that a literal and its negation cannot occur in the same clause. $F^v_4$ expresses that each variable occurs at most $s$ times. $F_5$ expresses that each clause contains exactly $k$ literals. 
\[\begin{aligned}F_1&:=\bigwedge_{0\leq j < j' < m}\lnot \cl{j}{j'},\quad F_2:=\bigwedge_{i\in [n]\cup[-n]}\bigvee_{j\in[m]}\inci{i}{j},\quad F_3:=\bigwedge_{i\in[n],j\in[m]}\lnot\inci{i}{j}\lor\lnot\inci{-i}{j}.\\
F_4^v&:=\bigwedge_{i\in[n]}\lnot\sigma_{i,s+1},\quad F_4^l:=\bigwedge_{i\in[n]\cup[-n]}\lnot\omega_{i,q+1}\land\bigwedge_{i\in[n]}\lnot\omega_{i,p+1} \lor\lnot\omega_{-i,p+1},\\
F_5&:=\bigwedge_{j\in[m]}\tau_{j,k}\land\lnot\tau_{j,k+1}. \end{aligned}\]
It is easy to see that a smallest $(k,s)$ or $(k,p,q)$-formula must be minimally unsatisfiable, and thus we can also require (in $\ksnm{k}{s}{n}{m}$) that it contain no blocked clauses. Given $i\in[n]\cup[-n]$ and  $j,j'\in[n]$, define $\psi_{i,j,j'}$ and $\varphi_{i,j}$ as follows. Here $\varphi_{i,j}$ expresses that the $j$-th clause is not blocked on the literal $i$.
\[\psi_{i,j,j'}:=\bigwedge_{\substack{i'\in[n]\cup[-n]\\i'\not=i,-i}}(\lnot\inci{i'}{j}\lor\lnot\inci{-i'}{j'}),\quad \varphi_{i,j}:=\lnot\inci{i}{j}\lor\bigvee_{\substack{j'\in[m]\\j'\not=j}}(\inci{-i}{j'}\land\psi_{i,j,j'}).\]
$F_6$ expresses that the formula contained no blocked clauses.
\[F_6:=\bigwedge_{\substack{i\in[n]\cup[-n],\\j\in[m]}}\varphi_{i,j}\]
$F_7$ hard-codes the matching between the literal nodes. 
\[F_7:=\bigwedge_{i\in[n]}\vari{i}{-i}\land\bigwedge_{\substack{i\in[n],i'\in[-n]\\i\not=-i'}}\lnot\vari{i}{i'}\land\bigwedge_{i<i'\in[n]}\lnot\vari{i}{i'}\land\bigwedge_{i'<i\in[-n]}\lnot\vari{i}{i'}\]
Let $X:=\SB \matr{i}{j} \SM i<j\in[2n+m] \SE$ and $Y=\SB \true{i} \SM i\in [m] \SE$. For $i\in[n]$, we put $\true{-i}:=\lnot\true{i}$. Finally, define $\ksnm{k}{s}{n}{m}(X):=F_1\land F_2\land F_3 \land F^v_4 \land F_5 \land F_6\land F_7$, $\kpqnm{k}{p}{q}{n}{m}(X):=F_1\land F_2\land F_3 \land F^l_4 \land F_5 \land F_6\land F_7$, and \[
\satisfies{n}{m}(X,Y):=\bigwedge_{j\in[m]}\bigvee_{i\in[n]\cup[-n]}\true{i}\land\inci{i}{j}.\]

\subsection{Preliminary findings}
\label{subsec:bounds}

To determine the exact value of $\mks{3}{s}$ and $\mkst{3}{p}{q}$ for various choices of $s$, $p$ and $q$, we enumerate all permissible values of $(m,n)$, where $m$ is the number of clauses and $n$ is the number of variables, from smaller to bigger in the lexicographical order.
For each pair $(m, n)$, we solve the formula described in Section~\ref{sec:encoding} with SMS.%
\footnote{
	\url{https://github.com/markirch/sat-modulo-symmetries}
}
We terminate the solver if it cannot answer within $5$ days.

We ran the solver on a Sun Grid Engine (SGE) cluster consisting of heterogeneous machines running Ubuntu 18.04.6 LTS.%
\footnote{
	The cluster contains nodes with the following architectures: 2× Intel Xeon E5540 with 2.53 GHz Quad Core, 2× Intel Xeon E5649 with 2.53 GHz 6-core, 2× Intel Xeon E5-2630 v2 with 2.60GHz 6-core, 2× Intel Xeon E5-2640 v4 with 2.40GHz 10-core and 2× AMD EPYC 7402 with 2.80GHz 24-core.
}

We begin by observing that since allowing more occurrences yields a larger class of formulas, by definition, $\mks{k}{s} \leq \mks{k}{s-1}$, $\mkst{k}{p}{q} \leq \min \big( \mkst{k}{p-1}{q}, \mkst{k}{p}{q-1} \big)$, and  $\mks{k}{p+q} \leq \mkst{k}{p}{q}$.
The following lemmas provide preliminary bounds on $m$ and $n$ for our enumeration.

\begin{lemma}
	\label{lemma:model-bound-asym}
	An unsatisfiable $k$-CNF formula that contains a variable of type $(p, q)$ has at least $2^k + | q - p | $ clauses.
\end{lemma}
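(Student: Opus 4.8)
The plan is to argue via a combination of the classical counting bound (an unsatisfiable $k$-CNF needs at least $2^k$ clauses because each $k$-clause rules out exactly one of the $2^k$ assignments to its variables — but here the relevant fact is the slightly stronger local statement) together with a local analysis around the variable $x$ of type $(p,q)$. Write $F$ for the formula, let $x$ be a variable occurring positively in $p$ clauses and negatively in $q$ clauses, and assume without loss of generality $q \geq p$, so $|q-p| = q - p$. The idea is to eliminate $x$ by DP-resolution and count how many clauses we destroy versus how many we are forced to keep.

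First I would recall that $\elim{F}{x}$ is unsatisfiable (stated in the excerpt under ``Variable elimination''), and that $\elim{F}{x}$ is a $(\leq k)$-CNF: resolvents on $x$ have at most $2(k-1)$ literals, but in fact any clause in $\elim{F}{x}$ that still has $\geq k$ literals is harmless, and the ones with $< k$ literals are the obstruction we must handle. The cleaner route, however, is to count directly in $F$. Consider the $2^k$ truth assignments to $\var(C)$ for a fixed clause — no, instead: partition the clauses of $F$ according to whether they contain $x$, contain $\ol x$, or neither. Restrict attention to assignments $\tau$ with $\tau(x) = 1$: such a $\tau$ satisfies all $p$ positive-in-$x$ clauses automatically, so the $q$ negative clauses plus the clauses not mentioning $x$ must still be unsatisfiable as a $(\leq k)$-CNF on $\var(F)\setminus\{x\}$ — wait, they are not all $(\leq k-1)$. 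The right bookkeeping: the subformula obtained by deleting the $p$ positive clauses and removing $\ol x$ from the $q$ negative clauses (the ``$x = 1$ restriction'') is unsatisfiable; its clauses are the $q$ shortened ones (now of size $\leq k-1$) together with all clauses avoiding $x$ entirely (size $k$). Symmetrically, the ``$x=0$ restriction'' deletes the $q$ negative clauses and shortens the $p$ positive ones.

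The key step is then to bound the number of clauses in each restriction from below by $2^k$ — no: a $(\leq k)$-CNF that is unsatisfiable and whose clauses have size $\leq k$ needs at least... here I'd want the refined claim that if an unsatisfiable CNF has exactly $c_j$ clauses of size $j$, then $\sum_j c_j 2^{k-j} \geq 2^k$ (each size-$j$ clause forbids $2^{\ell - j}$ of the $2^\ell$ total assignments, $\ell = |\var|$; normalizing). Applying this to the $x=1$ restriction: the $q$ shortened clauses have size $\leq k-1$ so each contributes $\geq 2$ in the weighted count, the others contribute $1$, giving (number of full clauses) $+ \, 2q' \geq 2^k$ is not quite it either. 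I expect the clean argument is: count over all $2^k$ assignments to a hypothetical common variable set; since $F$ is unsatisfiable every assignment falsifies some clause, a size-$k$ clause falsified by a $2^{-k}$ fraction, and crucially the $p$ clauses on $x$ (positive) are only falsified by $x=0$-assignments while the $q$ clauses on $\ol x$ only by $x=1$-assignments — so the $x=1$ half of the assignment cube ($2^{k-1}$ worth, suitably normalized) must be covered by the $q$ clauses-on-$\ol x$ plus clauses-off-$x$, and likewise the $x=0$ half by $p$ plus off-$x$. Adding the two inequalities and subtracting the double-counted off-$x$ contribution yields $|F| \geq 2^k + (q-p)$ after rearrangement.

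\begin{proof}[Proof sketch of the plan]
Let $x$ have type $(p,q)$ in the unsatisfiable $k$-CNF $F$, $q \geq p$. Let $A$ be the set of clauses containing $x$ ($|A| = p$), $B$ those containing $\ol x$ ($|B| = q$), and $R$ the rest. The restriction $F|_{x=1}$ consists of $R$ together with $\{C \setminus \{\ol x\} : C \in B\}$ and is unsatisfiable; weighting each clause $D$ by $2^{-|D|}$ and summing over the $2^{k}$-sized assignment space forces $\sum_{D \in R} 2^{-k} + \sum_{C \in B} 2^{-(|C|-1)} \geq \tfrac12$, i.e. (since sizes are $\leq k$) $|R| + 2q \geq 2^{k-1}\cdot 2 = 2^k$. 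Symmetrically $|R| + 2p \geq 2^k$. Adding, $2|R| + 2p + 2q \geq 2^{k+1}$, and since $|F| = |R| + p + q$ we get $2|F| \geq 2^{k+1} + (q - p)\cdot 0$...

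\end{proof}

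I realize the last inline computation needs the two inequalities combined asymmetrically rather than simply added: use $|R| + 2q \ge 2^k$ and keep it, add $p$ to both sides to get $|F| = |R| + p + q \ge 2^k + q - p = 2^k + |q-p|$, which is exactly the claim and uses only the single ``$x=1$'' restriction plus $q \geq p$.

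**The main obstacle** will be making the weighted-counting bound for unsatisfiable $(\leq k)$-CNFs precise when the variable set of the restricted formula is not fixed in advance — one must phrase it as: an unsatisfiable CNF in which every clause has $\leq k$ literals and there are $c$ clauses of size exactly $k$ and $t$ clauses of size $\leq k-1$ satisfies $c + 2t \geq 2^k$; this follows from the standard fact that the forbidden assignment-fractions must sum to at least $1$. Once that lemma is in hand, the type-$(p,q)$ refinement is the short algebra above.
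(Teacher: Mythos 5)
Your overall strategy is the same as the paper's: restrict the distinguished variable and count falsified assignments, using the fact that a clause of size $r$ is falsified by a $2^{-r}$ fraction of the assignments (the paper phrases this as $p\,2^{n-k}+(m-p-q)\,2^{n-1-k}\geq 2^{n-1}$ after setting $x$ to false). However, your concluding step uses the wrong restriction, and the algebra there does not go through. With $q\geq p$, the inequality you decide to keep, $|R|+2q\geq 2^k$ (from the $x=1$ restriction, which shortens the $q$ clauses of $B$), yields only $|F|=|R|+p+q\geq 2^k+p-q$, which for $q>p$ is \emph{weaker} than $2^k$ and is not the claim. The manipulation ``add $p$ to both sides to get $|F|=|R|+p+q\geq 2^k+q-p$'' is false: adding $p$ gives $|R|+2q+p\geq 2^k+p$, and $|F|=(|R|+2q+p)-q$, so you land at $2^k+p-q$ again. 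What the claim needs is the symmetric inequality $|R|+2p\geq 2^k$, coming from the $x=0$ restriction, i.e., the one that shortens the $p$ \emph{positive} (minority) clauses; from it, $|F|=|R|+p+q\geq 2^k-2p+p+q=2^k+(q-p)=2^k+|q-p|$.

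Since you do state ``symmetrically $|R|+2p\geq 2^k$'' in your sketch, the repair is a one-line swap (use the $x=0$ restriction, not the $x=1$ one), after which your argument coincides with the paper's proof — the paper sets $x$ to false for precisely this reason: with $p\leq q$, it is the side with \emph{fewer} occurrences that must be shortened to produce the $+|q-p|$ surplus. The auxiliary weighted-counting fact you worry about (clauses of size $\leq k-1$ count with weight $2$) is unproblematic and is exactly the counting the paper performs over the $2^{n-1}$ assignments of the restricted formula; but as submitted, the final derivation does not establish the stated bound.
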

\begin{proof}
	Let $F$ be an unsatisfiable $k$-CNF with $m$ clauses, $n$ variables, and $x$ a variable of type $(p, q)$.
	$F|_{\n{x}} := \SB C \setminus \{x\} \in F \SM \n{x} \not \in C \SE$, obtained from $F$ by setting $x$ to false, is unsatisfiable, has $n-1$ variables, $p$ clauses of size $k-1$, and $m-p-q$ clauses of size $k$.
	Since a clause of size $r$ is falsified by $2^{n-1-r}$ assignments, and each assignment falsifies some clause, we have $p 2^{n-k} + (m-p-q) 2^{n-1-k} \geq 2^{n-1}$, and solving for $m$ completes the proof.
\end{proof}

\begin{lemma}
	\label{lemma:ks-mu-var-bound}
	A minimally unsatisfiable $(k, s)$-formula with $m$ clauses has between $\lceil\frac{k \cdot m}{s}\rceil$ and $m-1$ variables. Similarly, a minimally unsatisfiable $(k,p,q)$-formula with $m$ clauses has between $\lceil\frac{k \cdot m}{p+q}\rceil$ and $m-1$ variables.
\end{lemma}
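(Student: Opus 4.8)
The plan is to prove both the lower and the upper bound on the number of variables, treating the $(k,s)$ and $(k,p,q)$ cases together since the $(k,p,q)$ bound follows from essentially the same counting with $p+q$ playing the role of $s$.

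For the \emph{lower bound}, I would count literal occurrences (equivalently, clause--literal edges). A $(k,s)$-formula with $m$ clauses has exactly $k\cdot m$ literal occurrences, since every clause contributes $k$ of them. On the other hand, each of the $n$ variables contributes at most $s$ occurrences, so $k\cdot m \leq s\cdot n$, which rearranges to $n \geq k m / s$, and since $n$ is an integer, $n \geq \lceil k m / s\rceil$. For the $(k,p,q)$ case, a variable contributes at most $p+q$ occurrences (at most $p$ positive and at most $q$ negative), so the same argument gives $n \geq \lceil k m / (p+q)\rceil$.

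For the \emph{upper bound} $n \leq m-1$, I would invoke the deficiency fact recalled in the preliminaries: every minimally unsatisfiable formula $F$ satisfies $\delta(F) = |F| - |\var(F)| \geq 1$ (citing \cite{AharoniLinial86}). Applying this to our formula with $m$ clauses and $n$ variables gives $m - n \geq 1$, i.e., $n \leq m-1$. This holds for any minimally unsatisfiable CNF, so in particular for minimally unsatisfiable $(k,s)$- and $(k,p,q)$-formulas.

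I do not expect a genuine obstacle here: both bounds are short counting arguments, and the only external input is the positivity-of-deficiency theorem already stated in the excerpt. The one point worth stating carefully is that the lower bound uses \emph{exactly} $k$ literals per clause (this is where the ``$k$-CNF means exactly $k$'' convention matters), whereas the upper bound uses only minimal unsatisfiability and not the degree restriction at all.
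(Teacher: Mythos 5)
Your proof is correct and follows exactly the paper's argument: the lower bound by counting the $km$ literal occurrences against the per-variable cap $s$ (resp.\ $p+q$), and the upper bound from the positive deficiency of minimally unsatisfiable formulas~\cite{AharoniLinial86}. Your version merely spells out the details the paper leaves implicit.
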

\begin{proof}
	With $n$ variables of degree $\leq s$ there are $mk \leq ns$ literal occurrences.
	For the upper bound, recall that minimally unsatisfiable formulas have positive deficiency. 
\end{proof}

It is known that an unsatisfiable (3,4)-formula with 16 clauses  and an unsatisfiable (3,2,2)-formula with 20 clauses exist
\cite{BermanKarpinskiScott03c}. We give the formulas as  $E_{3,4}$ and $M_{3,2,2}$ in the appendix. The experimental results combined with this knowledge yield \Cref{table:prelim-findings}. One can find a smallest $(3,2,3)$, $(3,2,4)$ and $(3,3,3)$-formula in the appendix as $M_{3,2,3}$, $M_{3,2,4}$, and $M_{3,3,3}$. $M_{3,2,3}$ also serves as a smallest $(3,5)$-formula, and $M_{3,3,3}$ as a smallest $(3,6)$ formula.

\begin{table}[tbp]
  \caption{ Preliminary results based only on the method of this
    section, for $\mu(3,s)$ (left) and $\mu(3,p,q)$ (right). Since the
    right table is symmetric with respect to the main diagonal, we
    only give the upper triangle due to the assumption $q \geq p$. All values for $s \geq 6$ and $q \geq p \geq 3$ are $8$: less is not possible by Lemma~\ref{lemma:model-bound-asym}.
    Lemma~\ref{lemma:model-bound-asym} also rules out a
    $(3,2,q)$-formula with $9$ clauses and $q \geq 5$.}
  \label{table:prelim-findings}
\av{\medskip}
  \centering
  \setlength\tabcolsep{3mm}
  \av{\qquad}\begin{tabular}{@{}ccc@{}}
    \toprule
    $s=4$ & $s=5$ & $s \geq 6$ \\
    \midrule
    $[14,16]$ & 11 & 8\\
    \bottomrule\\
  \phantom{1,2,3} &&\\
  \end{tabular}\hfill%
  \begin{tabular}{@{}cccccc@{}}
    \toprule
    &$q=1$ & $q=2$ & $q=3$ & $q=4$ & $q\geq$5 \\
    \midrule
    $p=$ 1 & $\infty$ & $\infty$ & $[14,\infty]$& $[11,\infty]$ & $[8,\infty]$\\
    $p=2$ &-& $[8,20]$  & 11 & 10 & 10\\
    $p\geq 3$ & - & - & 8& 8 & 8\\
    \bottomrule
  \end{tabular}\av{\qquad}
\end{table}

A closer look at the time spent on deciding the existence of an unsatisfiable $(3,4)$-formula with different $n$ and $m$ shown in Table \ref{table:3-4-time} reveals that this basic method reaches its limit with formulas of about $13$ clauses. Thus, further considerations are called for if we want to determine the precise value for some of the entries in the tables. 

\newcommand{\unslvd}{to}
\newcommand{\unslvdw}{\multicolumn{2}{c}{\unslvd}}

\newcommand{\sep}{\hspace{3pt}}
\begin{table}[htb]
  \centering

    \caption{
		Time spent deciding the existence of an unsatisfiable $(3,4)$-formula with $n$ variables and~$m$ clauses (without/with blocked-clause detection encoded as $F_6$).
		Unsolved queries are marked by \unslvd, blank areas are out of bounds determined by Lemma~\ref{lemma:ks-mu-var-bound}.
		All terminated queries were unsatisfiable except the one marked
                in blue with $n=12$ and $m=16$.
	}
    \label{table:3-4-time}
\av{\medskip}

  \setlength\tabcolsep{2pt}
  \begin{tabular}{@{}lrr@{\hspace{0.7em}}rr@{\hspace{0.7em}}rr@{\hspace{0.7em}}rr@{\hspace{0.7em}}rr@{}}
          \toprule
          $n$   & \multicolumn{2}{c}{$m=8$}     & \multicolumn{2}{c}{$m=9$}     & \multicolumn{2}{c}{$m=10$}     & \multicolumn{2}{c}{$m=11$} & \multicolumn{2}{c}{$m=12$}\\
          \midrule
          6     & 0.4s & 0.6s &      &      &      &       &     &    &     &      \\
          7     & 0.7s & 1.2s & 1.9s & 2.0s &      &       &     &    &     &      \\
          8     &      &      & 4.9s & 5.9s & 7.3s & 10.4s &     &    &     &      \\
          9     &      &      &      &      &  42s & 1m    &  1m & 3m &  5m &  4m  \\
          10    &      &      &      &      &      &       & 12m & 7m & 43m &  17m \\
          11    &      &      &      &      &      &       &     &    &  3h & 1.6h \\
          \bottomrule
    \end{tabular}\hfill%
	\begin{tabular}{@{}l@{\hspace{0.7em}}rr@{\hspace{0.7em}}rr@{\hspace{0.7em}}rr@{\hspace{0.7em}}rr@{\hspace{0.7em}}@{}}
          \toprule
          $n$ & \multicolumn{2}{c}{$m=13$} & \multicolumn{2}{c}{$m=14$} & \multicolumn{2}{c}{$m=15$} & \multicolumn{2}{c}{$m=16$} \\
          \midrule
10  &  2h & 1h  &     &     &     &     &          &                       \\
11  & 12h & 6h  & 28h & 27h &     &     &          &                       \\
12  &  5d & 42h & \unslvdw  & \unslvdw  & \unslvd  & \textcolor{azure}{5d} \\
13  &     &     & \unslvdw  & \unslvdw  &       \unslvdw                   \\
14  &     &     &     &     & \unslvdw  &       \unslvdw                   \\
15  &     &     &     &     &     &     &       \unslvdw                   \\
    \bottomrule
    \end{tabular}
\end{table}

\vspace{-5mm}
\subsection{A dichotomy theorem for \texorpdfstring{$(3,1,q)$}{(3,1,q)}-SAT}

The following extends a result by Kratochv\'{\i}l, et
al.~\cite[Lemma 2.2]{KratochST93}.

\newcommand{\cvstar}{\textnormal{$\star$}}
\begin{lemma}
\cv{[\cvstar]}
\label{lem:unsat-np}
  Let $k\geq 3$ and $p,q\geq 1$ such that $p+q\geq 3$. If there exists
  an unsatisfiable $(k,p,q)$\hy formula, then $(k,p,q)$-\normalfont{SAT} is
  NP-hard.
\end{lemma}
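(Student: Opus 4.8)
The statement is a Karp-reduction result: given an unsatisfiable $(k,p,q)$-formula $U$, we want to reduce some known NP-hard problem to $(k,p,q)$-SAT. The natural source problem is $(k,s)$-SAT for $s = f(k)+1$ (which is NP-hard by Kratochv\'{\i}l et al.), or more conveniently a variant where literal occurrences are controlled. In fact the cleanest route is to reduce from ordinary SAT (or $3$-SAT) by simulating a high-occurrence variable with many low-occurrence copies glued together by \emph{enforcers} built from $U$, exactly as in the classical argument of Kratochv\'{\i}l et al. So the plan is: (1) extract from the unsatisfiable $(k,p,q)$-formula $U$ a \emph{$(k,p,q)$-enforcer} $E$ — a satisfiable $(k,p,q)$-formula with a variable whose value is forced in every model, and with one occurrence of the enforced literal's polarity to spare; (2) use copies of $E$ to build ``equality gadgets'' that force several fresh variables to take the same value while each fresh variable has only one or two free occurrences left; (3) given an arbitrary CNF (or $k$-CNF) instance $\varphi$, replace each occurrence of each variable by a fresh copy, chain the copies together with equality gadgets, and pad clauses to width exactly $k$ — producing an equisatisfiable $(k,p,q)$-formula in polynomial time.

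\textbf{Key steps in order.} First I would show that an unsatisfiable $(k,p,q)$-formula yields a $(k,p,q)$-enforcer. Take $U$ minimally unsatisfiable (drop clauses until it is); pick any clause $C = \{\ell_1,\dots,\ell_k\}$ and any literal $\ell_i \in C$. Then $U \setminus \{C\}$ is satisfiable, but every satisfying assignment falsifies $C$, hence sets $\ell_i$ to false; so $U\setminus\{C\}$ together with the knowledge that $\ell_i$ is forced false is almost an enforcer — and because we removed a clause containing $\ell_i$, the occurrence count of $\var(\ell_i)$ in the appropriate polarity has dropped below its bound, giving the needed ``spare'' occurrence. (One must be slightly careful: removing $C$ lowers the count of \emph{all} variables in $C$, and we want the enforced literal to be the one with slack in the correct direction; choosing $\ell_i$ and its polarity appropriately handles this, possibly after the global swap of positive/negative allowed for $(k,p,q)$.) Second, I would build an \emph{implication/equality gadget}: given two fresh variables $u,v$, combine two enforcer-copies whose enforced literals are identified with $u$ and $\n u$ respectively, and arrange clauses of the form (roughly) $\{\n u, v, \dots\}$ and $\{u, \n v, \dots\}$ padded to width $k$ using the enforced literals, so that any model must set $u = v$; the point is that $u$ and $v$ each pick up only $O(1)$ occurrences from the gadget. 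Third, given an arbitrary $k$-CNF $\varphi$ with variable $x$ occurring $d_x$ times, introduce fresh copies $x^{(1)},\dots,x^{(d_x)}$, put the $j$-th occurrence of $x$ on copy $x^{(j)}$, and add a cycle of equality gadgets linking consecutive copies; then pad any under-width clauses to exactly $k$ literals with further fresh enforcer-forced literals. Finally, verify the three things a reduction needs: the output is a genuine $(k,p,q)$-formula (all occurrence bounds respected — this is where the counting of gadget contributions matters), it is satisfiable iff $\varphi$ is, and it is computable in polynomial time.

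\textbf{Main obstacle.} The crux — and the only part that is more than bookkeeping — is arranging the occurrence budgets so that \emph{every} variable, both the copies $x^{(j)}$ and all the internal and padding variables of the enforcers, ends up within the $(p,q)$ bounds. Each copy $x^{(j)}$ is used once by the original clause it sits in, plus a bounded number of times by the two incident equality gadgets; since $p+q \ge 3$ we have at least one ``occurrence to spend'' beyond a single positive and single negative use, and the construction must be tuned (e.g.\ by choosing how enforcers are wired and how padding literals are reused, and by using one global swap of polarities when $p < q$) so this budget is never exceeded. Handling the degenerate small cases — e.g.\ $p=1$, where a copy may appear positively at most once, forcing the equality cycle and the original occurrence to be distributed with no slack on the positive side — requires care and is presumably why the hypothesis is $p+q\ge 3$ rather than $p+q\ge 2$. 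Once the gadgets are specified, correctness (satisfiability preservation) and polynomiality are routine: each variable of $\varphi$ blows up into $d_x = O(|\varphi|)$ copies and $O(d_x)$ constant-size gadgets, so the total size is linear in $|\varphi|$ times the constant size of $U$.
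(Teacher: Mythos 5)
Your overall strategy coincides with the paper's (which follows Kratochv\'{\i}l et al.): extract from a minimally unsatisfiable $(k,p,q)$-formula $H$ the satisfiable formula $H'=H\setminus\{C\}$, every model of which falsifies every literal of $C$; split each high-occurrence variable of the source instance into fresh copies tied together cyclically; and pad the resulting short clauses to width $k$ with enforced-false literals from variable-disjoint copies of $H'$ (removing $C$ freed exactly one occurrence of each of its literals, which is why the bounds survive the padding). However, as written the proposal has a genuine gap exactly where the lemma has its content. Your gadget places, for consecutive copies $u,v$, both a clause containing $\overline{u},v$ and a clause containing $u,\overline{v}$; in a chain or cycle of such bidirectional gadgets every internal copy picks up at least two positive and two negative occurrences \emph{in addition to} its one occurrence in an original clause, so the output respects the occurrence bounds only when both $p$ and $q$ are at least $2$ (and one of them at least $3$). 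This misses the tight cases $p=1$, $q\geq 2$ --- in particular $(3,1,3)$ --- which are precisely what the lemma and the paper's dichotomy need; you name this as the main obstacle but leave it unresolved, and the hypothesis $p+q\geq 3$ alone does not rescue the construction you describe.

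The missing idea, supplied in the paper, is twofold: (i) use a single \emph{directed} cycle of implications encoded as 2-clauses, $\{\overline{y_i},y_{i+1}\}$, $\{\overline{y_r},\overline{z_1}\}$, $\{z_i,\overline{z_{i+1}}\}$, $\{z_s,y_1\}$, rather than bidirectional equality gadgets, so each fresh variable lies in exactly two of them; and (ii) replace every occurrence of $x$ in its clause by the \emph{negated} copy $\overline{y_i}$ (and every occurrence of $\overline{x}$ by $\overline{z_i}$), so that each fresh variable occurs exactly once positively and twice negatively --- within budget whenever $q\geq 2$, which one may assume w.l.o.g.\ from $p+q\geq 3$ by swapping polarities. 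With that wiring the rest of your plan goes through exactly as in the paper: each 2-clause is widened to a $k$-clause by adding $k-2$ literals of $C$ from a fresh copy of $H'$, and the reduction from $k$-SAT is clearly polynomial. One further slip worth noting: ``enforcer-copies whose enforced literals are identified with $u$ and $\overline{u}$'' would, read literally, force $u$ to a constant and defeat the gadget; what you want (and later describe) is only to use the enforced-false literals of disjoint copies of $H'$ as padding inside the implication clauses.
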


\ifdefined\isarxiv
\begin{proof}
  The proof mimics that of~\cite{KratochST93}.
  We give a reduction from $k$-SAT, i.e., the satisfiability problem
  for $k$-CNF formulas.  Assume, w.l.o.g., that $q\geq 2$ and let $F$
  be an instance of $k$-SAT with $m$ clauses and $n$ variables.
  Assume there is a variable $x$ that appears in more than $p$ clauses
  positively or in more than~$q$ clauses negatively (if such $x$
  doesn't exist, then $F$ is an instance of $(k,p,q)$-SAT and we are
  done).

  Let $C_1,\dots,C_r \in F$ be the clauses that contain $x$ and
  $D_1,\dots,D_s \in F$ the clauses that contain~$\ol{x}$.  We take
  fresh variables $y_1,\dots,y_s$ and $z_1,\dots,z_r$ and replace $x$
  in $C_i$ by $\ol{y_i}$, $1\leq i \leq r$, and replace $\ol{x}$ in
  $D_i$ by $\ol{z_i}$, $1\leq i \leq s$. Let $F'$ be the new formula
  obtained from $F$ by this replacement.

  To make $F$ and $F'$ equisatisfiable, we need to ensure that all
  $y_i$ variables are logically equivalent and all the $z_i$ variables
  are logically equivalent with the negation of the $y_i$ variables.
  We achieve this by adding a circle of implications in terms of
  $2$-clauses. More precisely, we add the clauses
  $\{\ol{y_i},y_{i+1}\}$ for all $1\leq i < r$, the clause
  $\{\ol{y_r}, \ol{z_1}\}$, the clauses $\{z_i,\ol{z_{i+1}}\}$ for all
  $1\leq i < s$, and the clause $\{z_s, y_1\}$.  Let $F''$
  denote the formula we obtain by adding these 2-clauses to $F'$. We
  observe that $F$ and $F''$ are indeed equisatisfiable and all the
  variables $y_1,\dots,y_r$ and $z_1,\dots z_s$ appear exactly once
  positively and twice negatively, i.e., they satisfy the occurrence
  condition imposed by $p$ and~$q$. Similarly, we can
  eliminate all further variables that appear more than $p$ times
  positively or more than $q$ times negatively, eventually arriving at
  an equisatisfiable $(\leq k, p,q)$-formula $F^*$ which contains $m$
  $k$-clauses and at most $km$ 2-clauses (each $2$-clause can be traced
  back to a unique literal of $F$, and there are $km$ such literals).

  It remains to increase the size of the $2$-clauses to $k$.  We know
  that there exists an unsatisfiable $(k,p,q)$\hy formula, and hence
  there exists a minimally unsatisfiable $(k,p,q)$\hy formula. Call
  this formula $H$, and let $C$ be some fixed clause of $H$.
  We observe that $H' := H\setminus \{C\}$ is satisfiable and,
  for all $\ell \in C$, each satisfying truth assignment of $H'$ sets $\ell$ to
  false.  For each 2-clause $D$ of $F^*$ we add a copy of $H'$ to $F^*$,
  and add (copies of) $k-2$ literals from $C$ to $D$. Let $F^{**}$ denote the formula
  obtained this way from $F^*$. Again, $F^{**}$ and $F$ are
  equisatisfiable, and $F^{**}$ is now a $(k,p,q)$-formula. As there
  were at most $km$ 2-clauses in $F^*$, we have added at most $km(|H|-1)$
  additional clauses with the variable-disjoint copies of $H'$.
  
  Thus, we have reduced the given formula $F$ in polynomial time to an
  equisatisfiable $(k,p,q)$-formula $F^{**}$. Since $k$-SAT is
  NP-hard, it follows that $(k,p,q)$-SAT is NP-hard. 
\end{proof}
\fi
 
\begin{theorem}[Dichotomy]\label{the:dich}
  For any $p,q\geq 1$, if   $p+q < 4$ then $(3,p,q)$-\normalfont{SAT}
  is solvable in polynomial time, otherwise
  $(3,p,q)$-\normalfont{SAT}
  is NP-hard.
\end{theorem}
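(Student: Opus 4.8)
The plan is to split the statement into its two halves and dispatch each separately, leaning on results already available in the excerpt.

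\textbf{The polynomial-time cases.} Here $p + q < 4$ means $(p,q) \in \{(1,1),(1,2)\}$ (recall we assume $p \leq q$). For $(1,1)$ every variable occurs at most twice, so every $(3,1,1)$-formula is in particular a $(3,2)$-formula, and $(3,2)$-SAT is trivially in P — indeed, by Tovey's theorem all $(3,3)$-formulas are satisfiable, so a fortiori all $(3,1,1)$-formulas are, and the problem is vacuously easy. The case $(1,2)$ is the substantive one: a $(3,1,2)$-formula is a $(3,3)$-formula, and again by Tovey's result \cite{Tovey84} every $(3,3)$-formula is satisfiable. Hence $(3,1,1)$-SAT and $(3,1,2)$-SAT are solvable in (constant, even) polynomial time — the answer is always ``yes''. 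I would phrase this uniformly: since $p+q \leq 3$ implies every $(3,p,q)$-formula is a $(3,3)$-formula, satisfiability is guaranteed by Tovey.

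\textbf{The NP-hard cases.} Now $p+q \geq 4$. By \Cref{lem:unsat-np} (with $k=3$), it suffices to exhibit, for each relevant pair $(p,q)$ with $p \leq q$ and $p+q \geq 4$, an unsatisfiable $(3,p,q)$-formula. The pairs to cover are $(1,3), (1,4), (1,q)$ for $q \geq 5$; $(2,2), (2,3), \dots$; and $(p,q)$ with $p \geq 3$. I would invoke the monotonicity observations from Section~\ref{subsec:bounds}: $\mkst{k}{p}{q} \leq \min(\mkst{k}{p-1}{q}, \mkst{k}{p}{q-1})$, so it is enough to produce unsatisfiable formulas for the minimal pairs under the coordinatewise order among those with $p+q\geq 4$, namely $(1,3)$ and $(2,2)$. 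For $(2,2)$ the excerpt already points to the known unsatisfiable $(3,2,2)$-formula $M_{3,2,2}$ with 20 clauses \cite{BermanKarpinskiScott03c}. For $(1,3)$, which is exactly the new case driving the ``direct and streamlined'' proof advertised in the introduction, I would cite the unsatisfiable $(3,1,3)$-formula with 22 clauses found by the SMS search (the one the introduction highlights). Every other pair with $p+q\geq 4$ dominates one of these two coordinatewise, so the corresponding class contains (a relabelling of) one of these two unsatisfiable formulas, and \Cref{lem:unsat-np} yields NP-hardness throughout.

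\textbf{Main obstacle.} The genuinely hard content is entirely outsourced: proving that a $(3,1,3)$-formula (equivalently a $(3,2,2)$-formula) of the claimed size is unsatisfiable and that smaller ones do not exist is the job of the SMS computation and the lemmas of Section~\ref{sec:advanced}, not of this theorem. So the proof of \Cref{the:dich} itself is short — the only thing to be careful about is the bookkeeping of which $(p,q)$ pairs need a witness and checking that the monotonicity reduction really covers all of them (in particular not forgetting that $p=1, q\geq 3$ is not dominated by $(2,2)$, which is why the $(3,1,3)$ witness is essential and cannot be replaced by $M_{3,2,2}$). I would therefore structure the proof as: (i) $p+q\leq 3$ $\Rightarrow$ Tovey; (ii) the two minimal witnesses $(1,3)$ and $(2,2)$; (iii) monotonicity to extend to all $p+q\geq 4$; (iv) \Cref{lem:unsat-np}.
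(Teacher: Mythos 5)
Your proposal is correct and matches the paper's own proof: the $p+q\leq 3$ cases are handled by viewing $(3,p,q)$-SAT as a special case of $(3,p+q)$-SAT and citing Tovey, and for $p+q\geq 4$ the paper likewise reduces (w.l.o.g.\ $p\leq q$) to the two minimal witnesses, the unsatisfiable $(3,1,3)$- and $(3,2,2)$-formulas, and applies \Cref{lem:unsat-np}. The only cosmetic difference is that you spell out the coordinatewise-domination bookkeeping slightly more explicitly than the paper does.
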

\begin{proof}
	$(3,p,q)$-\normalfont{SAT} is a special case of $(3,p+q)$-\normalfont{SAT}, and $(3,s)$-SAT is in P for
  $s\leq 3$~\cite{Tovey84}.
  Let $p+q \geq 4$, w.l.o.g., $p\leq q$.
  If $p=1$, then $q\geq 3$, if $p \geq 2$, then $q \geq 2$, and unsatisfiable $(3,1,3)$\hy formulas and $(3,2,2)$\hy formulas exist (which are also $(3,p,q)$\hy formulas for larger $p,q$, see Table~\ref{tab:3-s}).
  NP-hardness follows from \Cref{lem:unsat-np}.
\end{proof}
The NP-hardness part of Theorem~\ref{the:dich} holds
 even for monotone SAT,
where each clause is required to contain only positive or only
negative literals~\cite{DarmannDoecker21},
with the exception of  monotone $(3,1,3)$-SAT, for which van Santvliet and de
Haan~\cite{vansantvliet2023instances} have recently shown that all
instances are satisfiable, and monotone $(3,1,4)$-SAT, which is still open.
Our proof is uniform in the sense that all hardness results rely on~\Cref{lem:unsat-np}.

\section{Compound methods}
\label{sec:advanced}
The core method for finding small unsatisfiable $(k,s)$-CNF and
$(k,p,q)$-CNF formulas is our \emph{SMS encoding} that we have
introduced in \Cref{sec:encoding}. This method is quite powerful and
lets us produce the smallest formulas and exact lower bounds for formula
size. However, since the search space grows very quickly, this method
reaches its limits with formulas of about 10--15 clauses, depending on
the imposed side constraints. In this section, we show how we improve the results from Table~\ref{table:prelim-findings} by combining computational search with theoretical analysis, and with several techniques to decrease the size of the search space. The methods for obtaining upper bounds are \emph{disjunctive splitting} and \emph{combining enforcers}. The methods for obtaining lower bounds are \emph{reductions} and \emph{hard-coding part of the adjacency matrix}. 

Generating $(k, s)$-formulas directly is often prohibitively expensive.
In such cases we also consider $(\leq k,
s)$-formulas that have a few clauses of width smaller than
$k$.
Central to the various techniques we employ in this section is the concept of a stairway.  A \emph{stairway}, first introduced by Hoory and
Szeider~\cite{HoorySzeider05}, is an abstraction of a CNF formula that
focuses only on the clauses that are smaller than a given~$k$. More
specifically, a stairway $\sigma=(a_1,\dots,a_r)$ is a finite
non-increasing sequence of positive integers. For a fixed integer $k$,
a stairway $\sigma=(a_1,\dots,a_r)$ represents the set of all CNF formulas $F=\{C_1,\dots,C_m\}$ where $a_i=k-\Card{C_i}$ for $1\leq i \leq r$, and $\Card{C_i}=k$ for $r+1 \leq i \leq m$. Define $\mkssc{k}{s}{\sigma}$ to be the number of clauses of a smallest unsatisfiable $(\leq k, s)$-formula with stairway $\sigma$, and $\mkstsc{k}{p}{q}{\sigma}$ to be that of the smallest unsatisfiable $(\leq k,p,q)$-formula with stairway $\sigma$. 
 
It is straightforward to adapt $\ksnm{k}{s}{n}{m}$ to encode, instead of a $(k,s)$-formula, a $(\leq k,s)$-formula with a given stairway. Instead of requiring every clause to contain exactly $k$ literals, we require every clause to contain at most $k$ literals, and that a certain number of clauses contain less than $r$ literals, for some $1<r\leq k$. For each $1<r\leq k$, $j\in [n]$, let $N_r$ be the number of clauses that contains strictly less than $r$ literals. We replace $F_5$ with $F'_5$ below. 
\[F'_5:=\bigwedge_{j\in[n]}\lnot\tau_{j,k+1}\land\bigwedge_{1<r\leq k}\sum_{j\in[n]}\lnot\tau_{j,r}=N_r.\]

\subsection{Improved bounds}
\label{sec:theory}
The following theorems refine the bounds on $\mks{k}{s}$ and $\mkst{k}{p}{q}$ using stairways.

\begin{theorem}
\label{k-p-q-theorem}
$\mu(k,p,q)\geq \min \big ( \mu(k,p,q,1^p) + q, \mu(k,p,q-1) \big )$,
where $1^p$ is the stairway of length~$p$ with each entry being a $1$.
\end{theorem}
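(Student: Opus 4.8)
The plan is to take a smallest unsatisfiable $(k,p,q)$-formula $F$, with $\mu(k,p,q) = |F|$ clauses, and to do a case analysis on whether some variable actually attains the maximum negative degree $q$. If no variable of $F$ has negative degree exactly $q$, then every variable occurs negatively at most $q-1$ times, so $F$ is in fact a $(k,p,q-1)$-formula, and hence $\mu(k,p,q) = |F| \geq \mu(k,p,q-1)$, which dominates the right-hand side. So the interesting case is when there is a variable $x$ with $\litdeg{\n{x}}{F} = q$.

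In that case I would form the restriction $F|_{x}$ obtained by setting $x$ to true: delete all $p$ clauses containing $x$, and remove the literal $\n{x}$ from the $q$ clauses containing $\n{x}$, leaving a formula $G$. Since $F$ is minimally unsatisfiable (a smallest unsatisfiable formula must be minimally unsatisfiable, as already noted in the excerpt before Table~\ref{table:prelim-findings}), $F \setminus \occ{F}{x}$ restricted appropriately — more carefully, $F|_x$ — is unsatisfiable: indeed $F$ unsatisfiable means both $F|_x$ and $F|_{\n x}$ are unsatisfiable. Now $G = F|_x$ is a $(\leq k, p, q)$-formula: every variable other than $x$ still occurs at most $p$ times positively and at most $q$ times negatively (restriction only removes occurrences), and $x$ itself is gone; moreover $G$ has exactly $q$ clauses of width $k-1$ (the former $\n x$-clauses shrank by one) and all other clauses have width $k$, so its stairway is $1^q$, not $1^p$. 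This gives $\mu(k,p,q) = |F| = |G| + p \geq \mu(k,p,q,1^q) + p$, which is not quite the claimed bound — the claim has $\mu(k,p,q,1^p)+q$. So I would instead restrict on the \emph{positive} side: since we assume w.l.o.g.\ $p \le q$ and we want the term $\mu(k,p,q,1^p)+q$, take a variable $x$ with $\litdeg{x}{F} = p$ (handling separately, as above, the case where no such variable exists, in which case $F$ is a $(k,p-1,q)$-formula, hence also a $(k,p,q-1)$-formula after the w.l.o.g.\ swap, giving the $\mu(k,p,q-1)$ term), and set $x$ to false. Then $F|_{\n x}$ deletes the $q$ clauses containing $\n x$ and shrinks the $p$ clauses containing $x$ by one literal; the result is an unsatisfiable $(\leq k, p, q)$-formula with stairway exactly $1^p$, giving $\mu(k,p,q) = |F| \geq \mu(k,p,q,1^p) + q$. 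Combining the two cases yields $\mu(k,p,q) \ge \min\big(\mu(k,p,q,1^p)+q,\ \mu(k,p,q-1)\big)$.

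The one subtlety I would be careful about — and I expect it to be the only real obstacle — is the bookkeeping of degrees and the stairway after restriction: one must check that $F|_{\n x}$ has \emph{at least} (really, exactly, but "at least" suffices for the definition of $\mu(\cdot,\sigma)$ via "with stairway $\sigma$") $p$ short clauses and no short clause of width below $k-1$, that no resolution or clause-merging happens (setting a literal to false never merges clauses, it only removes whole clauses or deletes single literals, and a width-$(k{-}1)$ clause cannot coincide with an existing clause without contradicting minimality — though for the lower bound we only need it to be \emph{an} unsatisfiable $(\leq k,p,q)$-formula with that stairway, so even accidental duplicates, which can only decrease the clause count, are harmless if we are careful about the direction of the inequality). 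Also I should double-check the edge case $p = q$ where both branches of the case split could trigger; there it does not matter which one we use. Modulo these routine checks the argument is complete.
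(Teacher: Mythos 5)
There is a genuine gap, and it enters exactly where you ``switched sides.'' Your opening case split was the right one (and is the paper's): if no variable of $F$ attains negative degree $q$, then $F$ is a $(k,p,q-1)$-formula and $|F|\geq\mu(k,p,q-1)$; if some $x$ has $\litdeg{\n{x}}{F}=q$, the correct move is to satisfy that $q$-fold literal, i.e.\ set $x$ to \emph{false}: this deletes exactly the $q$ clauses containing $\n{x}$ and shrinks the at most $p$ clauses containing $x$ to width $k-1$, giving $|F|\geq\mu(k,p,q,1^p)+q$. Your first attempt set $x$ to \emph{true} (hence stairway $1^q$ and a reduction of only $p$), and instead of repairing this by flipping the polarity of the assignment for the same variable, you flipped the anchor of the case split to ``some variable has positive degree exactly $p$.'' That breaks both branches. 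In the main branch, choosing $x$ by $\litdeg{x}{F}=p$ gives no control over $\litdeg{\n{x}}{F}$: setting $x$ to false deletes only $\litdeg{\n{x}}{F}\leq q$ clauses, so you obtain $|F|\geq\mu(k,p,q,1^p)+\litdeg{\n{x}}{F}$, which is strictly weaker than the claimed $\mu(k,p,q,1^p)+q$ whenever $\litdeg{\n{x}}{F}<q$; the assertion ``deletes the $q$ clauses containing $\n{x}$'' is unjustified.

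The fallback branch is also wrong as stated: if no variable has positive degree $p$, then $F$ is a $(k,p-1,q)$-formula, but this is \emph{not} a $(k,p,q-1)$-formula ``after the w.l.o.g.\ swap.'' Flipping all polarities turns a $(k,p-1,q)$-formula into a $(k,q,p-1)$-formula, which satisfies the $(k,p,q-1)$ bounds only when $q\leq p$, i.e.\ only in the case $p=q$; for $p<q$ (e.g.\ $p=2,q=4$: a variable may still occur negatively $4$ times) the inclusion fails, so the $\mu(k,p,q-1)$ term is not obtained, and the inequality $\mu(k,p-1,q)\geq\mu(k,p,q-1)$ you would need is not established by this argument. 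The fix is simply to revert to your original structure: split on whether some negative literal attains degree $q$, and in that case set the corresponding variable to false. (Your side remark that ``at least $p$ short clauses suffices'' for stairway $1^p$ also does not match the paper's definition, where the stairway records clause widths exactly; in the correct argument the restricted formula has $\litdeg{x}{F}\leq p$ short clauses, a point the paper itself treats informally, so I do not count it as the essential defect.)
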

\begin{proof}
An unsatisfiable $(k,p,q)$-formula where at least one literal occurs $q$ times gives an unsatisfiable formula with $p$ ($k-1$)-clauses and shorter in length by $q$ if we set the literal that occurs $q$ times to true. The latter has minimal size $\mu(k,p,q,1^p)$. Taking this into account, we know that $\mu(k,p,q)<\mu(k,p,q-1)$ is only possible if $\mu(k,p,q,1^p) + q < \mu(k,p,q-1)$.
\end{proof}

Recall the concept of an enforcer from Section~\ref{sec:prelim}.
Enforcers can be used to provide upper bounds, as we will show in Theorem~\ref{enforcer-theorem}.
When $k=3$, a
$(3,s)$-enforcer (or a $(3,p,q)$-enforcer) gives rise, by appending the appropriate unit clause, to an
unsatisfiable $(\leq k,s)$-formula (or an unsatisfiable
$(\leq k,p,q)$-formula) with stairway~$(2)$, and thus by searching for formulas with this stairway we can generate enforcers.
In this way, we computed the size of
a smallest enforcer in the classes of $(3,s)$-formulas and
$(3,p,q)$-formulas, and list them in Table \ref{tab:3-enf}.
The minimality of the $(3,3,4)$\hy enforcer was also shown by Jurenka~\cite{Jurenka11} with
a theoretical argument.  The corresponding formulas can be found in
Appendix~\ref{appendix:enforcers}.
\begin{theorem}
\label{enforcer-theorem}
$\mu(k,s)\leq k\cdot(\mu(k,s,(k-1)) - 1) + 1$. Similarly, $\mu(k,p,q)\leq k\cdot(\mu(k,p,q,(k-1)) - 1) + 1$.
\end{theorem}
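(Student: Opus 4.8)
The plan is to prove the two inequalities $\mu(k,s) \leq k \cdot (\mu(k,s,(k-1)) - 1) + 1$ and $\mu(k,p,q) \leq k \cdot (\mu(k,p,q,(k-1)) - 1) + 1$ by the classical ``combine $k$ enforcers with a $k$-clause'' construction outlined already in the introduction, but now tracking sizes carefully through the stairway-$(k-1)$ formulation. First I would unpack what $\mu(k,s,(k-1))$ measures: it is the number of clauses of a smallest unsatisfiable $(\leq k,s)$-formula whose stairway is the single-element sequence $(k-1)$, i.e.\ a formula with exactly one clause of width $k-1$ and all remaining clauses of width exactly $k$. Take such a smallest witness $G$, with $|G| = \mu(k,s,(k-1))$ clauses, and let $C_0 \in G$ be its unique short clause, $|C_0| = k-1$. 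The key observation is that $G' := G \setminus \{C_0\}$ is a $(k,s)$-formula (all clauses now have width $k$) which is satisfiable — indeed $G$ is minimally unsatisfiable (a smallest unsatisfiable formula with a given stairway must be minimally unsatisfiable, by the blocked-clause/deletion argument used throughout) — and every satisfying assignment of $G'$ falsifies $C_0$, hence sets every literal of $C_0$ to false. So $G'$ together with any one literal $\ell \in C_0$ behaves exactly like a $(k,s)$-enforcer whose enforced literal is $\n{\ell}$, and the deficit occurrence count works out because $\ell$ occurs at most $s-1$ times in $G'$ (it occurred at most $s$ times in $G$, once of those in $C_0$).

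Next I would carry out the assembly. Pick $k$ variable-disjoint copies $G'_1, \dots, G'_k$ of $G'$; in copy $i$ designate a literal $\ell_i$ from the (copied) short clause $C_0$, so that in $G'_i$ every satisfying assignment sets $\ell_i$ to false and $\ell_i$ has degree at most $s-1$ in $G'_i$. Now form $H := G'_1 \cup \dots \cup G'_k \cup \{\{\ell_1, \dots, \ell_k\}\}$. The new $k$-clause $\{\ell_1,\dots,\ell_k\}$ has exactly $k$ distinct literals (the copies are variable-disjoint). Unsatisfiability is immediate: any satisfying assignment would have to satisfy each $G'_i$, forcing every $\ell_i$ to false, and then the clause $\{\ell_1,\dots,\ell_k\}$ is violated. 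The occurrence bound holds because in $H$ the literal $\ell_i$ occurs at most $(s-1) + 1 = s$ times — the extra occurrence being in the connecting clause — and every other literal keeps its degree from within a single copy, which is at most $s$. Finally count clauses: $|H| = k \cdot |G'| + 1 = k(\mu(k,s,(k-1)) - 1) + 1$, which gives $\mu(k,s) \leq k(\mu(k,s,(k-1)) - 1) + 1$.

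For the $(k,p,q)$ version the argument is identical except for bookkeeping of positive versus negative occurrences. Here one must be slightly careful about which literal of $C_0$ is chosen and whether appending it to the big clause uses up a positive or a negative occurrence; one picks $\ell_i$ to be the polarity of the corresponding variable that is set to \emph{false} in satisfying assignments, so the connecting clause contains $\ell_i$, and one verifies the degree of $\ell_i$ in $H$ respects the relevant bound ($p$ or $q$) using that $G$ is a $(k,p,q)$-formula so the occurrence was already within budget and one of its occurrences sat in $C_0$. I expect the only real subtlety — the ``main obstacle'', though it is mild — is the justification that a smallest unsatisfiable formula with stairway $(k-1)$ is minimally unsatisfiable and that deleting its unique short clause yields a satisfiable formula in which all literals of $C_0$ are forced false (so that the pieces genuinely act as enforcers); this is exactly the reasoning pattern already invoked for blocked clauses and in the proof of \Cref{lem:unsat-np}, so it transfers directly. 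Everything else is routine disjoint-copy counting.
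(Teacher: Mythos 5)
Your construction and overall strategy are exactly the paper's (take $k$ variable-disjoint copies of an enforcer-like piece and join them by one $k$-clause on the forced-false literals), and your degree bookkeeping and the unsatisfiability argument are fine. The problem is your reading of the stairway notation. In this paper a stairway entry records the \emph{deficit} $a_i = k - |C_i|$, not the width of the short clause; hence a formula with stairway $(k-1)$ has exactly one \emph{unit} clause and all remaining clauses of width $k$ (for $k=3$ this is the stairway $(2)$ that the paper associates with enforcers), whereas a single clause of width $k-1$ corresponds to stairway $(1)$. So the witness $G$ you start from, ``one clause of width $k-1$, rest of width $k$,'' is a smallest formula for stairway $(1)$, and what your argument actually establishes is $\mu(k,s)\leq k\,(\mu(k,s,(1))-1)+1$. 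Since shrinking a $(k-1)$-clause to a unit clause preserves unsatisfiability and only lowers degrees, one has $\mu(k,s,(k-1))\leq\mu(k,s,(1))$, so the inequality you prove is in general \emph{weaker} than the stated one; as written, the proof does not give the theorem.

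The repair is immediate and brings you back to the paper's argument: let $C_0=\{u\}$ be the unit clause of a smallest unsatisfiable $(\leq k,s)$-formula $G$ with stairway $(k-1)$, and set $G'=G\setminus\{C_0\}$; every satisfying assignment of $G'$ must set $u$ false, $u$ has degree at most $s-1$ (resp.\ within the $p$ or $q$ budget minus one) in $G'$, so $G'$ is an enforcer of size $\mu(k,s,(k-1))-1$, and your assembly with the clause $\{u_1,\dots,u_k\}$ over $k$ disjoint copies gives the bound. One further small point: your parenthetical claim that a smallest formula for a given stairway is minimally unsatisfiable is not quite automatic for the short clause (deleting it changes the stairway); to see that $G'$ is satisfiable, argue instead that if $G'$ were unsatisfiable you could shrink one of its $k$-clauses to a unit clause and obtain a smaller unsatisfiable formula with stairway $(k-1)$, contradicting minimality. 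The paper glosses over this step as well (``by definition''), but since you invoke it explicitly, it deserves this one-line justification.
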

\begin{proof}
Let $E$ be a smallest $(k,s)$-enforcer, which by definition has size $\mu(k,s,(k-1)) - 1$ (the $(k,p,q)$ case is analogous).
We can obtain an unsatisfiable $(k,s)$-formula by taking $k$ variable-disjoint copies of $E$ and adding a $k$-clause containing the negated enforced literals. 
\end{proof}

\begin{table}[htb]
    \caption{Smallest size of an unsatisfiable  $(\leq k,1,q)$\hy
      formula for  stairway~(1). 
      }
      \label{tab:stairway_1}
      \av{\medskip}
    \centering
    \setlength\tabcolsep{3mm}
    \begin{tabular}{@{}cccccc@{}}
    \toprule
&$q=1$ & $q=2$ & $q=3$ & $q=4$ & $q \geq 5$ \\
\midrule
$p=1$ & $\infty$ & $\infty$ & 14 or 15 & 13 & 11\\
\bottomrule
    \end{tabular}
\end{table}
We compute the table for $\mkstsc{3}{1}{3}{(1)}$ and
$\mkstsc{3}{1}{3}{(2)}$ through a similar exhaustive search as explained in Section~\ref{sec:compute}, and show the results in Table \ref{tab:stairway_1}.
Combining these tables with Theorems~\ref{k-p-q-theorem} and \ref{enforcer-theorem}, we have the following improvement.

\begin{corollary}
\label{corollary:3-1-3-lower-bound}
$\mkst{3}{1}{3},\mkst{3}{1}{4}\geq 17$ and $\mkst{3}{1}{q}\geq 16$ for all $q\geq 5$.
\end{corollary}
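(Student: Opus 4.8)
The plan is to obtain all three bounds from \Cref{k-p-q-theorem} (instantiated at $k=3$, $p=1$) together with the stairway values tabulated in \Cref{tab:stairway_1}, bootstrapping from $q=3$ upward by a short induction; \Cref{enforcer-theorem} is not needed for the lower bounds here, it only contributes matching upper bounds elsewhere. The first observation is that when $p=1$ the stairway $1^p$ appearing in \Cref{k-p-q-theorem} is just the single-step stairway $(1)$, so the theorem specializes to
\[
\mkst{3}{1}{q}\;\geq\;\min\bigl(\mkstsc{3}{1}{q}{(1)}+q,\ \mkst{3}{1}{q-1}\bigr).
\]

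I would handle $q=3$ first. The displayed inequality gives $\mkst{3}{1}{3}\geq\min(\mkstsc{3}{1}{3}{(1)}+3,\ \mkst{3}{1}{2})$. By \Cref{tab:stairway_1}, $\mkstsc{3}{1}{3}{(1)}$ is $14$ or $15$, so the first argument of the minimum is at least $17$; and $\mkst{3}{1}{2}=\infty$, because a $(3,1,2)$-formula is a $(3,3)$-formula and every $(3,3)$-formula is satisfiable by Tovey~\cite{Tovey84}. Hence $\mkst{3}{1}{3}\geq17$. For $q=4$ the table gives $\mkstsc{3}{1}{4}{(1)}=13$, so $\mkst{3}{1}{4}\geq\min(13+4,\ \mkst{3}{1}{3})=\min(17,17)=17$, using the bound just established.

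Finally, for $q\geq5$ I would induct on $q$. Here \Cref{tab:stairway_1} gives $\mkstsc{3}{1}{q}{(1)}=11$ for all $q\geq5$, so the first argument of the minimum is $11+q\geq16$. The base case $q=5$ reads $\mkst{3}{1}{5}\geq\min(16,\ \mkst{3}{1}{4})\geq\min(16,17)=16$; for $q>5$ the induction hypothesis $\mkst{3}{1}{q-1}\geq16$ gives $\mkst{3}{1}{q}\geq\min(11+q,\ 16)=16$. This settles all three claims.

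There is essentially no obstacle in the argument itself: once \Cref{k-p-q-theorem} and \Cref{tab:stairway_1} are available, it is a two-line computation plus a trivial induction. The real work lies behind \Cref{tab:stairway_1}, namely the exhaustive SMS searches certifying $\mkstsc{3}{1}{q}{(1)}\geq14,13,11$ for $q=3$, $q=4$, and $q\geq5$, which are carried out beforehand. The one subtlety worth noting is that the ``$14$ or $15$'' ambiguity for $\mkstsc{3}{1}{3}{(1)}$ does no harm, since only its lower value $14$ enters the estimate — but it is precisely why the corollary claims only $\mkst{3}{1}{3}\geq17$ rather than the eventual exact value $22$.
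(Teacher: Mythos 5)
Your derivation is correct and matches the paper's intended argument: the paper obtains this corollary precisely by combining Theorem~\ref{k-p-q-theorem} (with $p=1$, so the stairway is $(1)$) with the values in Table~\ref{tab:stairway_1}, exactly as you do. You simply spell out the details the paper leaves implicit — the base case $\mkst{3}{1}{2}=\infty$ via Tovey, the chaining $q=3\to4$, and the induction for $q\geq5$ — and you are right that Theorem~\ref{enforcer-theorem} is only needed for the companion upper-bound corollary.
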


\begin{corollary}
\label{corollary:3-1-3-upper-bound}
$\mkst{3}{1}{3} \leq 22, \mkst{3}{1}{4}\leq 19$ and $\mkst{3}{1}{q}\leq 16$ for all $q\geq 5$.
\end{corollary}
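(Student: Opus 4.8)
The plan is to establish each of the three inequalities by exhibiting (or constructing via the techniques of this section) a suitable unsatisfiable $(3,1,q)$\hy formula of the claimed size. For the bound $\mkst{3}{1}{q}\leq 16$ when $q\geq 5$, I would simply invoke the chain of inequalities $\mks{3}{p+q}\leq\mkst{3}{p}{q}$ together with the known unsatisfiable $(3,4)$\hy formula $E_{3,4}$ with 16 clauses: since a $(3,4)$\hy formula has each variable occurring at most 4 times total, it is in particular a $(3,1,q)$\hy formula once $q\geq 3$ provided no variable has more than one positive (or more than one negative) occurrence — so a little care is needed here. More robustly, I would observe that $E_{3,4}$, or an equivalent small formula, can be taken with each variable occurring at most once with one sign and at most $q-1\leq 4$ times with the other, which makes it a $(3,1,q)$\hy formula for all $q\geq 5$; alternatively one constructs it directly from a $(3,1,5)$\hy enforcer of size 5 (Table~\ref{tab:3-enf}) via Theorem~\ref{enforcer-theorem}, giving $3\cdot(5-1)+1 = 13 \leq 16$, which is even stronger, so the $16$ bound is comfortably met.

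For $\mkst{3}{1}{4}\leq 19$ and $\mkst{3}{1}{3}\leq 22$, the natural route is Theorem~\ref{enforcer-theorem} combined with the values in Table~\ref{tab:3-enf}: a smallest $(3,1,4)$\hy enforcer has $7-1 = 6$ clauses, and a smallest $(3,1,3)$\hy enforcer has $7$ clauses. Plugging into $\mu(k,p,q)\leq k\cdot(\mu(k,p,q,(k-1))-1)+1$ with $k=3$ gives $\mkst{3}{1}{4}\leq 3\cdot 6 + 1 = 19$ and $\mkst{3}{1}{3}\leq 3\cdot 7 + 1 = 22$. One must check that the resulting formula — take three variable-disjoint copies of the enforcer and add the $3$-clause on the three negated enforced literals — genuinely stays within the $(3,1,q)$ occurrence bound: the new $3$-clause adds exactly one occurrence of each enforced literal's negation, and by the definition of a $(k,p,q)$\hy enforcer that negated literal had degree $<p$ (or $<q$) in the enforcer, so the bound is preserved; all other variables are untouched. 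This is exactly the completion step already described after Theorem~\ref{enforcer-theorem}, so the argument is short.

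The only real obstacle is sourcing the enforcers of the stated sizes, i.e., justifying the relevant entries of Table~\ref{tab:3-enf} (sizes $7$ for $(3,1,3)$, $6$ for $(3,1,4)$, $5$ for $(3,1,\geq 5)$) — but these come from the exhaustive SMS search for formulas with stairway $(2)$ described just before Theorem~\ref{enforcer-theorem}, and the explicit formulas are in Appendix~\ref{appendix:enforcers}; so for the purposes of this corollary I would cite them as given. In the write-up I would present it as: ``Apply Theorem~\ref{enforcer-theorem} to the $(3,1,q)$\hy enforcers of sizes $7,6,5$ for $q=3,4,\geq5$ respectively (Table~\ref{tab:3-enf}), obtaining unsatisfiable $(3,1,q)$\hy formulas of sizes $22,19,16$.'' The small verification that the completed formula respects the occurrence bounds is the one place to be explicit, and it is immediate from the enforcer definition.
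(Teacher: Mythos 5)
Your route is the same as the paper's: the paper derives this corollary by plugging the enforcer sizes from Table~\ref{tab:3-enf} into Theorem~\ref{enforcer-theorem}, and your derivations of $\mkst{3}{1}{3}\leq 3\cdot 7+1=22$ and $\mkst{3}{1}{4}\leq 3\cdot 6+1=19$, including the check that the completed formula respects the occurrence bounds, are exactly that argument.

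However, your treatment of the case $q\geq 5$ is flawed in both of the routes you offer. First, the computation $3\cdot(5-1)+1=13$ misapplies Theorem~\ref{enforcer-theorem}: the entry $5$ in Table~\ref{tab:3-enf} is already the size of the enforcer, i.e.\ already equal to $\mu(3,1,q,(2))-1$, so the theorem yields $3\cdot 5+1=16$, not $13$. (You are in fact inconsistent: for the $22$ and $19$ bounds you correctly treat the table entries $7$ and $6$ as enforcer sizes, while here you treat $5$ as $\mu(3,1,q,(2))$ itself.) The bound $13$ is not merely an overshoot but false, since \Cref{corollary:3-1-3-lower-bound} gives $\mkst{3}{1}{q}\geq 16$ for $q\geq 5$. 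Second, the detour through a $16$-clause unsatisfiable $(3,4)$-formula does not work as stated: a $(3,4)$-formula may contain variables of type $(2,2)$, which no polarity flip turns into type $(1,q)$, and the known $16$-clause example indeed has such a variable (in the paper's $\MU(1)$ derivation it arises as $F_4\oplus_{2,2}F_4$); your assertion that ``an equivalent small formula can be taken'' with the required occurrence pattern is exactly what needs proof. The repair is immediate: apply Theorem~\ref{enforcer-theorem} consistently to the size-$5$ $(3,1,\geq 5)$-enforcer, obtaining three disjoint copies plus one $3$-clause, i.e.\ $3\cdot 5+1=16$ clauses, which is the witness the corollary needs (and matches the explicit formula $M^1_{3,1,5}$ in the appendix).
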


\subsection{Disjunctive splitting}
\label{sec:splitting}
We say a CNF formula $F$ is obtained by \emph{disjunctive splitting in $x$} from CNF formulas $F_1,F_2$, in symbols $F=F_1 \oplus F_2$, if $F$ can be partitioned into two nonempty sets $F_1',F_2'$ such that the variable $x$ occurs in $F_1'$
positively but not negatively, and appears in $F_2'$ negatively but
not positively, and $F_i$ is obtained from $F_i'$ with all occurrences
of $x,\ol{x}$ removed. Observe that if $F_1,F_2$ are unsatisfiable,
then also $F= F_1' \cup F_2'$ is unsatisfiable. Hence, when
constructing an unsatisfiable $(k,s)$\hy CNF or $(k,p,q)$-CNF formula,
we can first try to construct $(\leq k,s)$-formulas or $(\leq k,p,q)$-formulas $F_1,F_2$ and then combine them to obtain~$F$.

If $F$ is obtained by disjunctive splitting in $x$ from $F_1,F_2$, and
$x$ is added positively to $p$ clauses in $F_1$ and negatively to $q$
clauses in $F_2$, we write $F = F_1 \oplus_{p,q} F_2$. Disjunctive
splitting can be recursively applied to $F_1$ and $F_2$.  This allows
us to construct a formula from \emph{axioms} which are CNF formulas that we do not further split.  For example, the $(2,4)$-CNF formula
$\{\{x,y\}$, $\{\ol{x},y\}$, $\{\ol{y},z\}$, $\{\ol{y},\ol{z}\}\}$ can be
constructed from the axiom $\{\nil\}$. 

We can describe the
construction by an \emph{$\oplus$-derivation}, an algebraic expression
$ (\{\nil\} \oplus_{1,1} \{\nil\}) \oplus_{2,2} (\{\nil\} \oplus_{1,1}
\{\nil\})$. In fact, $\MU(1)$ is exactly the class of all formulas
that can be constructed by disjunctive splitting from the axiom
$\{\nil\}$.

This idea was utilized by Hoory and Szeider~\cite{HoorySzeider05}, who
proposed an algorithm that decides for given $k,s$ whether
$\CNF{(k,s)}\cap \MU(1) \neq \emptyset$. This allows us to compute an
upper bound on the threshold function $f(k)$.  
Hoory and Szeider define $\oplus$-derivations to operate on stairways instead of formulas. 
For $k=3$, the above $\oplus$-derivation would now read
$ ((3) \oplus_{1,1} (3)) \oplus_{2,2} ((3) \oplus_{1,1} (3))$ and
produce the stairway $(1,1,1,1)$.  By means of a saturation algorithm, Hoory and
Szeider could determine the upper bounds 
$f(3)\leq 3$, $f(4)\leq 4$, $f(5)\leq 7$, $f(6)\leq 11$,
$f(7)\leq 17$, $f(8)\leq 29$, and $f(9)\leq 51$
on the threshold function $f(k)$ (i.e.,
all $(k,f(k))$-formulas are satisfiable but $(k,f(k)+1)$-SAT
is NP-complete), which are still the
best known upper bounds.

In this paper, we generalize Hoory and Szeider's method in the following ways:
\av{
\begin{enumerate}
	\item we consider $\oplus$-derivations with more axioms: any unsatisfiable $(\leq k,s)$-formula can serve as an axiom;
	\item we modify the saturation algorithm so that it gives the size of the smallest unsatisfiable $(k,s)$-formula derivable with respect to the sizes of the formulas that serve as axioms;
	\item we adapt the algorithm also for searching for unsatisfiable $(k,p,q)$-formulas.
        \end{enumerate}}
      \cv{(i)~we consider $\oplus$-derivations with more axioms: any unsatisfiable $(\leq k,s)$-formula can serve as an axiom;
	(ii)~we modify the saturation algorithm so that it gives the size of the smallest unsatisfiable $(k,s)$-formula derivable with respect to the sizes of the formulas that serve as axioms;
	(iii)~we adapt the algorithm also for searching for unsatisfiable $(k,p,q)$-formulas.}
		We do this in the hope that, by finding suitable axioms with the SMS encoding, we can incorporate them into the $\oplus$-derivations to obtain smaller unsatisfiable $(k,s)$ or $(k,p,q)$-formulas.

Table \ref{table:smallest-MU-1} shows the size of smallest unsatisfiable $(3,s)$ and $(3,p,q)$-formulas generated by disjunctive splitting with $\{\nil\}$ being the only axiom. The corresponding $\oplus$-derivations can be found in the appendices. We determined $\mu(3,s,1^r)$ for all $1\leq r<s$ and $\mu(3,p,q,1^r)$ for all $1\leq r<q$, but this did not yield any new upper bounds. We were more successful with the method of disjunctive splitting in the case of $k=4$, which we discuss in Section \ref{sec:smaller_4_5}.

\begin{table}[htb]
      \setlength\tabcolsep{1.9mm}
\caption{Size of the smallest  $(3,s)$-formulas and $(3,p,q)$-formulas in $\MU(1)$.} 
\label{table:smallest-MU-1}
\av{\medskip}
\centering
  \begin{tabular}{@{}cccccc@{}}
\toprule
$s\leq 3$ & $s=4$ & $s=5$ & $s=6$ & $s=7$ & $s\geq 8$\\
\midrule
$\infty$ & 16 & 12 & 10 & 9 & 8\\ 
  \bottomrule\\
  \phantom{123} &&&&&\\
  \phantom{123} &&&&&\\
\end{tabular}\hfill%
\begin{tabular}{@{}cccccc@{}}
    \toprule
&$q=1$ & $q=2$ & $q=3$ & $q=4$ & $q\geq 5$ \\
\midrule
$p= 1$ & $\infty$ & $\infty$ & 22 & 19 & 16\\
$p=2$ &-& $\infty$ & 12 & 10 & 10\\
$p=3$ & - & - & 10 & 9 & 9 \\
$p\geq 4$ & - & - & - & 8 & 8 \\
\bottomrule
    \end{tabular}
\end{table}

\subsection{Hard-coding part of the matrix}
After we narrowed down the search scope with tighter bounds, this and the next part deal with deciding the missing values in Table~\ref{tab:3-s} and the techniques involved. These techniques allow us to determine the existence of unsatisfiable $(3, 4)$-formulas (or $(3,2,2)$,$(3,1,q)$-formulas) whose size is too large to be exhaustively searched by SMS directly. 

In this part, we determine the value of $\mu(3,1,3)$ and $\mu(3,1,3)$. The technique here is hard-coding the part of the matrix that corresponds to the occurrences of some variables/literals. The motivation is that with less undecided values in the matrix to solve, SMS terminates more quickly. Suppose we want to fix both positive and negative occurrences of $\fixvar$ variables $x_1,x_2,\dots,x_{\fixvar}$ and only the positive occurrences of $\fixlit$ other variables $x_{\fixvar+1},x_{\fixvar+2},\dots,x_{\fixvar+\fixlit}$. To do this in a way that is compatible with the minimality check of SMS, we need to adjust the following two things. First, when generating the encoding, we stipulate that the first rows in the matrix correspond to $x_1,\overline{x}_1,x_2,\overline{x}_2,\dots,x_{\fixvar},\overline{x}_{\fixvar},x_{\fixvar+1},x_{\fixvar+2},...x_{\fixvar+\fixlit}$ and adjust pos and neg to the following.
\[\pos{i}:=\begin{cases}2i-1&\text{if $i\leq
    \fixvar$},\\i+\fixvar&\text{otherwise;}\end{cases}\qquad
  \nega{i}:=\begin{cases}2i& \text{if $i\leq \fixvar$,}\\2n+\fixvar-i+1&\text{otherwise.}\end{cases}\] Second, we start SMS with the minimality check restricted to the refined partition \[\{\{1\},\{2\},\dots,\{2\fixvar+\fixlit\},[2\fixvar+\fixlit+1,2n],[2n+1,2n+m]\} \] of the original $\{[1,2n],[2n+1,2n+m]\}$. Given the specific assumption on the variables and literals fixed, the minimality condition will determine the values of the first $2\fixvar+\fixlit$ rows in the matrix. An example to this will be given shortly after.
  
\begin{figure}
    \centering
    \small
\begin{tblr}{@{}c|c|[dashed]ccc|[dashed]ccc|[dashed]ccc|[dashed]c@{}}
$\matr{i}{2n+j}$&$1$\dots&$m-9$&&&$m-6$&&&$m-3$&&&$m$\\
\hline
$1\ (x_1)$&&&&&&&&&&&1\\
$2\ (\n{x_1})$&&&&&&&&1&1&1&\\
\hline[dashed]
$3\ (x_2)$&&&&&&&&&&&1\\
$4\ (\n{x_2})$&&&&&1&1&1&&&&\\
\hline[dashed]
$5\ (x_3)$&&&&&&&&&&&1\\
$6\ (\n{x_3})$&&1&1&1&&&&&&&\\
\hline[dashed]
$7\ (a_1)$&&&&&&&&&&1&\\
$8\ (b_1)$&&&&&&&&&&1&\\
\hline[dashed]
$9\ (a_2)$&&&&&&&1&&&&\\
$10\ (b_2)$&&&&&&&1&&&&\\
\hline[dashed]
$11\ (a_3)$&&&&1&&&&&&&\\
$12\ (b_3)$&&&&1&&&&&&&\\
\end{tblr}
	\caption{The first rows of the matrix determined as a result of our choice for $\pos{i}$ and $\nega{i}$. The omitted values are all $0$.}
    \label{fig:fixed-matrix}
\end{figure}
To determine the value of $\mu(3,1,3)$ and $\mu(3,1,4)$, we prove the following lemma that argues about the presence of certain ``partially known'' $\leq k$-CNFs, so we can fix them. As explained in Section~\ref{par:prelims-cnf}, we write $\leq k$-CNFs in matrix form.  To argue about
``partially known'' $\leq k$-CNFs, we extend the
matrix notation to what is essentially a first-order language of
$\leq k$-CNFs.  We use lowercase letters $x,y,z,a,b,c, \dots$ to
denote symbols to be interpreted by propositional literals, positive
or negative.  We use $\n{\phantom{x}}$ to denote negation: if $x$ is
interpreted as some literal, then $\n{x}$ must be interpreted as its
negation.  Two different literal symbols may be interpreted by
different literals, or they may be interpreted by the same literal, or
even by the two literals of the same variable.  When a position in the
matrix is left blank, we leave the corresponding literal
unconstrained.  We then say a formula $F$ is \emph{of the form} $M$ if
the symbols in the matrix $M$ can be interpreted by the literals of
$F$ to yield a matrix of $F$.

We say a clause $C\in F$ is \emph{singular} if $\ldeg{x}=1$ for all $x\in C$.

\begin{lemma}
\cv{[\cvstar]}
\label{lemma:fixing-3-1-3}
Let $q\geq 3$ and let $F$ be a smallest unsatisfiable $(3,1,q)$-formula.
\begin{enumerate}
\item There is a singular clause in $F$.
\item If $\begin{psmallmatrix}
    x_1 \\
    x_2 \\
    x_3 \\
\end{psmallmatrix}\in F$ is singular, then $F[x_i]\cap F[x_j]=\left\{\begin{psmallmatrix}
    x_1 \\
    x_2 \\
    x_3 \\
\end{psmallmatrix}\right\}$ for any $i\not=j\in\{1,2,3\}$, and $\ldeg{\n{x_1}}=\ldeg{\n{x_2}}=\ldeg{\n{x_3}}\geq 3$.
\item Let $\begin{psmallmatrix}
    x_1 \\
    x_2 \\
    x_3 \\
\end{psmallmatrix}\in F$ be a singular clause and let $a$ be a literal such that $\ldeg{a} = 1$ and $\var(a)\not=\var(x_i)$ for all $i\in\{1,2,3\}$. If $a\in\bigcup F[x_i]$, then $\n{a}\not\in\bigcup F[x_j]$ for any $i\not=j\in\{1,2,3\}$. 
\item If there is a unique singular $\begin{psmallmatrix}
    x_1 \\
    x_2 \\
    x_3 \\
\end{psmallmatrix}\in F$, then $\bigcup_{i=1,2,3}F[x_i]$ is of the form \\$\begin{psmallmatrix}
    x_1 & \overline{x}_1 & \overline{x}_1 & \overline{x}_1 & \overline{x}_2 & \overline{x}_2 & \overline{x}_2 & \overline{x}_3 & \overline{x}_3 & \overline{x}_3 \\
    x_2 & a_1 & & & a_2 & & & a_3\\
    x_3 & b_1 & & & b_2 & & & b_3\\
\end{psmallmatrix}$ where  $\ldeg{a_i}=\ldeg{b_i}=1$ for all $i\in\{1,2,3\}$.
\end{enumerate}
\end{lemma}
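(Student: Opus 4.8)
\textbf{Proof plan for Lemma~\ref{lemma:fixing-3-1-3}.}

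The plan is to prove the four parts in order, each building on the previous, using three recurring tools: minimal unsatisfiability (no blocked clauses, every clause needed), the bound $\mkst{3}{1}{q}\geq 16$ from \Cref{corollary:3-1-3-lower-bound} together with $\mkstsc{3}{1}{q}{(1)}$-type values from \Cref{tab:stairway_1}, and variable elimination/assignment arguments of the style used in \Cref{lemma:model-bound-asym}. For part~(1), I would argue that $F$ must contain a singular clause by a counting/elimination argument: if every clause had some literal of degree $\geq 2$, then (since $p=1$, each variable occurs positively exactly once) one can find a literal whose removal or whose variable's elimination reduces the formula to a strictly smaller unsatisfiable $(\leq 3,1,q)$ or $(3,1,q')$-formula with $q' < q$, contradicting either minimality of $F$ or the lower bounds; alternatively, a deficiency/blocked-clause argument shows a clause all of whose literals are unique must exist. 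The key fact driving this is that a positive occurrence is always unique, so "singular" only constrains the negative literals in the clause.

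For part~(2): given a singular clause $\begin{psmallmatrix}x_1\\x_2\\x_3\end{psmallmatrix}$, each $x_i$ occurs only here (degree~1), so $F[x_i]$ consists of this clause plus the clauses containing $\n{x_i}$; the intersection $F[x_i]\cap F[x_j]$ can only be this one clause unless some clause contains two of $\n{x_1},\n{x_2},\n{x_3}$ — but such a clause would make the singular clause blocked on the corresponding $x_i$ (the resolvent being tautological), contradicting minimal unsatisfiability, so I rule that out. For the degree equality $\ldeg{\n{x_1}}=\ldeg{\n{x_2}}=\ldeg{\n{x_3}}$: setting the singular clause's literals appropriately and eliminating $x_1,x_2,x_3$ (each a pure-after-removal-of-one-clause variable) should symmetrically collapse the formula; more directly, if the negative degrees differed, say $\ldeg{\n{x_1}}<\ldeg{\n{x_2}}$, resolving away $x_1$ first yields a smaller formula contradicting a size bound. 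The inequality $\geq 3$ comes from the fact that a formula where some $\n{x_i}$ had degree $\leq 2$ would be too small or reducible. This is the step I expect to be the main obstacle, since pinning down the exact value $3$ (rather than just $\geq 2$) requires carefully combining the global lower bound $16$ with the local structure and likely a short case analysis on what happens after eliminating $x_1$, $x_2$, $x_3$.

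For part~(3): suppose a unit-degree literal $a$ with $\var(a)\notin\{\var(x_1),\var(x_2),\var(x_3)\}$ occurs in some clause of $F[x_i]$ (necessarily a clause containing $\n{x_i}$, since the singular clause is $\{x_1,x_2,x_3\}$), and $\n{a}$ occurs in some clause of $F[x_j]$ with $j\neq i$. Then resolving on $a$ (legal since $\ldeg{a}=\ldeg{\n{a}}=1$, so elimination just merges these two clauses) together with eliminating $x_i,x_j$ should produce a smaller unsatisfiable formula or expose a blocked clause, contradicting minimality — I would spell this out by tracking that the merged clause plus the resolvents on $x_i$ and $x_j$ give a $(\leq 3,1,q)$-formula of size below the relevant bound. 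For part~(4): assuming the singular clause is unique, I assemble the matrix. By part~(2), each $\n{x_i}$ has degree exactly $3$ (I would argue it cannot exceed $3$ here: a fourth occurrence would, via part~(3) and uniqueness of the singular clause, force a second singular clause or a blocked clause), giving the $1+3+3+3=10$ columns shown. The clause containing $x_i$ is the singular one (first column). Each of the three clauses containing $\n{x_i}$ contributes two further literals $a_i,b_i$ (row~2 and row~3 entries); I must show each such $a_i,b_i$ has degree~$1$ — otherwise that clause is singular or the repeated-literal structure contradicts part~(3) or uniqueness — and that distinct $i$ use disjoint fresh literal symbols, again by part~(3) (a shared unit literal across $F[x_i]$ and $F[x_j]$ is forbidden) and part~(2) (no $\n{x_j}$ reappears). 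Finally I would double-check the width bookkeeping: every displayed clause has exactly three literals, consistent with $F$ being a $(3,1,q)$-formula, so no width-$2$ artifacts appear and the "form" is exactly as claimed.
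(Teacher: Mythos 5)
Your plan identifies some of the right local tools (blockedness, variable elimination, minimality), but the two ideas that actually carry the paper's proof are missing, and several of your concrete steps fail. The first missing idea is the assignment argument: since $p=1$, any literal of degree at least $2$ is a negative literal whose complement occurs at most once, so the assignment $\alpha$ setting every literal of degree $\geq 2$ to true is well defined; if no singular clause existed, $\alpha$ would satisfy $F$. This gives part (1) for \emph{every} unsatisfiable $(3,1,q)$-formula, with no minimality, counting, elimination, or appeal to the bound $16$ needed --- your proposed routes for (1) are not developed and do not obviously work. The second missing idea is the decomposition used for parts (3) and (4): define $F_i$ from $F$ by setting $x_i$ true and $x_j$ false for $j\neq i$; because each $x_i$ occurs only in the singular clause and, by part (2), no clause contains two of $\n{x_1},\n{x_2},\n{x_3}$, $F$ is unsatisfiable iff all three $F_i$ are. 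For part (3) the paper then notes that since $\ldeg{a}=1$ and the unique occurrence of $a$ lies in $F[x_i]$, which disappears in $F_j$, the literal $\n{a}$ is pure in $F_j$, so the clauses of $F[x_j]$ containing $\n{a}$ can be deleted, contradicting minimality of $F$. Your version instead assumes $\ldeg{\n{a}}=1$ (only $\ldeg{a}=1$ is given; $\n{a}$ may occur up to $q$ times), and your fallback of tracking ``the merged clause plus the resolvents'' against ``the relevant bound'' names no bound that applies to the resulting $(\leq 4)$-CNF; this step fails as proposed. For part (4), what is needed is that for each $i$ \emph{some} clause $\{\n{x_i},a_i,b_i\}$ has $\ldeg{a_i}=\ldeg{b_i}=1$; the paper gets this because, by uniqueness of the singular clause, $\alpha$ satisfies every clause outside $F[x_1]\cup F[x_2]\cup F[x_3]$, so unsatisfiability of $F_i$ must be witnessed by such a clause. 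You instead try to prove that \emph{every} companion literal has degree $1$ and that $\ldeg{\n{x_i}}$ equals exactly $3$; neither is asserted by the matrix form (blank entries are unconstrained), your justification (``that clause is singular'') cannot hold since $\ldeg{\n{x_i}}\geq 3$, and ``a fourth occurrence is impossible'' contradicts the paper's own use of the lemma for $q=4$, where it splits cases on how many $\n{x_i}$ occur four times.

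Two smaller points. In part (2) you attach blockedness to the wrong clause: the singular clause need not be blocked on $x_i$ (that would require every clause containing $\n{x_i}$ to clash with it on another variable); the correct, easy observation is that a clause containing both $\n{x_i}$ and $\n{x_j}$ is itself blocked on $\n{x_i}$, because the only clause containing $x_i$ is the singular one and the two clash on $x_j$. Your elimination idea for $\ldeg{\n{x_i}}\geq 3$ matches the paper's (eliminating $x_i$ when $\ldeg{\n{x_i}}\leq 2$ yields a smaller unsatisfiable formula in the same class, after shrinking the $4$-literal resolvents), and no global bound is needed; but your argument for the equality of the three degrees (eliminate the variable of smaller degree and ``contradict a size bound'') does not work, since elimination produces $4$-clauses, no applicable size bound is cited, and the same reasoning would apply even when the degrees agree.
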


\av{
\begin{proof}
For the first part, note that any unsatisfiable $(k,1,q)$-formula must have a clause consisting of literals that occur only once, otherwise an assignment $\alpha$ that sets all literals that occur more than once to true is an satisfying assignment (variables of type $(1, 1)$ can be assigned arbitrarily).

For the second part, let $F$ be a smallest unsatisfiable $(3,1,q)$-formula and let $\begin{psmallmatrix}
    x_1 \\
    x_2 \\
    x_3 \\
\end{psmallmatrix}\in F$ such that $\ldeg{x_1}=\ldeg{x_2}=\ldeg{x_3}=1$. Then no $\overline{x}_i,\overline{x}_j$ for any $i\not= j\in\{1,2,3\}$ can occur together in a clause, otherwise that clause is blocked on $x_i$ and therefore can be removed to obtain a smaller unsatisfiable $(3,1,q)$-formula.
Also, $\overline{x}_i$ has to occur at least 3 times, otherwise we can obtain a smaller unsatisfiable $(3,1,q)$-formula by eliminating $x_i$. So far, we have established part~2.

For all $i\in\{1,2,3\}$, let $F_i$ be the unsatisfiable formula obtained from $F$ by setting $x_i$ and $\overline{x}_j$ for all $j\not=i\in\{1,2,3\}$ to true.
Note that $F$ is unsatisfiable if and only if all $F_1$, $F_2$ and $F_3$ are unsatisfiable.

To see the third part, suppose $a\in\bigcup F[x_i]$ and $\n{a}\in\bigcup F[x_j]$ for some $i\not=j\in\{1,2,3\}$. Since $\ldeg{a}=1$, it follows that $a$ does not occur in $F_j$.
This means that $F_j$ is still unsatisfiable if we remove the clauses in $F[x_j]$ that contain $\n{a}$ from it, which in turn implies that $F$ is still unsatisfiable if we remove from $F$ the clauses in $F[x_j]$ that contain $\n{a}$.
This contradicts our assumption that $F$ is a smallest unsatisfiable $(3,1,q)$-formula.

For the last part, suppose there is a unique singular $\begin{psmallmatrix}
    x_1 \\
    x_2 \\
    x_3 \\
\end{psmallmatrix}\in F$. Let $F':=F/(F[x_1]\cup F[x_2]\cup F[x_3])$. Note that the assignment $\alpha$ mentioned in part 1 satisfies all clauses in $F'$. This means that for each $i\in\{1,2,3\}$, in order for $F_i$ to be unsatisfiable, there has to be $\begin{psmallmatrix}
    \n{x_i} \\
    a_i \\
    b_i \\
\end{psmallmatrix}\in F[x_i]$ such that $\ldeg{a_i}=\ldeg{b_i}=1$.
\end{proof}
}

When searching for a smallest unsatisfiable $(3,1,3)$-formula, we distinguish the following two cases, depending on whether there is a unique singular clause. If there is, then we hard-code the partial formula $\begin{psmallmatrix}
    x_1 & \overline{x}_1 & \overline{x}_1 & \overline{x}_1 & \overline{x}_2 & \overline{x}_2 & \overline{x}_2 & \overline{x}_3 & \overline{x}_3 & \overline{x}_3 \\
    x_2 & a_1 & & & a_2  & & & a_3 & & \\
    x_3 & b_1 &  &  & b_2& & & b_3 &  & \\
\end{psmallmatrix}$ in the matrix by setting the first $12$ rows to represent $x_1,\overline{x}_1,x_2,\overline{x}_2,x_{3},\overline{x}_{3},a_1,b_1,a_2,b_2,a_3$ and $b_3$,
starting with the ordered partition $\{\{1\},\{2\},\dots,$ $
\{12\},[13,2n],[2n+1,2n+m]\}$, and hard-coding the first $12$ rows of the matrix thereby determined. The fixed rows are shown in Figure~\ref{fig:fixed-matrix} as an example. Otherwise, there are more than one singular clause and we hard-code the partial formula $\begin{psmallmatrix}
    x_1 & \overline{x}_1 & \overline{x}_1 & \overline{x}_1 & \overline{x}_2 & \overline{x}_2 & \overline{x}_2 & \overline{x}_3 & \overline{x}_3 & \overline{x}_3 & y_1\\
    x_2 & & & & & & &  & & & y_2\\
    x_3 & & & & & & & & & & y_3\\
\end{psmallmatrix}$. When searching for a smallest unsatisfiable $(3,1,4)$-formula, we follow the same rationale but split both of these cases into 4 cases depending on the size of $n_5^*=\{x_i\,|\,\vdeg{x_i}=5,i\in\{1,2,3\}\}$.
\begin{table}
    \caption{Time spent in search for unsatisfiable
      $(3,1,3)$-formulas for different numbers of variables and
      clauses after hard-coding the first rows as described. All queries were unsatisfiable except the one marked in blue with $m = 22$ and $n = 21$. For each $m$, the lower bound for the
      choice of $n$ is by Lemma \ref{lemma:ks-mu-var-bound}, and the
      upper bound is $m-2$ since we know the smallest size of a
      $(3,1,3)$-formula in
      $\MU(1)$. There are two values for each pair of $m$ and $n$. The values on the left indicate the time spent on the case where we assume there is a unique singular clause. The values on the right indicate that of the case where we assume there are at least two singular clauses. The cases superscribed with $\star$ and $\blacktriangle$ did not terminate within a timeout of 6 days and were further split into sub-cases and time shown is the sum of time spent on each of the sub-cases. The cases superscribed with $\star$ are split into 1258 cases in terms of where $\n{a_1},\n{b_1},\n{a_2}$ and $\n{a_2}$ occur, modulo symmetries. The ones superscribed with $\blacktriangle$ are split into 136 cases in terms of where $\n{y_1}, \n{y_2}$ and $\n{y_3}$ occur, modulo symmetries. 
      It is worth noting that both sets of case distinctions are generated automatically without symmetry by reformulating them as graph problems and giving the corresponding encoding to SMS. An unsatisfiable formula is found in this case, assuming there exists a unique singular clause. Upon inspection, this formula is revealed to be composed of 3 enforcers in the spirit of Theorem~\ref{enforcer-theorem}.} 
    \label{tab:time-3-1-3}
    \centering
\av{\medskip}
    \setlength\tabcolsep{2.3mm}

	\begin{tabular}{@{}rr@{\hspace{0.4em}}rr@{\hspace{0.4em}}rr@{\hspace{0.4em}}rr@{\hspace{0.4em}}rr@{\hspace{0.4em}}rr@{\hspace{0.4em}}r@{}}
    \toprule
    & \multicolumn{2}{c}{$m=17$}& \multicolumn{2}{c}{$m=18$}& \multicolumn{2}{c}{$m=19$}& \multicolumn{2}{c}{$m=20$}& \multicolumn{2}{c}{$m=21$}& \multicolumn{2}{c}{$m=22$}\\
    \midrule
		 $n=13$&5.3s & 7.4m&  &  &  &  &  &  &  &  &  &  \\
	$n=14$&16s & 17m& 38s & 1.8h&  &  &  &  &  &  &  &  \\      
	$n=15$&19m & 1.1h&11m & 28.7h &12m & 43.8h &53m & 13.7h${}^{\star}$ &  &  &  &  \\
	$n=16$&  &  &7.2m & 20h &7.4h & 5.8h${}^{\star}$ &1.7d & 1.2d${}^{\star}$ & 1.7d & 9.9d${}^{\star}$ &  & \\
    $n=17$&  &  &  &  &3.8d & 9.0h${}^{\star}$ & 15.9h$^{\blacktriangle}$ & 1.8d${}^{\star}$&2.1d$^{\blacktriangle}$ & 14.3d${}^{\star}$&  &  \\
    $n=18$&  &  &  &  &  &  &1.2d$^{\blacktriangle}$ & 2.5d${}^{\star}$&3.6d$^{\blacktriangle}$ & 34.3d${}^{\star}$&  &  \\
    $n=19$&  &  &  &  &  &  &  &  &11.6d$^{\blacktriangle}$ & 56.9d${}^{\star}$&  &  \\
    $n=21$&  &  &  &  &  &  &  &  &  &  &\textcolor{azure}{29.2m} & - \\
    \bottomrule
    \end{tabular}
\end{table}

\begin{table}[!ht]
    \caption{Time spent in search for unsatisfiable
      $(3,1,4)$-formula for different numbers of variables and
      clauses after hard-coding the first rows as described. The timeout is 11 days. For each $m$ in the table, the lower bound for the choice of $n$ is by Lemma \ref{lemma:ks-mu-var-bound}, and the
      upper bound is $m-2$ since we know that the smallest size of a $(3,1,4)$-formula in $\MU(1)$. There are eight values for each pair of $m$ and $n$ divided into two rows of four values each. The row on the top shows the amount of time spent on the cases where we assume there is a unique singular clause. The row on the bottom shows the amount of time spent on the cases where we assume there are at least two singular clauses. The four values in every row each indicates the time spent on the case assuming that $n_5^*$ of out the three variables $x_1,x_2,x_3$ are of degree $5$. An unsatisfiable formula is found only for the case where $m = 19$, $n = 18$ and $\vdeg{x_1}=\vdeg{x_2}=\vdeg{x_3}=5$. Upon inspection, this formula is revealed to be composed of 3 enforcers in the spirit of Theorem~\ref{enforcer-theorem}.}
    \label{tab:time-3-1-4}
    \centering
\av{\medskip}
    \setlength\tabcolsep{2.3mm}
 \begin{tabular}{rp{0.5em}@{}r@{\hspace{0.4em}}r@{\hspace{0.4em}}r@{\hspace{0.4em}}rp{0.5em}@{}r@{\hspace{0.4em}}r@{\hspace{0.4em}}r@{\hspace{0.4em}}rp{0.5em}@{}r@{\hspace{0.4em}}r@{\hspace{0.4em}}r@{\hspace{0.4em}}r}
    \toprule
	&& \multicolumn{4}{c}{$m=17$} && \multicolumn{4}{c}{$m=18$} && \multicolumn{4}{c}{$m=19$} \\
    \cmidrule(r){3-6}\cmidrule(r){8-11}\cmidrule(r){13-16}
    $n_5^*$&& $0$& $1$ & $2$ & $3$ && $0$ & $1$ & $2$ & $3$ && $0$ & $1$ & $2$ & $3$\\
    \cmidrule(r){3-6}\cmidrule(r){8-11}\cmidrule(r){13-16}
    \multirow{2}{*}{$n=11$}& & 1.1s & 1.1s & 1.1s & 0.7s && - & - & 1.1s & 0.8s &&  &  &  &  \\
    && 19s & 15s & 12s & 13s && - & - & 13s & 14s &&  &  &  &  \\
    \cmidrule(r){3-6}\cmidrule(r){8-11}\cmidrule(r){13-16}
    \multirow{2}{*}{$n=12$}& & 1.8s & 2.0s & 2.0s & 2.0s && 2.2s & 2.5s & 2.4s & 2.7s &&  4.7s & 2.7s & 2.8s & 3.0s \\ 
    && 1.1m & 1.0m & 0.6m & 13s && 1.2m & 1.2m & 0.5m & 0.5m && 2.0m & 1.4m & 1.9m & 1.6m  \\
    \cmidrule(r){3-6}\cmidrule(r){8-11}\cmidrule(r){13-16}
    \multirow{2}{*}{$n=13$}& & 15s & 7.1s & 4.0s & 6.4s && 5.5s & 10s & 12s & 8.2s && 12s & 9.0s & 6.0s & 14s \\
    && 4.5m & 1.1m & 0.5m & 1.0m && 6m & 4.9m & 0.9m & 1.5m && 11m & 5.1m & 10m & 9.5m  \\
    \cmidrule(r){3-6}\cmidrule(r){8-11}\cmidrule(r){13-16}
    \multirow{2}{*}{$n=14$}& & 24s & 43s & 14s & 12s && 25s & 56s & 26s & 32s && 1.2m & 40s & 1.2m & 34s  \\  
    && 9.1m & 2.4m & 1.2m & 1.1m && 34m & 8.4m & 11m & 3.6m && 1.7h & 3.0h & 2.8h & 0.9h \\
    \cmidrule(r){3-6}\cmidrule(r){8-11}\cmidrule(r){13-16}
    \multirow{2}{*}{$n=15$}& & 1.0m & 1.7m & 17s & 21s && 17m & 2.7m & 49s & 1.8m && 2.3m & 1.0h & 1.3m & 9.7m  \\
    && 42m & 10m & 2.3m & 1.9m && 1.7h & 50m & 22m & 22m && 33.2h & 17.1h & 2.6h & 1.4h  \\
    \cmidrule(r){3-6}\cmidrule(r){8-11}\cmidrule(r){13-16}
    \multirow{2}{*}{$n=16$}& & &  &  & && 39m & 24m & 36m & 3.3m && 14.1h & 4.5h & 34m & 30m  \\ 
    && &  &  & && 19.7h & 47m & 24m & 40m && t.o. & 6.4d & 3.3h & 2.5h  \\
    \cmidrule(r){8-11}\cmidrule(r){13-16}
    \multirow{2}{*}{$n=17$}& & &  & & & & &  &  & && 7.2d & 12.0h & 4.8h & 2.6h  \\
    && &  &  & & &  &  & & & & t.o. & t.o. & 10.1h & 2.3h  \\
    \cmidrule(r){13-16}
    \multirow{2}{*}{$n=18$}& & &  & & & & & & &  & & t.o. & 5.6d & 6.5h & \textcolor{azure}{1.1h}  \\  
    && & & &  & & &  & & & & 2.4d & t.o. & 6.3d & 10.4h  \\
    \bottomrule
\end{tabular}
\end{table}

Combining the computational results in Tables~\ref{tab:time-3-1-3} and~\ref{tab:time-3-1-4} with the previous lemma, we have the following result.
\begin{theorem}
$\mu(3,1,3)\geq 22$ and $\mu(3,1,4)\geq 19$.
\end{theorem}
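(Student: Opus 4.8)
The plan is to argue by contradiction and reduce the statement to a finite, symmetry-broken exhaustive search. Assume there is an unsatisfiable $(3,1,3)$-formula with at most $21$ clauses (respectively, an unsatisfiable $(3,1,4)$-formula with at most $18$ clauses), and let $F$ be a smallest one; then $F$ is minimally unsatisfiable. Writing $m := |F|$, Corollary~\ref{corollary:3-1-3-lower-bound} already gives $m \geq 17$, so $m$ ranges over $\{17,\dots,21\}$ (resp.\ $\{17,18\}$). For each such $m$, Lemma~\ref{lemma:ks-mu-var-bound} bounds the number of variables $n$ below by $\lceil 3m/(p+q)\rceil$, and since the smallest $(3,1,3)$- (resp.\ $(3,1,4)$-) formula in $\MU(1)$ has $22$ (resp.\ $19$) clauses by Table~\ref{table:smallest-MU-1}, the case $n = m-1$ (i.e.\ $F \in \MU(1)$) is excluded for these $m$, so $n \leq m-2$. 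This leaves only the finitely many pairs $(m,n)$ appearing in Tables~\ref{tab:time-3-1-3} and~\ref{tab:time-3-1-4}.

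For each $(m,n)$ I would then hard-code part of the clause-literal adjacency matrix of $F$ and hand the residual problem to SMS, exactly as in the hard-coding subsection. By Lemma~\ref{lemma:fixing-3-1-3}(1), $F$ has a singular clause; split on whether it is unique. If it is, Lemma~\ref{lemma:fixing-3-1-3}(4) forces $\bigcup_i F[x_i]$ to have the displayed form, so the first $12$ rows of the matrix (the occurrences of $x_1,\ol{x_1},x_2,\ol{x_2},x_3,\ol{x_3},a_1,b_1,a_2,b_2,a_3,b_3$) can be fixed, with $\pos{\cdot}$ and $\nega{\cdot}$ chosen so that this is compatible with SMS's canonical form on the refined vertex partition. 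If it is not unique, one instead fixes a partial matrix containing a second singular clause $\begin{psmallmatrix} y_1 \\ y_2 \\ y_3 \end{psmallmatrix}$. For the $(3,1,4)$ case, each of the two branches is further subdivided according to $n_5^\ast = |\{\,i : \vdeg{x_i}=5\,\}| \in \{0,1,2,3\}$. Running SMS on all these instances produces the timings in Tables~\ref{tab:time-3-1-3} and~\ref{tab:time-3-1-4}: every instance with $m \leq 21$ (resp.\ $m \leq 18$) is unsatisfiable, i.e.\ no unsatisfiable formula of the prescribed shape and size exists. This contradicts the existence of $F$ and yields $\mu(3,1,3)\geq 22$ and $\mu(3,1,4)\geq 19$.

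The main obstacle is establishing soundness and completeness of the case split: one must verify that Lemma~\ref{lemma:fixing-3-1-3} really forces every minimally unsatisfiable $(3,1,q)$-formula of the relevant size to have an adjacency matrix extending one of the hard-coded partial matrices \emph{up to the isomorphism respected by SMS}, so that restricting the minimality check to the refined partition loses no solutions, and that the hard-coded rows agree with the lexicographically minimal form induced by that partition. A secondary complication is that several branches do not terminate within the time budget and must be recursively split further --- into $1258$ sub-cases determined by where $\ol{a_1},\ol{b_1},\ol{a_2},\ol{a_3}$ occur, or $136$ sub-cases determined by where $\ol{y_1},\ol{y_2},\ol{y_3}$ occur, each such family generated isomorph-free by encoding the split itself as a graph problem for SMS --- and one must confirm that every resulting sub-case returns unsatisfiable as well.
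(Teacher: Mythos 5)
Your proposal matches the paper's own argument: it combines Corollary~\ref{corollary:3-1-3-lower-bound} and the $\MU(1)$ sizes to bound the $(m,n)$ range, uses Lemma~\ref{lemma:fixing-3-1-3} to split on a unique versus non-unique singular clause (and, for $(3,1,4)$, on $n_5^\ast$), hard-codes the corresponding rows with a refined partition for SMS's minimality check, and closes the remaining branches by the recursive sub-case splits reported in Tables~\ref{tab:time-3-1-3} and~\ref{tab:time-3-1-4}. This is essentially the same approach as the paper, and it is correct.
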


\subsection{Reduction}
In this section, we describe the final technique that allow us to determine the value of $\mu(3,4)$ and $\mu(3,2,2)$, and thus complete Table~\ref{tab:3-s}. The idea is to reduce a $(3,4)$-formula (or $(3,2,2)$-formula) to a smaller $(\leq 3,4)$-formula (or $(\leq 3,2,2)$-formula) that is equisatisfiable, so that the question of the existence of a certain formula is reduced to that of the existence of a certain, smaller formula. 
We then use SMS to determine the existence of such small formulas.

It is difficult to prove any useful properties about general unsatisfiable $(3,4)$-formulas (or $(3,2,2)$-formulas), but since we exhaustively search from smaller to bigger formulas, we can restrict our search to \emph{minimal} (in terms of the number of clauses) $(3,4)$-formulas (or $(3,2,2)$-formulas). We reduce such an $F$ to a smaller unsatisfiable $(\leq 3,4)$-formula  (or $(\leq 3,2,2)$-formula) by replacing all subsets of clauses from $F$ that fit into one of the two forms below with a single $2$-clause $\begin{psmallmatrix}
    c\\
    d\\
    \times\\
\end{psmallmatrix}$.
For each replacement operation, the symbols $c$, $d$ are instantiated separately, i.e., they could be instantiated differently each time.
Each replacement is tantamount to a sequence of variable eliminations, and thus is sound (preserves unsatisfiability).
\begin{enumerate}
\item $\begin{psmallmatrix}
    x & \n{x} & \n{x} & c\\
    a & a & a & \n{a} \\
    d & y & \n{y} & d\\
    \end{psmallmatrix}$ for some $\vdeg{x}=3$, $\vdeg{y} = 2$ and $d$.
	Eliminate: $\var(y), \var(x), \var(a)$.
\item $\begin{psmallmatrix}
    y & \n{y}\\
    c & c\\
    d & d\\
    \end{psmallmatrix}$ for some $\vdeg{y}=2$ and $\vdeg{c}=\vdeg{d}=4$.
	Eliminate: $\var(y)$.
\end{enumerate}

Given the number of clauses $m$ and the number of variables $n$ of the formula, Lemma~\ref{lem:at-most-one-deg-2} and Lemma~\ref{lemma:minimal-formula} help us narrow down possibilities for the number of subsets that fit in each of the forms.%
\footnote{
	Compared to the conference version of this paper, we have introduced \Cref{lem:at-most-one-deg-2} and \Cref{lemma:minimal-formula}.
	This allowed us to simplify the case distinction presented in \Cref{lemma:3-4-profiles} and \Cref{table:3-4-cases}.
}
\begin{lemma}
\label{lem:at-most-one-deg-2}
Let $F$ be an unsatisfiable $(3,4)$-formula (or $(3,2,2)$-formula). Then there is an unsatisfiable $(3,4)$-formula (or $(3,2,2)$-formula) $F'$ such that $|F'| \leq |F|$ and $F'$ has at most one variable of degree $2$.
\end{lemma}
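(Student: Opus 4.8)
The plan is to show that if $F$ has two variables of degree $2$, we can apply a size-non-increasing transformation that reduces the number of degree-$2$ variables, and then iterate. So suppose $y$ and $z$ are distinct variables with $\vdeg{y} = \vdeg{z} = 2$ in an unsatisfiable $(3,4)$-formula (or $(3,2,2)$-formula) $F$. The natural tool is variable elimination (DP-resolution): eliminating $y$ yields an equisatisfiable, hence still unsatisfiable, formula $\elim{F}{y}$, and since $\var(y)$ occurs in exactly $2$ clauses, there is at most one positive and one negative occurrence, so eliminating $y$ produces \emph{at most one} resolvent (exactly one if $y$ has one positive and one negative occurrence; none, or rather the formula becomes trivially satisfiable-after-removal, if $y$ is pure — but $F$ is minimally unsatisfiable so $y$ cannot be pure). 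Thus $|\elim{F}{y}| = |F| - 1 \leq |F|$. The first thing to check is that this does not blow up the occurrence count: the resolvent clause $C \cup D \setminus \{y, \n y\}$ contains literals only from the two clauses that contained $y$, so every variable's degree can only \emph{decrease} or stay the same — in particular $\elim{F}{y}$ is still a $(3,4)$-formula (or $(3,2,2)$-formula), noting that the resolvent may have size $< 3$, which is fine since we are allowed $(\leq 3, s)$-formulas, but the lemma as stated wants a genuine $(3,4)$-formula, so a little care is needed here; see below.

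The subtlety is that eliminating $y$ might merge or shrink clauses in a way that \emph{creates} a new degree-$2$ variable (if some variable had degree $3$ and one of its occurrences vanishes into the resolvent), so a naive induction on the number of degree-$2$ variables need not work directly. I would handle this by instead inducting on a lexicographic measure such as $(|F|, \#\{\text{degree-}2\text{ variables}\})$ or simply on $|F|$: each elimination strictly decreases $|F|$, so the process terminates, and at termination we obtain some unsatisfiable $F'$ with $|F'| \leq |F|$; we then argue that when the process gets \emph{stuck} — i.e., when we can no longer eliminate a degree-$2$ variable while staying within the class — the formula has at most one degree-$2$ variable. Concretely: as long as $F'$ has $\geq 2$ degree-$2$ variables, pick one whose elimination keeps us in the class and eliminate it; the only obstruction is that the resolvent might be tautological (then the clauses can just be dropped, still decreasing size) or that eliminating it would be the \emph{last} such variable. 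So really the clean statement is: repeatedly eliminate degree-$2$ variables one at a time; this terminates since size strictly drops; the result is unsatisfiable with size $\leq |F|$; and we stop with at most one degree-$2$ variable remaining — actually we can push all the way to zero, but one is all the lemma claims, which gives slack for the clause-width issue.

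The main obstacle I anticipate is the clause-width bookkeeping: after eliminating a degree-$2$ variable, a resolvent of two $3$-clauses sharing exactly the eliminated variable has width $4$, which is too big — wait, no: resolving two $3$-clauses on $y$ gives width $3 + 3 - 2 = 4$, which violates the $3$-CNF condition. This means a single elimination of a degree-$2$ variable is \emph{not} automatically safe, and the restriction to form-1 and form-2 subsets in the main reduction (where $y$ appears in clauses that share additional structure, e.g. both containing literals $c, d$ as in form~2, so the resolvent collapses to the $2$-clause $\begin{psmallmatrix} c \\ d \\ \times \end{psmallmatrix}$) is precisely what keeps widths bounded. So the real proof must argue that a degree-$2$ variable whose two clauses do \emph{not} share an extra literal can be handled differently — for instance, if $y$'s two clauses are $\{y, a, b\}$ and $\{\n y, c, d\}$ with $\{a,b\} \cap \{c,d\} = \emptyset$ (up to signs), then $\n a$ (say) must appear somewhere and one can do a small local rewrite, or one observes that minimal unsatisfiability already forbids certain configurations. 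I would therefore structure the proof as a case analysis on the overlap pattern of the two clauses containing a degree-$2$ variable $y$: either the resolvent stays width $\leq 3$ (done), or the overlap forces a blocked-clause situation contradicting minimality, or we fall into form~1/form~2 and replace by a $2$-clause — and in every case $|F|$ does not increase while the degree-$2$ count drops, so iterating yields $F'$ with at most one degree-$2$ variable. The width-4 resolvent case is the crux and is exactly where one must invoke either minimality (no blocked clauses, by the corollary stated in the preliminaries) or a further elimination step to absorb the oversized clause.
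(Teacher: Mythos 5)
Your plan has a genuine gap, and it sits exactly where you yourself locate ``the crux.'' Eliminating a degree-$2$ variable $y$ by DP-resolution produces, in general, a width-$4$ resolvent (two $3$-clauses meeting only in $y$), so the step does not stay inside the class of $(3,4)$- or $(3,2,2)$-formulas, and your proposed escapes do not close this: the lemma is stated for an \emph{arbitrary} unsatisfiable $F$, with no minimality hypothesis, so you cannot invoke ``$F$ is minimally unsatisfiable'' or ``of minimal size''; and even after passing to a minimally unsatisfiable subformula, a degree-$2$ variable whose two clauses $\{y,a,b\}$ and $\{\n{y},c,d\}$ share no other variable creates no blocked clause (there is no clashing pair besides $y$) and contradicts nothing, so neither the blocked-clause corollary nor Lemma~\ref{lemma:minimal-formula} (which assumes minimal \emph{size}) rescues the case analysis. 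The fallback ``do a small local rewrite'' is precisely the missing argument. Your termination measure and the observation that degrees never increase are fine, but without a size-non-increasing, class-preserving step in the non-overlapping case the induction does not go through.

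The paper avoids resolution altogether: given two degree-$2$ variables $x\neq y$, it \emph{identifies} them, replacing $y$ by $x$ and $\n{y}$ by $\n{x}$ (with the polarities oriented as in one of the three possible overlap patterns of $F[x]\cup F[y]$) and deleting any tautological clauses that arise. Identification restricts the set of assignments, so unsatisfiability is preserved; clause width stays exactly $3$ (tautologies are dropped, so $|F'|\leq|F|$); the merged variable has degree at most $4$, respectively at most two positive and two negative occurrences, so the class constraints hold; and the number of degree-$2$ variables strictly decreases, so iterating terminates. If you want to salvage your elimination-based route, you would need to supply the non-overlapping case with an argument of comparable strength --- the identification trick is the natural candidate, at which point resolution becomes unnecessary.
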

\begin{proof}
If $F$ has at most one variable of degree $2$, then let $F':=F$ and we are done. Otherwise, let $\var(x)\not=\var(y)$ be two variables of degree $2$ in $F$. Without loss of generality, $F[x]\cup F[y]$ is of the form
$\begin{psmallmatrix}
    x & \n{x} &y & \n{y}\\
    \\
    \\
    \end{psmallmatrix}$, $\begin{psmallmatrix}
       x  & \n{x} & y \\
	\n{y} &           \\
    \\
    \end{psmallmatrix}$, or $\begin{psmallmatrix}
       x  & \n{x} \\
    \n{y} &    y  \\
    \\
\end{psmallmatrix}$. We can thus replace $y$ with $x$ and $\n{y}$ with $\n{x}$ in $F$, and delete tautological clauses if any arise, to get an unsatisfiable $(3,4)$-formula (or $(3,2,2)$-formula) with fewer variables of degree $2$. The new formula remains unsatisfiable, because identifying variables is semantically equivalent to restricting the set of assignments; it is easy to see that the constraints on clause size and literal occurrences are satisfied. Repeating this process as long as there is more than one variable of degree $2$ ultimately yields~$F'$.
\end{proof}

\begin{lemma}
\cv{[\cvstar]}
\label{lemma:minimal-formula}
    Let $F$ be an unsatisfiable $(3,4)$-formula (or $(3,2,2)$-formula) of minimal size. Let $x,y,w$ be literals such that $\var(x) \neq \var(w)$, 
    $\vdeg{y}=2$ and $\vdeg{x}=\vdeg{w}=3$. Then:
    \begin{enumerate}
    \item If a literal $a$ occurs in $F$, then $\overline{a}$ also occurs in $F$. \label{case:no-pure}
    \item If $\begin{psmallmatrix}
    a\\
    b\\
    \\
    \end{psmallmatrix}\in F$, then $\var(a)\not=\var(b)$.
    \label{case:diff-vari}
    \item $\occ{F}{y}$ is of the form $\begin{psmallmatrix}
    y & \n{y}\\
    a & a\\
    b & b\\
    \end{psmallmatrix}$. \label{case:var-deg-2}
    \item $\occ{F}{x}$ is of the form $\begin{psmallmatrix}
    x & \n{x} & \n{x}\\
    a & a & a\\
    b & c & d\\
    \end{psmallmatrix}$. 
	\label{case:var-deg-3}
    \item $\occ{F}{x}\cap\occ{F}{w}=\emptyset$.
    \label{case:disjoint-deg-3}
    \item Either $\occ{F}{x}\cap\occ{F}{y}=\emptyset$, or $\occ{F}{x}$ is of the form $\begin{psmallmatrix}
    x & \n{x} & \n{x}\\
    a & a & a\\
    z & y & \n{y}\\
    \end{psmallmatrix}$.\label{case:degree-3-2}
    \item If $\occ{F}{x}=\begin{psmallmatrix}
    x & \n{x} & \n{x}\\
    a & a & a\\
    z & y & \n{y}\\
    \end{psmallmatrix}$, then $\occ{F}{a}=\begin{psmallmatrix}
    x & \n{x} & \n{x} & b\\
    a & a & a & \n{a} \\
    z & y & \n{y} & z\\
    \end{psmallmatrix}$, and
	$\vdeg{b} = \vdeg{z} = 4$.
    \label{case:overlap}
    \end{enumerate}
\end{lemma}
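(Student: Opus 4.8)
The plan is to prove the seven statements in the listed order, since each leans on the structure established by the earlier ones; throughout, ``the class'' denotes whichever of the $(3,4)$- or $(3,2,2)$-formulas contains $F$, both cases being handled uniformly. Three facts about a minimal-size unsatisfiable $F$ do all the work. (a)~$F$ is minimally unsatisfiable, hence has no pure literal and no clause blocked on any of its literals. (b)~DP-resolution preserves unsatisfiability, so whenever a variable can be eliminated from $F$ such that every resolvent still has at most three literals \emph{and} no literal's (or variable's) occurrence count is pushed above the class bound, the result is a strictly smaller formula in the class, contradicting minimality. (c)~For any variable $v$ of $F$, the formula $F \setminus \occ{F}{v}$ is satisfiable (it sits inside $F$ minus a clause), while $F|_v$ and $F|_{\ol v}$ are unsatisfiable; each of the latter equals $F \setminus \occ{F}{v}$ together with a few short clauses, so every model of $F \setminus \occ{F}{v}$ must falsify those short clauses, and re-inflating one of them into a legitimate $3$-clause gives a smaller unsatisfiable formula that is again in the class unless $\occ{F}{v}$ has a tightly constrained shape.

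Parts~1 and~2 are immediate: if $a$ occurs but $\ol a$ does not, then $F \setminus \occ{F}{a}$ is a smaller unsatisfiable formula in the class; and two literals in distinct positions of a non-tautological clause lie on distinct variables. For part~3, $y$ has type $(1,1)$ by part~1; write $\occ{F}{y} = \{\{y\} \cup A,\ \{\ol y\} \cup B\}$ with $|A| = |B| = 2$. If the resolvent $A \cup B$ were tautological, the clause $\{y\} \cup A$ would be blocked on $y$, so $A \cup B$ is non-tautological of size $2$, $3$, or $4$. Size $3$ is impossible, since then $\elim{F}{y}$ would be a smaller formula in the class. If $A \cup B = \{a_1,a_2,b_1,b_2\}$ has size $4$ (four distinct variables), then by (c) every model of $F \setminus \occ{F}{y}$ sets $a_1 = a_2 = 0$ and $b_1 = b_2 = 0$, so $(F \setminus \occ{F}{y}) \cup \{\{a_1,a_2,b_1\}\}$ is a smaller unsatisfiable formula still in the class, a contradiction. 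Hence $A = B$, the claimed shape.

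Parts~4--7 follow the same template with progressively heavier bookkeeping. In part~4, $x$ has type $(1,2)$ by part~1; the two clauses containing $\ol x$ are not blocked on $\ol x$, which forces both of their resolvents with the $x$-clause to be non-tautological, and a case analysis — on whether the two non-$x$ literals of the $x$-clause recur in the $\ol x$-clauses, and on their degrees — shows that unless the three clauses share a common literal $a$ one can either eliminate $\var(x)$ or invoke (c) to produce a smaller formula in the class. Once part~4 is available, eliminating a degree-$3$ variable yields only resolvents of size $\leq 3$, which is what makes parts~5 and~7 go through: for part~5, if two degree-$3$ variables $x$ and $w$ shared a clause, eliminating $\var(x)$ (using its part-4 shape) would keep $w$'s occurrences within bound and give a smaller formula in the class; for part~6, parts~2,~3, and~4 already fix the shapes of $\occ{F}{y}$ and $\occ{F}{x}$, and checking how a clause common to both can sit — using that $x$ occurs just once in its minority polarity and $y$ just twice — forces exactly the displayed configuration (for $(3,2,2)$-formulas this second alternative is impossible, as no literal occurs three times, so there $\occ{F}{x} \cap \occ{F}{y} = \emptyset$ always). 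Finally, in part~7, part~1 gives $\vdeg{a} \geq 4$, hence $\vdeg{a} = 4$ in the $(3,4)$ case (for $(3,2,2)$-formulas the hypothesis of part~7 never holds, again because no literal occurs three times), so $a$ has a unique clause $E$ containing $\ol a$; not-blockedness of the clauses of $\occ{F}{x}$ on $a$ constrains $E$, eliminating $\var(a)$ and, if necessary, $\var(x)$ forces $z \in E$, and $\vdeg{b} = \vdeg{z} = 4$ follows because otherwise eliminating $\var(b)$ or $\var(z)$, or removing a clause that has become blocked, again produces a smaller formula in the class.

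The main obstacle throughout parts~4--7 is precisely this case analysis. Variable elimination is only directly useful when every resolvent has at most three literals and no occurrence count exceeds the class bound, and in general neither is guaranteed: the non-pivot literals of the resolved clauses are spread across several resolvents and can gain occurrences. The substance of the proof is to show, configuration by configuration, that each offending case is either impossible — because some clause turns out to be blocked — or can be circumvented by adjoining to $F \setminus \occ{F}{v}$ a single short clause it already entails; the number of configurations to rule out and the delicacy of tracking which literals gain occurrences under elimination grow with each successive part, and this bookkeeping is where the bulk of the effort goes.
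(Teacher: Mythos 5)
Your toolkit is the paper's: minimal size gives minimal unsatisfiability (no pure literals, no blocked clauses), and structural claims are extracted by variable elimination or clause strengthening with width and occurrence accounting, plus the restriction trick~(c). Parts 1--3 are complete and correct (your split on the resolvent size in part~3 is a valid variant of the paper's direct replacement of $\occ{F}{y}$ by a single 3-clause), and your sketch of part~6 from the shapes in parts~3 and~4 is sound. But for parts~4, 5 and~7 you only announce a case analysis and assert that every configuration is ``either impossible or can be circumvented''; that assertion \emph{is} the lemma. The paper's proof of part~4 alone consists of explicit replacement constructions distinguishing the overlaps between $\{a,b\}$ and the non-pivot literals of the two $\n{x}$-clauses, and nothing in your text carries out (or even enumerates) these cases, so parts~4 and~7 remain unproved sketches.

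Worse, the one concrete mechanism you do commit to, for part~5, fails. If $\occ{F}{x}$ has the part-4 shape $\{\{x,a,b\},\{\n{x},a,c\},\{\n{x},a,d\}\}$ and you eliminate $\var(x)$, the two resolvents $\{a,b,c\}$ and $\{a,b,d\}$ both contain $b$, so $b$ gains an occurrence: if $\vdeg{b}=4$ the result is not a $(3,4)$-formula (and in the $(3,2,2)$ case the literal bound on $b$ breaks), and if $b\in\{c,d\}$ a resolvent is a 2-clause. So ``eliminating $\var(x)$ \dots would keep $w$'s occurrences within bound and give a smaller formula in the class'' does not establish membership in the class---the obstruction is $b$'s occurrences, not $w$'s---and the re-inflation trick~(c) produces exactly the same two clauses, so it cannot rescue the step. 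The paper proves part~5 by counting instead: in a shared clause the $x$-literal and the $w$-literal each have degree at most~2 (by part~1 and $\vdeg{x}=\vdeg{w}=3$), so the third literal must be the part-4 common literal of both $\occ{F}{x}$ and $\occ{F}{w}$, giving it literal degree greater than~3, which together with part~1 (its negation must also occur) exceeds the occurrence cap. A similar counting argument, not ``eliminating $\var(b)$ or $\var(z)$'', is what delivers $\vdeg{b}=\vdeg{z}=4$ in part~7: for instance $\vdeg{b}=2$ would, via part~3, force $\n{a}$ to occur twice and hence $\vdeg{a}\geq 5$, and $\vdeg{z}=3$ would contradict part~5. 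You would need to supply arguments of this kind (or the paper's explicit replacements) before parts~4, 5 and~7 can be considered proved.
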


\ifdefined\isarxiv
\begin{proof}
The proof heavily relies on the variable elimination operation as introduced in Section~\ref{sec:prelim}.

Let $F$ be an unsatisfiable $(3,4)$-formula of minimal size.
\begin{enumerate}
 \item If there is a literal $a$ that occurs in $F$ and $\n{a}$ does not occur in $F$, then we can obtain a smaller unsatisfiable $(3,4)$-formula by removing all clauses where $a$ occurs in $F$.
 \item Every clause in $F$ contains three distinct literals, so $a\not=b$. If $a=\n{b}$, then we can remove this clause and obtain a smaller unsatisfiable $(3,4)$-formula.
 \label{case:not-both-literals-in-same-clause}
 \item 
 Let $\occ{F}{y}=\begin{psmallmatrix}
    y & \overline{y}\\
    a & c\\
    b & d\\
\end{psmallmatrix}$. We need to show that $\{c,d\}=\{a,b\}$; since $a,b,c,d$ are symmetric, it suffices to show $c \in \{a,b\}$.
Suppose to the contrary that $c\not\in\{a,b\}$, the rest follows analogously.
	By eliminating $\var(y)$ we get the resolvent $\begin{psmallmatrix}
    c\\
    a\\
    b\\
    d\\
    \end{psmallmatrix}$, which is logically weaker than or equivalent to $\begin{psmallmatrix}
    c\\
    a\\
    b\\\end{psmallmatrix}$. By assumption, $a,b,c$ are distinct literals, so we can replace $\occ{F}{y}$ with $\begin{psmallmatrix}
    c\\
    a\\
    b\\
    \end{psmallmatrix}$ to obtain a smaller unsatisfiable $(3,4)$-formula. This contradicts our assumption that $F$ is an unsatisfiable $(3,4)$-formula of minimal size.
     \item It suffices to show that there is a literal $a$ that occurs in all three clauses of $F[x]$.
		 Suppose to the contrary that there is no such literal.
		 Let $\occ{F}{x}=\begin{psmallmatrix}
    x & \overline{x} & \overline{x}\\
    a & e & f\\
    b & c & d\\
    \end{psmallmatrix}$. Then $a$ does not occur in at least one of the last two clauses and $b$ does not occur in at least one of the last two clauses.  We distinguish the following two cases depending on whether there is a clause in $F[x]$ that neither $a$ nor $b$ occur in.
    \begin{enumerate}
    \item There is no such clause. Then $\{a,b\}\cap\{c,e\}\not=\emptyset$ and $\{a,b\}\cap\{d,f\}\not=\emptyset$. This means either $a\in\{a,b\}\cap\{d,f\}$ or $b\in\{a,b\}\cap\{d,f\}$. Suppose $a\in\{a,b\}\cap\{d,f\}$. Then $a\not\in\{a,b\}\cap\{c,e\}$. By assumption, $\{a,b\}\cap\{c,e\}\not=\emptyset$, which means $\{a,b\}\cap\{c,e\}=\{b\}$. This means that $b\not\in \{d,f\}$. Therefore, $\{a,b\}\cap\{d,f\}=\{a\}$.
		By eliminating $\var(x)$ we obtain $\begin{psmallmatrix}
    a & b\\
    e & f\\
    c & d\\
    \end{psmallmatrix}$. Since every clause in F needs to contain exactly 3 different literals, it follows that $e\not= c$ and $f\not=d$. This means that $a,c,e$ are distinct literals and $b,f,d$ are distinct and we can replace $\occ{F}{x}$ with $\begin{psmallmatrix}
    a & b\\
    e & f\\
    c & d\\
    \end{psmallmatrix}$ to obtain a smaller unsatisfiable $(3,4)$-formula. This contradicts our assumption that $F$ is an unsatisfiable $(3,4)$-formula of minimal size. Assuming $b\in\{a,b\}\cap\{d,f\}$ leads to contradiction by a similar argument.
    \item There is such a clause. Then $\{a,b\}\cap\{c,e\}=\emptyset$ or $\{a,b\}\cap\{d,f\}=\emptyset$. Suppose $\{a,b\}\cap\{c,e\}=\emptyset$. We distinguish the following three cases depending on the size of $\{a,b\}\cap\{f,d\}$. 
    \begin{enumerate}
		\item $|\{a,b\}\cap\{f,d\}| = 2$. Then $\{a,b\}=\{f,d\}$.
			By eliminating $\var(x)$, we obtain $\begin{psmallmatrix}
    a & a\\
    b & b\\
    e & \times\\
    c & \times \\
    \end{psmallmatrix}$, which is logically equivalent to $\begin{psmallmatrix}
    x & \overline{x}\\
    a & a\\
    b & b\\
    \end{psmallmatrix}$. Note that since every clause in $F$ contain three distinct literals, it follows that $x,\overline{x},a,b$ are all distinct literals. Then we can replace $\occ{F}{x}$ with $\begin{psmallmatrix}
    x & \overline{x}\\
    a & a\\
    b & b\\
    \end{psmallmatrix}$ to obtain a smaller unsatisfiable $(3,4)$-formula. This contradicts our assumption that $F$ is an unsatisfiable $(3,4)$-formula of minimal size. 
    \item $|\{a,b\}\cap\{f,d\}| = 1$. Without loss of generality, assume $f \neq a,b$ and $d = a$.
		By eliminating $\var(x)$ we obtain $\begin{psmallmatrix}
    a & a\\
    b & b\\
    e & f\\
    c & \times \\
    \end{psmallmatrix}$, which is logically weaker than $\begin{psmallmatrix}
    a & a\\
    e & f\\
    c & b\\
    \end{psmallmatrix}$. By assumption, $a,e,c$ and $a,f,b$ are two sets of distinct literals. This means that we can replace $F[x]$ with $\begin{psmallmatrix}
    a & a\\
    e & f\\
    c & b\\
    \end{psmallmatrix}$ to obtain a smaller unsatisfiable $(3,4)$-formula. This contradicts our assumption that $F$ is an unsatisfiable $(3,4)$-formula of minimal size.
    \item $|\{a,b\}\cap\{f,d\}| = 0$. Then $\{a,b\}\cap\{f,d\} = \emptyset$.
		By eliminating $\var(x)$ we obtain $\begin{psmallmatrix}
    a & a\\
    b & b\\
    e & f\\
    c & d \\
    \end{psmallmatrix}$, which is logically weaker than $\begin{psmallmatrix}
    a & b\\
    e & f\\
    c & d\\
    \end{psmallmatrix}$. By assumption, $a,e,c$ and $b,f,d$ are two sets of distinct literals. This means that we can replace $F[x]$ with  $\begin{psmallmatrix}
    a & b\\
    e & f\\
    c & d\\
    \end{psmallmatrix}$ to obtain a smaller unsatisfiable $(3,4)$-formula. This contradicts our assumption that $F$ is an unsatisfiable $(3,4)$-formula of minimal size.
    \end{enumerate}
    \end{enumerate}
    Because all cases run into contradiction, we conclude that there is a literal $a$ that occurs in all three clauses of $F[x]$.

\item  Suppose $\occ{F}{x}\cap\occ{F}{w}\not=\emptyset$. We distinguish the following two cases depending on whether $F[x]=F[w]$.
    \begin{enumerate}
    \item $\occ{F}{x}=\occ{F}{w}$. Since no two clauses can contain exactly the same literals, $\occ{F}{x}=\occ{F}{w}$ has to be of the form $\begin{psmallmatrix}
    x & \overline{x} & \overline{x}\\
    a & a & a\\
    \overline{w} & w & \overline{w}\\
\end{psmallmatrix}$. 

    Then the first clause is blocked on $\n{w}$ and can therefore be removed from $F$ to obtain a smaller unsatisfiable $(3,4)$-formula. This contradicts our assumption that $F$ is an unsatisfiable $(3,4)$-formula of minimal size.
    \item $\occ{F}{x}\not=\occ{F}{w}$. This means that $|\occ{F}{x}\cup\occ{F}{w}| > 3$. Let $\begin{psmallmatrix}
    a\\
    b \\
    c\\
    \end{psmallmatrix}\in \occ{F}{x}\cap\occ{F}{w}$. It is safe to assume $a\in\{x,\overline{x}\}$ and $b\in\{w,\overline{w}\}$. By fact~\ref{case:var-deg-3}, it follows that one literal from $\{a,b,c\}$ occur in all three clauses of $F[x]$. By fact~\ref{case:no-pure}, $\ldeg{a},\ldeg{b}\in\{1,2\}$. This means that $c$ occur in all clauses of $F[x]$. Similarly, we can argue that $c$ occur in all clauses of $F[w]$. This means that $\ldeg{c}\geq |\occ{F}{x}\cup\occ{F}{w}| > 3$. This contradicts the fact that given there is no pure literals in $F$, every literal can occur at most three times.
    \end{enumerate}
    
\item First, we show that either $F[x]\cap F[y]=\emptyset$ or $F[y]\subseteq F[x]$. Suppose to the contrary that $F[x]\cap F[y]\not=\emptyset$ and $F[y]\not\subseteq F[x]$. Then $|F[x]\cup F[y]| > 3$. Let $\begin{psmallmatrix}
    a\\
    b \\
    c\\
	\end{psmallmatrix}\in \occ{F}{x}\cap\occ{F}{y}$. It is safe to assume $a\in\{x,\overline{x}\}$ and $b\in\{y,\overline{y}\}$. By fact~\ref{case:var-deg-3}, it follows that one literal from $\{a,b,c\}$ occurs in all three clauses of $F[x]$. By fact~\ref{case:no-pure}, $\ldeg{a},\ldeg{b}\in\{1,2\}$. This means that $c$ occurs in all clauses of $F[x]$. By fact~\ref{case:var-deg-3}, $c$ occurs in all clauses of $F[y]$. This means that $\ldeg{c}\geq |\occ{F}{x}\cup\occ{F}{y}| > 3$. But then $\ldeg{c} = 4$, and so $\ldeg{\n{c}} = 0$, a contradiction with fact~\ref{case:no-pure}.

	So, either $F[x]\cap F[y]=\emptyset$ or $F[y]\subseteq F[x]$. Now suppose $F[y]\subseteq F[x]$. By facts~\ref{case:var-deg-2} and \ref{case:var-deg-3}, $\occ{F}{x}$ is of the form $\begin{psmallmatrix}
    x & \overline{x} & \overline{x}\\
    a & a & a\\
    z & y & \overline{y}\\
    \end{psmallmatrix}$. Thus, either $\occ{F}{x}\cap\occ{F}{y}=\emptyset$, or $\occ{F}{x}$ is of the form $\begin{psmallmatrix}
    x & \n{x} & \n{x}\\
    a & a & a\\
    z & y & \n{y}\\
    \end{psmallmatrix}$.
    
\item Suppose $\occ{F}{x}=\begin{psmallmatrix}
    x & \n{x} & \n{x}\\
    a & a & a\\
    z & y & \n{y}\\
    \end{psmallmatrix}$. Note that since $a$ occurs three times in $F[x]$, there has to be a unique clause in $F$ where $\overline{a}$ occurs. Let this clause be  $\begin{psmallmatrix}
    \overline{a}\\
    b\\
    c\\
    \end{psmallmatrix}$. We can eliminate $\var(y)$ to obtain $\begin{psmallmatrix}
    x &\n{x} &\n{a} \\
    a & a & b \\
    z & \times & c \\
\end{psmallmatrix}$, then eliminate $\var(x)$ to obtain $\begin{psmallmatrix}
    z &\n{a} \\
    a & b \\
    \times & c \\
\end{psmallmatrix}$, and finally eliminate $\var(a)$ to obtain $\begin{psmallmatrix}
    z \\
    b \\
    c \\
\end{psmallmatrix}$. By assumption, $b\not= c$. If $z\not=b,c$, then we can replace $\occ{F}{a}$ with$\begin{psmallmatrix}
    z \\
    b \\
    c \\
\end{psmallmatrix}$ to obtain a smaller unsatisfiable $(3,4)$-formula, which contradicts our assumption that $F$ is an unsatisfiable $(3,4)$-formula of minimal size. Thus,  $z=b$ or $z=c$. In other words, $\occ{F}{a}$ is of the form $\begin{psmallmatrix}
    x & \n{x} & \n{x} & b\\
    a & a & a & \n{a} \\
    z & y & \n{y} & z\\
\end{psmallmatrix}$. To see that $\vdeg{z} = 4$, note that $\vdeg{z} > \ldeg{z} \geq 2$, and if $\vdeg{z} = 3$, then $F[z]\cap F[x]\not=\emptyset$, which contradicts fact~\ref{case:var-deg-3}.
    To see that $\vdeg{b} = 4$, assume to the contrary that either $\vdeg{b} = 2$ or $\vdeg{b} = 3$:
    \begin{enumerate}
    \item Suppose $\vdeg{b} = 2$. By fact~\ref{case:var-deg-2}, $F[b] = \begin{psmallmatrix}
    b & \n{b}\\
    \n{a} & \n{a} \\
    z & z\\
    \end{psmallmatrix}$. This means that $\vdeg{a}\geq 3 + 2 = 5$, which is impossible in a $(3,4)$-formula.
\item Suppose $\vdeg{b} = 3$. By fact~\ref{case:var-deg-3}, either $\n{a}$ or $z$ occurs three times in $F[b]$. If $\n{a}$ occurs three times in $F[b]$, then $\vdeg{a}\geq 3 + 3 = 6$, which is again impossible. If $z$ occurs three times in $F[b]$, then since $F[x]\cap F[b] = \emptyset$ and $z$ occurs once in $F[x]$, it follows that $\ldeg{z}\geq 3 + 1 = 4$, and thus $\ldeg{\n{z}} = 0$, which contradicts fact~\ref{case:no-pure}.
    \end{enumerate}
    Because both possibilities run into contradiction, we conclude that $\vdeg{b}=4.$
\end{enumerate}
One can argue that these facts also hold for an unsatisfiable $(3,2,2)$-formula of minimal size with the same arguments. This completes the proof of the lemma.
\end{proof}
\fi

\begin{corollary}
\label{cor:kst-deg3-var} A smallest $(3, 2, 2)$-formula has even size and no variables of degree $3$. 
\end{corollary}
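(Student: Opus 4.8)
The plan is to deduce both assertions from Lemma~\ref{lemma:minimal-formula} applied to a smallest unsatisfiable $(3,2,2)$-formula $F$; since "smallest" entails "of minimal size" (and, by removing a redundant clause, "minimally unsatisfiable"), the lemma applies to $F$.

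First I would rule out variables of degree~$3$. Suppose for contradiction that $\vdeg{x}=3$ for some $x\in\var(F)$. By fact~\ref{case:var-deg-3} of Lemma~\ref{lemma:minimal-formula}, $\occ{F}{x}$ has the form $\begin{psmallmatrix} x & \n{x} & \n{x}\\ a & a & a\\ b & c & d\\ \end{psmallmatrix}$, so the literal $a$ occurs in three clauses, i.e.\ $\ldeg{a}\geq 3$. This is impossible in a $(3,2,2)$-formula, where the positive literal of each variable occurs in at most $p=2$ clauses and the negative literal in at most $q=2$ clauses, so every literal occurs in at most~$2$ clauses. Hence $F$ has no variable of degree~$3$. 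Furthermore, every variable of $F$ has degree at most $p+q=4$, and by fact~\ref{case:no-pure} of the same lemma no literal of $F$ is pure, so no variable of $F$ has degree~$1$. Therefore every variable of $F$ has degree $2$ or $4$.

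For the parity of $|F|$, I would count literal occurrences two ways. As $F$ is a $3$-CNF, it has exactly $3\,|F|$ literal occurrences; on the other hand this number equals $\sum_{x\in\var(F)}\vdeg{x}$, which is a sum of even integers by the previous paragraph and hence even. Thus $3\,|F|$ is even, and since $3$ is odd, $|F|$ is even.

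The substance of the argument is entirely in Lemma~\ref{lemma:minimal-formula}; once that case analysis is available, the corollary reduces to a short counting argument, with no real obstacle. The only point requiring care is that it is precisely the $(2,2)$ occurrence bound—rather than the weaker $(\leq 3,4)$ bound, under which a literal could legitimately occur three times—that turns the degree-$3$ configuration of fact~\ref{case:var-deg-3} into a contradiction; this is why the statement is specific to $(3,2,2)$-formulas and not to $(3,4)$-formulas.
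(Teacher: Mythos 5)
Your proof is correct and follows essentially the same route as the paper: fact~\ref{case:var-deg-3} of Lemma~\ref{lemma:minimal-formula} forces a literal of degree~$3$, impossible under the $(2,2)$ occurrence bound, and then all variable degrees are even (the paper phrases this as ``only variables of type $(1,1)$ and $(2,2)$''), so $3|F|$ is even and hence $|F|$ is even. Your explicit appeal to fact~\ref{case:no-pure} to exclude odd degrees just makes precise a step the paper leaves implicit.
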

\begin{proof}
By fact \ref{case:var-deg-3}\ from \Cref{lemma:minimal-formula}, the existence of a variable $x$ of degree $3$ implies that there is a literal $a$ of degree $3$; a contradiction in a $(3,2,2)$-formula. So, a minimal $(3,2,2)$-formula contains only variables of type $(1,1)$ and $(2,2)$, and thus has an even number of literal occurrences. With $k=3$, the number of clauses must be even.
\end{proof}

One caveat to searching for a formula of a reduced profile using SMS is that the reduction can reduce degrees of some variables and literals. This means that some of the variables or literals in the shorter clauses must occur strictly fewer times than the general bound $s$ (or \mbox{$p$, $q$)}. To accommodate for this, we count each literal that occurs in a $2$-clause twice. However, this means that the literals in the $2$-clauses whose degree was not reduced by the steps above may exceed their degree cap under this way of counting. To adjust to this, we sum up the number of exceeding counts and call it the \emph{surplus} of the formula. Given a specific reduction, we know the exact value of allowed surplus for the reduced formula, and so we can include it as a part of the constraints. Here as an example, we write out the formula for the case of bounded variable degree. The formula for the literal case can be defined similarly.
For all $i\in[n]\cup[-n],j\in[m]$, define $\incip{i}{j}:=\inci{i}{j}\land\lnot\tau_{j,k}$. For all $i\in[m]$ and $t\in[m]\cup[-m]$, define $(\sigma'_{i,1},\sigma'_{i,2},\dots,\sigma'_{i,s+s}):=\text{Card}(\sigma_{i,1},\sigma_{i,2},\dots,\sigma_{i,s},\incip{i}{1},\dots,\incip{i}{m},\incip{-i}{1},\dots,\incip{-i}{m})$. Suppose $\surplus$ is the value of surplus. Then we can define 
\[F_{\text{surp}}:=\sum_{i\in[n]}\sigma'_{i,s+1}+\dots+\sigma'_{i,s+s}=\mathcal{S}.\]

We call the combination of the number of variables $n$, the number of clauses $m$, a stairway~$\sigma$ and the number of surplus $\surplus$ a \emph{profile}. For further speed-up, we also hard-code a number of variables of degree $3$ in the same way as described in the previous part. Let $F$ be a smallest unsatisfiable $(3,4)$-formula and let $x_1,x_2,\dots,x_{\fixvar}$ be $\fixvar$ degree $3$ variables in~$F$ whose occurrences we want to fix. By \Cref{lemma:minimal-formula}, the formula is of the following form. We fix the first $3\fixvar$ rows of the matrix thereby determined.
\[\begin{psmallmatrix}
x_1 & \overline{x}_1 & \overline{x}_1\\
a_1 & a_1 & a_1\\
 & & \\
\end{psmallmatrix}\begin{psmallmatrix}
x_2 & \overline{x}_2 & \overline{x}_2\\
a_2 & a_2 & a_2\\
& & \\
\end{psmallmatrix}\dots\begin{psmallmatrix}
x_{\fixvar} & \overline{x}_{\fixvar} & \overline{x}_{\fixvar}\\
a_{\fixvar} & a_{\fixvar} & a_{\fixvar}\\
&  &\\
\end{psmallmatrix}\dots\]

\begin{lemma}
  \label{lemma:3-4-profiles}
  A smallest unsatisfiable $(3,4)$-formula (or $(3,2,2)$-formula) with $n$ variables and $m$
  clauses exists only if a formula of one of the profiles in \Cref{table:3-4-cases}, top (bottom) exists.
\end{lemma}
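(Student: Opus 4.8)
The plan is to run the reduction of this section in reverse: start from a hypothetical smallest unsatisfiable $(3,4)$-formula $F$ with $n$ variables and $m$ clauses, apply all available reductions, and show that the resulting $(\leq 3,4)$-formula must have one of the finitely many profiles listed in \Cref{table:3-4-cases}.

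First I would use the structural facts already proved to see that \emph{every} low-degree variable of $F$ sits inside one of the two reduction patterns. By \Cref{lem:at-most-one-deg-2} we may assume $F$ has at most one variable of degree $2$. By \Cref{lemma:minimal-formula}, part~\ref{case:var-deg-2}, the occurrence block of any degree-$2$ variable $y$ has the shape $\begin{psmallmatrix} y & \n y \\ a & a \\ b & b \end{psmallmatrix}$, and by part~\ref{case:var-deg-3} the occurrence block of any degree-$3$ variable $x$ has the shape $\begin{psmallmatrix} x & \n x & \n x \\ a & a & a \\ b & c & d \end{psmallmatrix}$; parts~\ref{case:disjoint-deg-3}, \ref{case:degree-3-2}, \ref{case:overlap} pin down exactly when two such blocks interact, namely the ``overlap'' configuration of part~\ref{case:overlap} with $\vdeg b = \vdeg z = 4$, which is precisely the left-hand side of reduction form~1, while a lone degree-$2$ variable whose neighbours both have degree $4$ is the left-hand side of form~2. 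Hence a maximal sequence of reductions eliminates all degree-$2$ variables (and the degree-$3$ variables involved in an overlap), yielding an unsatisfiable $(\leq 3,4)$-formula $F'$; soundness is immediate since every reduction is a sequence of DP-resolutions.

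Next I would do the bookkeeping. A form-1 reduction eliminates $\var(x),\var(y),\var(a)$ and turns four clauses into one $2$-clause, dropping $3$ clauses and $3$ variables; a form-2 reduction eliminates $\var(y)$ and turns two clauses into one $2$-clause, dropping $1$ clause and $1$ variable; in both cases the literals of the new $2$-clause have degree $4$ and are themselves never reduced, so the blocks are essentially disjoint and the accounting is additive. Writing $t_1,t_2$ for the number of reductions of each kind, $F'$ has $m-3t_1-t_2$ clauses, $n-3t_1-t_2$ variables, stairway $2^{\,t_1+t_2}$, and a surplus $\surplus$ constrained by $t_1,t_2$ together with the degree-$4$ conditions from \Cref{lemma:minimal-formula}. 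For the $(3,2,2)$ case, \Cref{cor:kst-deg3-var} rules out degree-$3$ variables and forces $m$ even, so only form-2 reductions occur and the analysis collapses to the single parameter $t_2$. Finally, intersecting these relations with the $(n,m)$-window left open by \Cref{lemma:ks-mu-var-bound} and the bounds of \Cref{sec:theory} leaves only finitely many feasible $(t_1,t_2)$, each yielding the corresponding entries of \Cref{table:3-4-cases}; conversely a formula of such a profile reconstructs, by undoing the eliminations, to a formula of the target size. The step I expect to be the main obstacle is the \emph{completeness} of the case analysis: ensuring that \Cref{lem:at-most-one-deg-2} and \Cref{lemma:minimal-formula} truly exhaust every way a degree-$2$ or degree-$3$ variable can embed into a minimal formula, so that no configuration—hence no profile—is missed, and getting the surplus count exactly right in the overlap case of part~\ref{case:overlap}, where a single reduction absorbs a degree-$3$ and a degree-$2$ variable at once.
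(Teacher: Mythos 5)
Your overall strategy matches the paper's (apply \Cref{lem:at-most-one-deg-2} and \Cref{lemma:minimal-formula}, identify the two replacement patterns, and enumerate the resulting profiles), but there is a genuine gap: you never do the counting that actually collapses the enumeration to the entries of \Cref{table:3-4-cases}. The paper's proof first observes that $F$ must contain at least one clause all of whose variables have degree $4$ --- otherwise every clause lies in the occurrence block of some variable of degree $2$ or $3$, and setting the common literals $a$ from facts~\ref{case:var-deg-2} and~\ref{case:var-deg-3} to true would satisfy $F$; this is a separate satisfiability argument, not a consequence of the replacement patterns. Combined with the pairwise disjointness of low-degree blocks (facts~\ref{case:disjoint-deg-3}, \ref{case:degree-3-2}, \ref{case:overlap}) and the occurrence count $3m = 4n - n_3 - 2n_2$, i.e.\ $n_3 = 4n-3m-2n_2$, this yields $m \geq 3n_3+1$ in the overlap case and $m \geq 3n_3 + 2n_2 + 1$ otherwise. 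These inequalities are what eliminate the remaining unsolved pairs from \Cref{table:3-4-time} (e.g.\ $n=13, m=14$, or $n=13, m=15$, where the ``$+1$'' from the all-degree-$4$ clause is needed); your relations $m'=m-3t_1-t_2$, $n'=n-3t_1-t_2$ together with \Cref{lemma:ks-mu-var-bound} do not exclude, say, $(n,m)=(13,14)$ with no reduction applicable, so your argument would produce profiles not in the table and the lemma as stated would not follow.

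A secondary issue: your bookkeeping over a ``maximal sequence'' of reductions with arbitrary $t_1,t_2$ is both unnecessary and incorrectly justified. After either replacement the literals placed in the new $2$-clause lose an occurrence (in form~2 both drop from degree $4$ to $3$; in form~1 the literal $z$ does), so it is not true that they ``have degree $4$ and are never reduced''---this is exactly why the surplus accounting exists. More to the point, each replacement consumes a variable of degree $2$, there is at most one such variable by \Cref{lem:at-most-one-deg-2}, and neither pattern creates a new one, so at most one replacement step is ever applicable ($t_1+t_2\leq 1$); this is the paper's closing remark and it makes the additivity discussion moot.
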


\begin{proof}
	Let $F$ be an unsatisfiable $(3,4)$-formula of minimal size. By~\Cref{lemma:minimal-formula}, $F$ must contain at least one clause where all variables are of degree $4$; otherwise each clause is contained in some $\occ{F}{x}$ for a variable $x$ with $\deg(x) \in \{2,3\}$, and setting the respective literals $a$ (from facts \ref{case:var-deg-2} and~\ref{case:var-deg-3}) to true satisfies $F$. Let $n_2$ be the number of variables of degree $2$ in $F$ and $n_3$ be the number of variables of degree $3$ in $F$.
	By Lemma~\ref{lem:at-most-one-deg-2}, $n_2 \in \{0, 1\}$, and counting the number of literal occurrences we get $n_3 = 4n - 3m -2n_2$.
	This means that for a given pair $(n,m)$, there are at most 3 possibilities to consider: $n_2=0$ or $n_2=1$, and if $n_2=1$, whether the clauses of the variable of degree~$2$ overlap with the clauses of some variable of degree~$3$. The following considerations further restrict the possibilities.

\begin{enumerate}
\item if there is a variable of degree $2$ and its clauses overlap with the clauses of variables of degree $3$, then by~\Cref{lemma:minimal-formula}, facts~\ref{case:disjoint-deg-3}, \ref{case:degree-3-2} and \ref{case:overlap}, each clause contains at most one variable of degree $3$ and there is at least one clause where every variable has degree $4$. Therefore, $m\geq 3n_3 + 1$;
\item otherwise, by facts~\ref{case:disjoint-deg-3} and \ref{case:degree-3-2}, each clause contains at most one variable of degree less than~$4$. Also, from the reasoning above, there has to be a clause with only variables of degree less than~$4$. This means that $m\geq 3n_3 + 2n_2 + 1$.
\end{enumerate}

This leaves a set of combinations of $(n,m)$ and $(n_2,n_3)$ for the unsolved cases from Table~\ref{table:3-4-time} that we list in Table~\ref{table:3-4-cases}, along with the profiles resulting from the variable-elimination-based replacement as described above \Cref{lem:at-most-one-deg-2}
It is readily verified that since both replacement patterns require a variable of degree~$2$, and neither creates new variables of degree~$2$, at most one replacement step is applicable.
\end{proof}

\begin{table}[tbp]
  \caption{Time spent determining the existence of a formula with each profile from reduction for
	  previously unsolved cases with $n$ variables and $m$ clauses from Table~\ref{table:3-4-time} (c.f.\ the proof of \Cref{lemma:3-4-profiles}; $n',m'$ is the number of variables and clauses after replacement, which reflects the values $n_2,n_3$ and the structure of the replacement operation); $\fixvar$ is the number of hard-coded variables of degree $3$ we chose (naturally, at most the total number of variables of degree $3$ after replacement).
	In the bottom table, the number of hard-coded variables of degree $3$ is always $0$ because by
    Corollary \ref{cor:kst-deg3-var} no variable of degree~$3$ exists.
  }
  \label{table:3-4-cases}
  \av{\medskip}
\centering
     \setlength\tabcolsep{2.5mm}
  \begin{tabular}{@{}cc@{~~}ccc@{~~}rr@{~~}ccc@{~~}c@{}}
	\multicolumn{10}{@{}l}{profiles for $(3,4)$-formulas} \\
\toprule
$n$&$m$&$n_2$&$n_3$&overlap?&~~$n'$&$m'$&$\#2$-cl&$\surplus$&$\fixvar$&time\\
\midrule
12 & 14 &1 &4&yes& 9 & 11 & 1 & 1 & 3 & 0.2s\\
\midrule
&&0 &3&-& 12 & 15 & 0 & 0 & 3 & 6.5s\\
12 & 15 &1&1&no& 11 & 14 & 1 & 0 & 1 & 2.1h\\
&&1&1&yes& 9 & 12 & 1 & 1 & 0 & 41m\\
\bottomrule
\end{tabular}\hfill%
\\[2em]
\begin{tabular}{@{}ccccccccc@{}}
	\multicolumn{9}{@{}l}{profiles for $(3,2,2)$-formulas} \\
\toprule
$n$&$m$&$n_2$&$n_3$&$n'$&$m'$& $\#2$-cl& $\surplus$&time\\
\midrule
12 & 16 &0&0&12 & 16 & 0 & 0 & 5d\\
14 & 18 &1&0&13 & 17 & 1 & 0 & 21d\\
\bottomrule
\end{tabular}
\end{table}

We tested the existence of each of the profiles from Table~\ref{table:3-4-cases} with SMS and all of them returned negative. Combining this computational result with Lemma~\ref{lemma:3-4-profiles}, we obtain the following theorem.

\begin{theorem}
$\mu(3,4)> 15$ and  $\mu(3,2,2)>18$.
\end{theorem}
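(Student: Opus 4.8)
The plan is to assemble the theorem from three ingredients already in hand: the exhaustive SMS runs recorded in \Cref{table:3-4-time}, which settle all small unsatisfiable $(3,4)$-formulas directly; the structural reduction of \Cref{lemma:3-4-profiles}, which (using the variable-count bounds of \Cref{lemma:ks-mu-var-bound}) shows that a smallest unsatisfiable $(3,4)$- or $(3,2,2)$-formula of a given size must realize one of the finitely many profiles listed in \Cref{table:3-4-cases}; and the additional SMS queries reported in that table, which found each of those profiles to be unrealizable.

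\textbf{The $(3,4)$ bound.} First I would note that \Cref{table:3-4-time} settles negatively every admissible pair $(n,m)$ with $m\le 13$ (the admissible range of $n$ for each $m$ coming from \Cref{lemma:ks-mu-var-bound}), as well as $(n,m)=(11,14)$. Since a smallest unsatisfiable formula is minimally unsatisfiable, it remains to exclude $m\in\{14,15\}$ with $n$ in the residual range. Here \Cref{lemma:3-4-profiles} applies: such a formula would match one of the profiles in the top part of \Cref{table:3-4-cases}, every other $(n,m,n_2,n_3)$-combination being eliminated by the counting and overlap inequalities in that lemma's proof (for instance $(n,m)=(13,14)$ and $(n,m)\in\{(13,15),(14,15)\}$ would force more degree-$3$ variables than $m$ clauses can accommodate). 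As each listed profile was reported unrealizable by SMS, no unsatisfiable $(3,4)$-formula with at most $15$ clauses exists, i.e.\ $\mu(3,4)>15$.

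\textbf{The $(3,2,2)$ bound.} I would bootstrap from the first part: every $(3,2,2)$-formula is a $(3,4)$-formula, so $\mu(3,2,2)\ge\mu(3,4)>15$, and by \Cref{cor:kst-deg3-var} a smallest $(3,2,2)$-formula has even size, hence $\mu(3,2,2)\ge 16$. It then suffices to rule out sizes $16$ and $18$. Using \Cref{cor:kst-deg3-var} again there are no degree-$3$ variables, so every variable has type $(1,1)$ or $(2,2)$; counting literal occurrences together with \Cref{lem:at-most-one-deg-2} forces $n=12$ for $m=16$ and $n=14$ for $m=18$, matching the single residual profile at the bottom of \Cref{table:3-4-cases} in each case, and both SMS queries were negative. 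Hence $\mu(3,2,2)>18$.

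\textbf{Where the work sits.} The substantive difficulty lives entirely in \Cref{lemma:3-4-profiles} (the variable-elimination reductions plus the case analysis) and in the SMS computations behind \Cref{table:3-4-time} and \Cref{table:3-4-cases}, both of which we may take as given; the theorem itself is a bookkeeping step. The one point that genuinely needs care is \emph{coverage}: one must verify that every admissible $(n,m)$ with $m\le 15$ (respectively $m\le 18$ for $(3,2,2)$, a range kept small precisely by the evenness from \Cref{cor:kst-deg3-var}) is either closed in \Cref{table:3-4-time}, eliminated outright by the counting inequalities of \Cref{lemma:3-4-profiles}, or appears as a profile in \Cref{table:3-4-cases} whose SMS query came back negative — otherwise a case could slip through unnoticed.
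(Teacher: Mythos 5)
Your proposal is correct and follows essentially the same route as the paper: the theorem is obtained by combining \Cref{lemma:3-4-profiles} (with the counting/overlap constraints eliminating all residual $(n,m,n_2,n_3)$ combinations except those listed) with the negative SMS results of \Cref{table:3-4-time} and \Cref{table:3-4-cases}, and for $(3,2,2)$ by the containment in $(3,4)$-formulas, the evenness and degree restrictions of \Cref{cor:kst-deg3-var} and \Cref{lem:at-most-one-deg-2}, and the two remaining profiles $(12,16)$ and $(14,18)$. Your explicit bookkeeping of the $(3,2,2)$ bootstrap and of which $(n,m)$ pairs are closed where merely spells out what the paper leaves implicit.
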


\section{A smaller unsatisfiable \texorpdfstring{$(4,5)$}{(4,5)}-formula}
\label{sec:smaller_4_5}
We now turn our attention to unsatisfiable CNF formulas with clauses
of length $4$.  As mentioned in the introduction, $f(4)=4$, i.e.,
$s=5$ is the smallest value so that an unsatisfiable $(4,s)$\hy
formula exists.  We found an unsatisfiable $(4,s)$\hy formula with 235
clauses, improving upon the formulas provided by St\v{r}\'{\i}brn\'{a}
\cite{Stribrna94} (449 clauses) and Knuth \cite[p.~588]{Knuth23} (257
clauses).
Knuth's formula $K_{4,5}$ can be described by
the $\oplus$\hy derivation from the axiom $\{\nil\}$ depicted as a directed acyclic graph in Figure~\ref{fig:K_4_5_derivation}.

\begin{figure}[h]
  \centering
  \tikzstyle{tightcircle}=[circle,draw=black,inner sep=0.75pt]

  \av{	  \begin{tikzpicture}[scale=1.2]}
  \cv{	  \begin{tikzpicture}}
    
		  \node[draw=black] (0) at (0, 1) { };

		  \node[tightcircle] (1) at (1, 1) {$F_1$};
		  \node[tightcircle] (2) at (2, 1) {$F_2$};
		  \node[tightcircle] (3) at (2, 0) {$F_3$};
		  \node[tightcircle] (4) at (3, 0) {$F_4$};
		  \node[tightcircle] (5) at (3, 1) {$F_5$};
		  \node[tightcircle] (6) at (3, 2) {$F_6$};
		  \node[tightcircle] (7) at (4, 2) {$F_7$};
		  \node[tightcircle] (8) at (4, 1) {$F_8$};
		  \node[tightcircle] (9) at (4, 0) {$F_9$};

		  \newcommand{\dy}{4}
		  \node[tightcircle] (10) at (5, \dy-3) {$F_{10}$};
		  \node[tightcircle] (11) at (6, \dy-2) {$F_{11}$};
		  \node[tightcircle] (12) at (7, \dy-2) {$F_{12}$};

		  \node[tightcircle] (13) at (7, \dy-4) {$F_{13}$};
		  \node[tightcircle] (14) at (8, \dy-3) {$F_{14}$};
		  \node[tightcircle] (15) at (9, \dy-2) {$F_{15}$};

		  \node[tightcircle] (16) at (9, \dy-4) {$F_{16}$};
		  \node[tightcircle] (17) at (9.5, \dy-3) {$F_{17}$};

		  \node[circle,draw=black,inner sep=0.1pt] (K)  at (11, \dy-3) {$K_{4,5}$};

		  \draw[->>] (1) -- (2);
		  \draw[->>>] (2) -- (3);
		  \draw[->>>>] (3) -- (4);
		  \draw[->>>] (4) -- (5);
		  \draw[->>>>] (5) -- (6);
		  \draw[->>>>] (6) -- (7);
		  \draw[->>] (7) -- (8);
		  \draw[->>>>] (8) -- (9);

		  \draw[->>>] (9) -- (10);
		  \draw[->>>] (10) -- (11);
		  \draw[->>>>] (11) -- (12);

		  \draw[->>>] (9) -- (13);
		  \draw[->>>] (10) -- (14);

		  \draw[->>] (12) -- (13);
		  \draw[->>] (12) -- (14);
		  \draw[->>] (12) -- (15);

		  \draw[->>>] (14) -- (15);

		  \draw[->>] (13) -- (16);
		  \draw[->>>] (14) -- (16);
		  \draw[->>>] (14) -- (17);
		  \draw[->>] (15) -- (17);

		  \draw[->>>] (16) -- (K);
		  \draw[->>] (17) -- (K);

		  \newcommand{\rot}{90}
		  \draw[->>] (1) to[out=\rot-45,in=\rot+45] (5);
		  \draw[->>>] (4) -- (8);
		  \draw[->>] (7) -- (10);
		  \draw[->>] (7) -- (11);
    
		  \draw[->] (0) to[out=\rot-75 ,in=\rot+75] (1);
		  \draw[->] (0) to[out=\rot-105,in=\rot+105] (1);
		  \draw[->] (0) to[out=\rot-120,in=\rot+120] (2);
		  \draw[->] (0) to[out=\rot-135,in=\rot+90] (3);
		  \draw[->] (0) to[out=\rot-150,in=\rot+120] (4);
		  \draw[->] (0) to[out=\rot-60 ,in=\rot+90] (6);
		  \draw[->] (0) to[out=\rot-45 ,in=\rot+60] (7);
		  \draw[->] (0) to[out=-75,in=-150] (9);
		  \draw[->] (0) to[out=60,in=150] (12);
	  \end{tikzpicture}
	  \caption{The $\oplus$\hy derivation of Knuth's formula $K_{4,5}$.
		  Each node has two incoming arcs; the number of arrowheads denotes the values $p,q$ in $\oplus_{p,q}$.
	  }
	  \label{fig:K_4_5_derivation}
  \end{figure}
This formula has 257 clauses and 256 variables. However, 22 of these variables occur twice (introduced in $F_1$) and 31 variables occur in three clauses (introduced in $F_2$). Hence we can identify each of the variables with 2 occurrences with one variable with 3 occurrences, still keeping the number of occurrences of the new identified variable within the bound 5. By this method, we can save 22 variables, and indeed, Knuth states his formula to have $256-22=234$ variables. 
With our modified saturation algorithm we could show that $K_{4,5}$ is a smallest unsatisfiable $(4,5)$\hy formula
in $\MU(1)$. Thus, for finding a smaller formula, one needs to search
outside $\MU(1)$, and outside the class of formulas that can be obtained from an $\MU(1)$ formula by identifying pairs of low-occurrence variables.

Finding an entire unsatisfiable $(4,5)$\hy formula with SMS does not
seem feasible. However, we can search for an unsatisfiable $(\leq
4,5)$\hy formula $F$ that represents a stairway $\sigma$ with fewer
clauses than any formula in $\MU(1)$ that represents the same stairway
$\sigma$. Thus way we can use $F$ as an additional axiom in
$\oplus$\hy derivations and this way possibly find a smaller unsatisfiable $(4,5)$\hy formula.
We considered all stairways  $\sigma\in \{(3)$, $(2)$, $(3,2)$, $(2,2)$, $(2,2,2)$, $(1)$, $(3,1)$, $(2,1)$,
$(2,2,1)$, $(1,1)$, $(3,1,1)$, $(2,1,1)$, $(2,2,1,1)$, $(1,1,1)$,
$(3,1,1,1)$, $(2,1,1,1)$, $(1,1,1,1)\}$ and run our SMS encoding to
find an unsatisfiable  formula $F$ with at most 13 clauses and fewer clauses 
than a smallest $\MU(1)$\hy formula for $\sigma$.

The search resulted in two such formulas within a timeout of five
days. One formula has $7$ clauses for the stairway $(3,1,1,1)$ and the
other has $8$ clauses for the stairway $(3,2)$; shortest $\MU(1)$
formulas for these stairways have $8$ and $9$ clauses,
respectively. Using the first of these two formulas as axiom indeed
reduces the size of the unsatisfiable $(4,5)$\hy formula from 257 to
235, since the axiom is used several times; the second formula can be
derived by an $\oplus$ operation from the new axiom and axiom
$\{\nil\}$.  Our smaller unsatisfiable $(4,5)$\hy formula can be obtained by replacing $F_6$
with $F_6'$
in the $\oplus$\hy derivation from Figure~\ref{fig:K_4_5_derivation},
and using $F_6'$ as an additional axiom, where the two formulas are as follows (rows are variables, columns are clauses, \plus/\minus\ indicates positive/negative occurrence; c.f.\ the appendix).
\renewcommand\arraystretch{0.4}
\newcommand{\cwidth}{0.257cm}
\setlength{\tabcolsep}{1.7pt}
\[F_6'=\left\lgroup\begin{tabular}{p{\cwidth}p{\cwidth}p{\cwidth}p{\cwidth}p{\cwidth}p{\cwidth}p{\cwidth}}
\minus & & & & & &\plus \\
&\minus &\minus &\minus &\minus &\plus & \\
\minus & &\minus &\plus &\plus & &\minus \\
\minus &\minus &\plus & &\plus & &\minus \\
\minus &\plus & &\minus &\plus & &\minus \\
\end{tabular}\right\rgroup
\quad
F_6=\left\lgroup\begin{tabular}{p{\cwidth}p{\cwidth}p{\cwidth}p{\cwidth}p{\cwidth}p{\cwidth}p{\cwidth}p{\cwidth}}
	   &       &       &\minus &\minus & \minus &\minus & \plus \\
       &       &\minus &\minus &\minus & \plus  &\plus  &       \\
\minus &\minus &\minus &\minus &\plus  &        &       &       \\
\minus &\minus &\minus &\plus  &       &        &       &       \\
\minus &\minus &\plus  &       &       &        &       &       \\
\minus &\plus  &       &       &       & \minus &\plus  &       \\
\end{tabular}\right\rgroup
\]

\section{Conclusion}
We have identified 
the smallest unsatisfiable $(3, s)$ and $(3,p,q)$\hy formulas for a
comprehensive range of values, and brought an improvement in the known
minimal size for an unsatisfiable  $(4, 5)$-CNF formula. 
Our work also contributed a uniform proof of the dichotomy for $(3,p,q)$-SAT.
The core methodology, a fusion of theoretical insights and an innovative application of the SMS framework with the methods of disjunctive splitting and reductions has not only led to the discovery of new smallest unsatisfiable formulas but also demonstrated the practical utility of SMS in exploring the combinatorial landscape of CNF formulas.

Our findings have
illuminated several challenging and open avenues for future work. For
instance, extending the scope to identify smallest unsatisfiable
formulas for $k \geq 4$ remains a significant challenge. While the
methods developed here provide a solid foundation, both novel
techniques and theoretical advances are necessary to tackle the
increased complexity of larger $k$ values. An extension of our methods
may lead to determining the exact value of the threshold $f(5)$,
currently only known to be in the interval $[5,7]$. Moreover, the
interplay between the size of unsatisfiable formulas and their
implications for inapproximability results in MaxSAT problems~\cite{BermanKarpinskiScott03}  invites
deeper investigation. Finally, since we found out that, for all $q\geq 3$, the size of the
smallest unsatisfiable $(3,1,q)$-formula coincide with that of the smallest unsatisfiable $(3,1,q)$-formula in $\MU(1)$,
we conjecture that this is true for all $k\geq 3$ and $q\geq f(k)+1$.

\av{%
\section*{Acknowledgements}
\nopagebreak
\begin{minipage}{.88\linewidth}
 The research leading to this publication has received funding from
 the European Union's Horizon 2020 research and innovation programme
 under grant agreement No.~101034440, and was supported by the
 Austrian Science Fund (FWF) within the project \fwf{P36688}.
\end{minipage}%
\begin{minipage}[t]{.12\linewidth}
  \hfill\includegraphics[width=15mm]{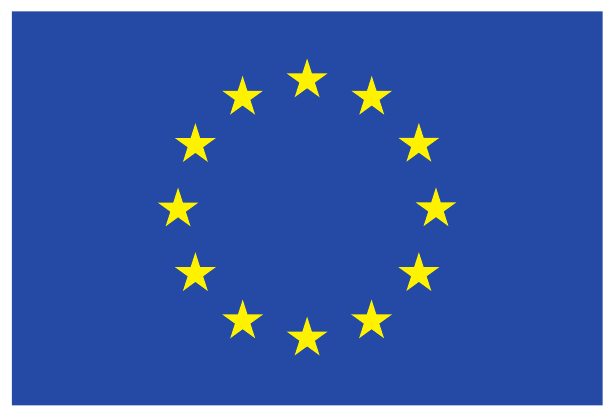}
\end{minipage}
}

\appendix

\section{Formulas}
In the appendix we list formulas in tabular form, where rows represent variables, columns represent clauses, and the \plus\ and \minus\ signs indicate positive and negative occurrence, respectively.
Blank cells mean the variable does not occur in the clause.
\subsection{Smallest enforcers}
\label{appendix:enforcers}
In the following, we give examples of smallest $(3,s)$\hy enforcers $E_{3,s}$ and smallest $(3,p,q)$\hy enforcers $E_{3,p,q}$ for all entries in Table \ref{tab:3-enf}. Note that $E_{3,4}$ also acts as a smallest $(3,2,3)$-enforcer.

\vspace{1em}
\noindent
\;
$E_{3,5}= \left\lgroup\begin{tabular}{@{}p{\cwidth}p{\cwidth}p{\cwidth}p{\cwidth}}
\minus &\minus &\minus &\minus \\
\minus &\minus &\plus  &\plus  \\
\minus &\plus  &\minus &\plus  \\
\end{tabular}\right\rgroup$
$\quad\quad\quad\,
E_{3,4}= \left\lgroup\begin{tabular}{@{}p{\cwidth}p{\cwidth}p{\cwidth}p{\cwidth}p{\cwidth}}
       & \minus &\minus &\minus &       \\
\minus &        &\minus &\plus  & \plus \\
\minus &\minus  &\plus  &       & \plus \\
\minus &\plus   &       &\minus & \plus \\
\end{tabular}\right\rgroup$
\quad\quad\,
$E_{3,1,5}= \left\lgroup\begin{tabular}{@{}p{\cwidth}p{\cwidth}p{\cwidth}p{\cwidth}p{\cwidth}}
\minus & \minus & \minus & \minus & \minus \\
\minus &        &        & \minus & \plus  \\
       & \minus & \minus &        & \plus  \\
\minus &        &        & \plus  &        \\
       & \minus & \plus  &        &        \\
\end{tabular}\right\rgroup$
\\[3pt]
$E_{3,1,4}= \left\lgroup\begin{tabular}{@{}p{\cwidth}p{\cwidth}p{\cwidth}p{\cwidth}p{\cwidth}p{\cwidth}}
\minus &        &        &         & \minus & \plus \\
       & \minus & \minus & \minus  &        & \plus  \\
\minus &        &        & \minus  & \minus &\minus  \\
\minus &        &        &         & \plus  &        \\
       & \minus & \minus & \plus  &         &        \\
       & \minus & \plus  &        &         &        \\
\end{tabular}\right\rgroup$
\,
$E_{3,1,3}= \left\lgroup\begin{tabular}{@{}p{\cwidth}p{\cwidth}p{\cwidth}p{\cwidth}p{\cwidth}p{\cwidth}p{\cwidth}}
       &        &        &        & \plus  & \plus  & \plus \\
\minus &        &        & \minus &        & \minus & \plus \\
       & \minus & \minus &        & \minus &        & \plus \\
\minus &        &        & \minus &        & \plus  &       \\
       & \minus & \minus &        & \plus  &        &       \\
\minus &        &        & \plus  &        &        &       \\
       & \minus & \plus  &        &        &        &       \\
\end{tabular}\right\rgroup$
\,
$E_{3,2,2}= \left\lgroup\begin{tabular}{@{}p{\cwidth}p{\cwidth}p{\cwidth}p{\cwidth}p{\cwidth}p{\cwidth}p{\cwidth}p{\cwidth}p{\cwidth}p{\cwidth}}
\minus &        & \minus &        &        &        &        &        &        &        \\
       & \minus &        &        &        & \minus &        &        & \plus  & \plus  \\
       &        &        &        & \minus &        & \minus & \plus  &        & \plus  \\
\minus &        &        & \minus &        &        & \plus  &        &        & \plus  \\
       &        &        &        & \minus & \plus  &        & \minus & \plus  &        \\
       & \minus & \minus &        &        & \plus  &        & \plus  &        &        \\
       &        &        & \minus & \plus  &        & \minus &        & \plus  &        \\
\minus & \minus & \plus  & \plus  &        &        &        &        &        &        \\
\end{tabular}\right\rgroup$

\subsection{Smallest MU(1) formulas}
\setlength{\tabcolsep}{4.5pt}
In the following, we give $\oplus$\hy derivations of the smallest $(3,s)$\hy formula $M^1_{3,s}$
and the smallest $(3,p,q)$\hy formula $M^1_{3,p,q}$ restricted to
$\MU(1)$ for each entry from Table~\ref{table:smallest-MU-1}. For all derivations, we have \drule[1,1]{1}{0}{0}, \drule[1,2]{2}{0}{1} and \drule[2,2]{2'}{1}{1}; other 
$F_i$ symbols are defined locally in each derivation.

\begin{longtable}
{@{} l l l l l@{}}
\toprule
\Mdrule[2,2]{3,4}{4}{4} &\drule[1,3]{3}{0}{2} &\drule[2,2]{4}{3}{3}\\ 
\midrule
\Mdrule[3,2]{3,5}{4}{5} &\drule[1,3]{3}{0}{2} &\drule[2,3]{4}{1}{2} &\drule[3,2]{5}{2}{3}\\
\midrule
\Mdrule[4,2]{3,6}{2'}{3} &\drule[2,4]{3}{1}{2'}\\
\midrule
\Mdrule[4,3]{3,7}{2'}{3} &\drule[2,3]{3}{1}{2}\\
\midrule
\Mdrule[4,4]{3,8}{2'}{2'}\\
\midrule
\Mdrule[1,3]{3,1,3}{6}{4} &\drule[1,3]{3}{0}{2} &\drule[1,3]{4}{3}{2} &\drule[1,3]{5}{0}{4} &\drule[1,3]{6}{5}{4}\\
\midrule
\Mdrule[1,3]{3,1,4}{5}{3} &\drule[1,3]{3}{2}{2} &\drule[1,4]{4}{0}{3} &\drule[1,3]{5}{4}{3}\\
\midrule
\Mdrule[1,5]{3,1,5}{5}{3} &\drule[1,3]{3}{1}{2} &\drule[1,5]{4}{0}{3}&\drule[1,5]{5}{4}{3}\\
\midrule
\Mdrule[2,3]{3,2,3}{5}{4} &\drule[1,3]{3}{0}{2} &\drule[2,3]{4}{1}{2}&\drule[2,3]{5}{3}{2}\\
\midrule
\Mdrule[2,4]{3,2,4}{3}{2'} &\drule[2,4]{3}{1}{2'}\\
\midrule
\Mdrule[3,3]{3,3,3}{3}{3} &\drule[2,3]{3}{1}{2}\\
\midrule
\Mdrule[3,4]{3,3,4}{3}{2'} &\drule[2,3]{3}{1}{2}\\
\midrule
\Mdrule[4,4]{3,4,4}{2'}{2'}\\
\bottomrule
\end{longtable}

\subsection{Smallest \texorpdfstring{$(3,s)$}{(3,s)} and \texorpdfstring{$(3,p,q)$}{(3,p,q)}-formulas}
In the following, we give examples of smallest $(3,s)$\hy formulas $M_{3,s}$ and smallest $(3,p,q)$\hy formulas $M_{3,p,q}$ for all entries in Table \ref{tab:3-s}. We take $M_{3,4}=M^1_{3,4}$ and $M_{3,1,q}=M^1_{3,1,q}$ for all applicable $q$.

\noindent
\setlength{\tabcolsep}{1.7pt}
$M_{3,6}=M_{3,3,3}=\left\lgroup\begin{tabular}{@{}p{\cwidth}p{\cwidth}p{\cwidth}p{\cwidth}p{\cwidth}p{\cwidth}p{\cwidth}p{\cwidth}}
&\minus & &\minus &\minus &\plus &\plus &\plus \\
\minus& &\minus & &\plus &\minus &\plus &\plus \\
\minus &\minus &\plus &\plus & &\minus & & \plus\\
\minus &\plus &\minus &\plus &\minus &\plus & &\\
\end{tabular}\right\rgroup$
\;\,
$M_{3,5}=M_{3,2,3}=\left\lgroup\begin{tabular}{@{}p{\cwidth}p{\cwidth}p{\cwidth}p{\cwidth}p{\cwidth}p{\cwidth}p{\cwidth}p{\cwidth}p{\cwidth}p{\cwidth}p{\cwidth}}
& & \minus& &\minus & & &\minus & &\plus &\plus \\
&\minus & &\minus & & &\minus & &\plus & &\plus \\
\minus& & & & &\minus & &\plus &\minus & &\plus \\
\minus &\minus & &\minus & & & \plus & & &\plus & \\
\minus & & & & &\minus &\plus & & \plus &\minus & \\
& &\minus & &\minus &\plus & &\plus & & & \\
&\minus &\minus & \plus&\plus & & & & & & \\
\end{tabular}\right\rgroup$
\\[1em]
\begin{tabular}{r c}
&
\multirow{3}{*}{$\left\lgroup\begin{tabular}{@{}p{\cwidth}p{\cwidth}p{\cwidth}p{\cwidth}p{\cwidth}p{\cwidth}p{\cwidth}p{\cwidth}p{\cwidth}p{\cwidth}p{\cwidth}p{\cwidth}p{\cwidth}p{\cwidth}p{\cwidth}p{\cwidth}p{\cwidth}p{\cwidth}p{\cwidth}p{\cwidth}}
\plus & & & & &\plus & & & & &\minus & & & & &\minus & & & & \\
&\plus &\plus & & & &\minus &\minus & & & & & & & & & & & & \\
& & & & & & & & & & &\plus &\plus & & & &\minus &\minus & & \\
& & & & & & & & & & & & & & &\minus & &\plus &\plus &\minus \\
& & & & & & & & & & & & & & &\plus &\minus & &\plus &\minus \\
& & & & & & & & & & & & & & & &\plus &\minus &\plus &\minus \\
& & & & & & & & & &\minus & &\plus &\plus &\minus & & & & & \\
& & & & & & & & & &\plus &\minus & &\plus &\minus & & & & & \\
& & & & & & & & & & &\plus &\minus &\plus &\minus & & & & & \\
& & & & &\minus & &\plus &\plus &\minus & & & & & & & & & & \\
& & & & &\plus &\minus & &\plus &\minus & & & & & & & & & & \\
& & & & & &\plus &\minus &\plus &\minus & & & & & & & & & & \\
\minus & &\plus &\plus &\minus & & & & & & & & & & & & & & & \\
\plus &\minus & &\plus &\minus & & & & & & & & & & & & & & & \\
&\plus &\minus &\plus &\minus & & & & & & & & & & & & & & & \\
\end{tabular}\right\rgroup$}\\
$M_{3,2,4}=\left\lgroup\begin{tabular}{@{}p{\cwidth}p{\cwidth}p{\cwidth}p{\cwidth}p{\cwidth}p{\cwidth}p{\cwidth}p{\cwidth}p{\cwidth}p{\cwidth}}
& & & &\minus & & & & &\plus \\
& & &\minus & & & & &\plus & \\
& &\minus & & & & &\plus & & \\
&\minus & & & & &\plus & & & \\
\minus & & & & &\plus & & & & \\
\minus & & &\minus &\plus &\minus & & &\minus &\plus \\
&\minus &\minus & &\plus & &\minus &\minus & &\plus \\
\minus & & &\plus & &\minus & & &\plus & \\
&\minus &\plus & & & &\minus &\plus & & \\
\end{tabular}\right\rgroup$\\
	\\[1.75em]
	\hfill $M_{3,2,2}=$ & \\
\end{tabular}
\\[1em]
%\bibliography{literature}

\end{document}